\DeclareMathAlphabet{\mathbbold}{U}{bbold}{m}{n}
\newcommand{\MATLAB}{\textsc{Matlab}\xspace}
\newcommand{\ra}[1]{\renewcommand{\arraystretch}{#1}}
\newcommand{\mtiny}[1]{{\scalebox{.75}{#1}}}
\newcommand{\smtiny}[1]{{\scalebox{.63}{#1}}}
\newcommand{\stiny}[1]{{\scalebox{.5}{#1}}}
\newcommand{\supscrpsm}[1]{{\smtiny{$\mathrm{(}#1\mathrm{)}$}}}
\newcommand{\argmin}{{\mathrm{argmin}}}
\newcommand*{\minOp}{\operatornamewithlimits{min}\limits}
\newcommand*{\sumOp}{\operatornamewithlimits{\sum}\limits}
\newcommand*{\limOp}{\operatornamewithlimits{lim}\limits}
\newcommand{\tr}{{\smtiny{$\mathsf{T}$ }}\!}
\newcommand{\one}{\mathbf{1}}
\newcommand{\vc}[1]{{ \mathrm{#1} }}
\newcommand{\mx}[1]{{ \mathrm{#1} }}
\newcommand{\drm}{\mathrm{d}}
\newcommand{\linspan}{\mathrm{span}} % span
\newcommand{\inner}[2]{{ \langle {#1,#2} \rangle}}
\newcommand{\nth}{{\text{\tiny{th}}}}
\newcommand{\ellone}{\ell^{1}}
\newcommand{\ellinfty}{\ell^{\infty}}
\newcommand{\Lone}{L^{1}}
\newcommand{\Linfty}{L^{\infty}}
\newcommand{\Lp}{L^{p}}
\newcommand{\Lscrone}{\Lscr^{1}}
\newcommand{\Lscrinfty}{\Lscr^{\infty}}
\newcommand{\Lscrp}{\Lscr^{p}}
\newcommand{\Dscr}{{\mathscr{D}}}
\newcommand{\Gscr}{{\mathscr{G}}}
\newcommand{\Hscr}{{\mathscr{H}}}
\newcommand{\Lscr}{{\mathscr{L}}}
\newcommand{\Tscr}{{\mathscr{T}}}
\newcommand{\Vscr}{{\mathscr{V}}}
\newcommand{\Wscr}{{\mathscr{W}}}
\newcommand{\Xscr}{{\mathscr{X}}}
\newcommand{\Ccal}{{\mathcal{C}}}
\newcommand{\Ecal}{{\mathcal{E}}}
\newcommand{\Hcal}{{\mathcal{H}}}
\newcommand{\Ical}{{\mathcal{I}}}
\newcommand{\Jcal}{{\mathcal{J}}}
\newcommand{\Rcal}{{\mathcal{R}}}
\newcommand{\Scal}{{\mathcal{S}}}
\newcommand{\Ycal}{{\mathcal{Y}}}
\newcommand{\Nbb}{{\mathbb{N}}}
\newcommand{\Rbb}{{\mathbb{R}}}
\newcommand{\Tbb}{{\mathbb{T}}}
\newcommand{\Xbb}{{\mathbb{X}}}
\newcommand{\Zbb}{{\mathbb{Z}}}
\newtheorem{theorem}{Theorem}
\newtheorem*{theorem*}{Theorem}
\newtheorem{definition}{Definition}
\newtheorem*{definition*}{Definition}
\newtheorem{assumption}{Assumption}
\newtheorem{corollary}[theorem]{Corollary}
\newtheorem{lemma}[theorem]{Lemma}
\newtheorem{remark}{Remark}
\newtheorem{example}{Example}
\newtheorem*{example*}{Example}
\newtheorem*{claim*}{Claim}
\newtheorem{problem}{Problem}
\newtheorem*{problem*}{Problem}
\newcommand\xqed[1]{%
	\leavevmode\unskip\penalty9999 \hbox{}\nobreak\hfill
	\quad\hbox{#1}}
\newlist{todolist}{itemize}{2}
\setlist[todolist]{label=$\square$}
\newcommand{\dc}{\ell_{\text{\mtiny{$0$}}}}
\newcommand{\udelta}{\underline{\delta}}
\newcommand{\odelta}{\overline{\delta}}
\newcommand{\rth}{{\smtiny{($r$)}}}
\newcommand{\rthn}{{\smtiny{($r_n$)}}}
\newcommand{\rthm}{{\smtiny{($r_m$)}}}
\newcommand{\nD}{n_{\stiny{$\!\Dscr$}}}
\newcommand{\nS}{n_{\text{s}}}
\newcommand{\nI}{n_{\stiny{$\Ical$}}}
\newcommand{\Lu}[1]{\mx{L}^{\!\vc{u}}_{#1}}
\newcommand{\gS}{\vc{g}^{\stiny{$(\Scal)$}}}
\newcommand{\gtS}{{g}^{\stiny{$(\Scal)$}}}
\newcommand{\GS}{G^{\stiny{$(\Scal)$}} }
\newcommand{\gstar}{\vc{g}^{\star}}
\newcommand{\xstar}{\vc{x}^{\star}}
\newcommand{\xstart}[1]{x_{#1}^{\star}}
\newcommand{\sS}{\vc{s}^{\stiny{$(\Scal)$}}}
\newcommand{\vcf}{\vc{f}}
\newcommand{\vcg}{\vc{g}}
\newcommand{\vch}{\vc{h}}
\newcommand{\vcs}{\vc{s}}
\newcommand{\vcu}{\vc{u}}
\newcommand{\vcv}{\vc{v}}
\newcommand{\vcx}{\vc{x}}
\newcommand{\vcy}{\vc{y}}
\newcommand{\mxA}{\mx{A}}
\newcommand{\vca}{\vc{a}}
\newcommand{\Hilbert}{\Hscr}
\newcommand{\kernel}{\mathds{k}}
\newcommand{\Hk}{\Hscr_{\kernel}}
\newcommand{\TC}{\text{\mtiny{$\mathrm{TC}$}}}
\newcommand{\DC}{\text{\mtiny{$\mathrm{DC}$}}}
\renewcommand{\SS}{\text{\mtiny{$\mathrm{SS}$}}}
\newcommand{\kernelTC}{\kernel_{\TC}}
\newcommand{\kernelDC}{\kernel_{\DC}}
\newcommand{\kernelSS}{\kernel_{\text{\mtiny{$\mathrm{SS}$}}}}
\newcommand{\psiTC}{\psi_{\TC}}
\newcommand{\psiDC}{\psi_{\DC}}
\newcommand{\psiSS}{\psi_{\text{\mtiny{$\mathrm{SS}$}}}}
\newcommand{\nuTC}{\nu_{\TC}}
\newcommand{\nuDC}{\nu_{\DC}}
\newcommand{\nuSS}{\nu_{\text{\mtiny{$\mathrm{SS}$}}}}
\newcommand{\barnu}{\bar{\nu}}
\newcommand{\barnuTC}{\bar{\nu}_{\TC}}
\newcommand{\barnuDC}{\bar{\nu}_{\DC}}
\newcommand{\barnuSS}{\bar{\nu}_{\text{\mtiny{$\mathrm{SS}$}}}}
\newcommand{\phiu}[1]{\varphi_{#1}^{\text{\rm{(u)}}}}
\title{\LARGE Kernel-based Impulse Response Identification with Side-Information on Steady-State Gain}
\title{\LARGE Kernel-based Impulse Response Identification with Side-Information on Steady-State Gain}
\author{M.~Khosravi and R.~S.~Smith
	\thanks{Mohammad Khosravi is with  Delft Center for Systems and Control, Delft University of Technology,
		Delft, The Netherlands (email: Mohammad.khosravi@tudelft.nl).}
	\thanks{Roy~S.~Smith is with with Automatic Control Laboratory, ETH Z\"urich, Switzerland (email: rsmith@control.ee.ethz.ch).}
}
\begin{document}
\maketitle
%________________________________________
\begin{abstract} %
In this paper, we consider the problem of system identification when side-information is available on the steady-state (or DC) gain of the system. We formulate a general nonparametric identification method as an infinite-dimensional constrained convex program over the reproducing kernel Hilbert space (RKHS) of stable impulse responses. The objective function of this optimization problem is the empirical loss regularized with the norm of RKHS, and the constraint is considered for enforcing the integration of the steady-state gain side-information. The proposed formulation addresses both the discrete-time and continuous-time cases. We show that this program has a unique solution obtained by solving an equivalent finite-dimensional convex optimization. This solution has a closed-form when the empirical loss and regularization functions are quadratic and exact side-information is considered. We perform extensive numerical comparisons to verify the efficiency of the proposed identification methodology.	
%%
%In this paper, we consider the problem of system identification when side-information is available on the steady-state  gain of the system, which is also known as the DC-gain. We formulate a general nonparametric identification method as an infinite-dimensional constrained convex program over the reproducing kernel Hilbert space (RKHS) of stable impulse responses. The objective function of this optimization problem is the empirical loss regularized with the norm of RKHS, and the constraint is considered for enforcing the integration of the steady-state gain side-information. The proposed formulation allows addressing both discrete-time and continuous-time cases. We show that this program has a unique solution obtained by solving an equivalent finite-dimensional convex optimization. This solution has a closed-form when the employed empirical loss and regularization functions are quadratic and exact side-information is considered. We perform extensive numerical experiments to verify the efficiency of the proposed identification methodology.	
\end{abstract}

\begin{IEEEkeywords}
Kernel-based identification, side-information, steady-state gain.
\end{IEEEkeywords}
\section{Introduction}\label{sec:introduction}
% % ================================================= 
% % From system identification to side information, finishing with nonlinear system identification?
System identification is a well-established research area on the theory and techniques of creating appropriate mathematical abstractions for the dynamical systems using their measurement data \cite{zadeh1956identification}. According to the importance and numerous applications of system identification in different fields of science and technology, it has received a significant deal of attention \cite{LjungBooK2}. In various situations, identifying a dynamical system can be beyond a mere model fitting to the data, and additionally, we may  need to include particular known features and attributes of the system into the model. More precisely, together with the measurement data, we might be provided with certain so-called side-information, which is indeed a specific qualitative or quantitative knowledge to be incorporated in the identified model of the system. This side-information can originate from various sources, e.g., a general understanding of the intrinsic physical nature of the system, or from the observed behaviors in experimental or historical data \cite{johansen1998constrained}.
Integrating side-information  can improve the identification performance by rejecting spurious model candidates, which are common when the measurement data is scarce, highly noise-contaminated, or generated by insufficient excitation \cite{trnka2009subspace}.

% % ================================================= 
% % Side information in linear system identification? 
Various forms of side-information such as stability, dissipativity, region of attraction, and many others are considered in identifying
nonlinear dynamical systems \cite{umenberger2018specialized,khosravi2021ROA,ahmadi2020learning,khosravi2021grad,khosravi2022Koopman}. On the other hand, the key role of linear systems in practice has led to increased research on how to integrate different sorts of side-information in their identification \cite{lyzell2009handling}. For example, the incorporation of structural side-information, that is the available knowledge on the configuration and types of the subsystems, has been studied by imposing specific model structural constraints such as being circulant \cite{massioni2008subspace}.
%, cascade \cite{hagg2010subspace}, and parallel-cascade \cite{hagg2011identification}.
Due to the importance of frequency domain analysis in controller design, a variety of  relevant properties are included in the identification procedure, e.g., the location of poles of the system  \cite{miller2013subspace}, 
%\cite{miller2013subspace,okada1996subspace,gustafsson1998state}, 
phase constraints \cite{mckelvey2005estimation}, the peak in frequency response \cite{hoagg2004subspace}, moments and derivatives of the transfer function \cite{inoue2019subspace}, 
%\cite{akccay2005subspace,inoue2019subspace,inoue2015moment}, 
positive-realness \cite{goethals2003identification}, and the passivity of the system 
\cite{rodrigues2021novel}. 
%\cite{hoagg2004first,goethals2003identification,rodrigues2021novel}. 
Identification with side-information such as positivity or being compartmental are considered in \cite{de2002identification,benvenuti2002model}. 
The side-information on the low internal complexity of the system is included by means of sparsity promoting regularizations  such as the rank and the nuclear norm of associated Hankel matrices \cite{smith2014frequency,fazel2013hankel}, or by employing  atomic norm regularization applied on specific atomic linear representations of the system \cite{shah2012linear}. 
The stability side-information is integrated into the subspace identification method by ensuring that the poles of the identified models are inside the unit disc \cite{lacy2003subspace,van2001identification}. 
%In \cite{johansen1998constrained} and \cite{markovsky2017subspace}, known properties of step and impulse response are included by imposing appropriate equality and inequality constraints. 
The kernel-based system identification approach \cite{pillonetto2010new}, besides addressing model order selection, robustness, and bias-variance trade-off issues, opened new avenues for the integration of various types of side-information \cite{ljung2020shift,khosravi2021robust,pillonetto2014kernel,khosravi2022Lut}, including stability, dissipativity, resonant frequencies, smoothness of the impulse response, oscillatory behaviors, relative degree, exponential decay of the impulse response, structural properties,  and the presence of fast and slow poles \cite{fujimoto2017extension,chen2012estimation,darwish2018quest,chen2018kernel,marconato2016filter,risuleo2017nonparametric,risuleo2019bayesian,everitt2018empirical,khosravi2021FDI}. Moreover, the incorporation of positivity and internal low-complexity are revisited in this framework \cite{khosravi2020low,khosravi2020regularized,pillonetto2016AtomicNuclearKernel,khosravi2019positive, zheng2021bayesian,khosravi2021POS}.

% % ============================================
% % Side information on asysmptotic bahaviors and dc-gain
The steady-state gain information has particular importance from the control perspective, e.g., in closed-loop design and model predictive control   \cite{grosdidier1985closed,campo1994achievable,zheng1993robust}.
%We may obtain this information, in exact or approximate form, through the structure of the system or from the experimental or historical collected data.
%This information may be obtained, in exact or approximate form, through the structure of the system or from experimental or historical data.
%We may obtain this information, in exact or approximate form, through the structure of the system, from the experimental or historical collected data, or by designing and performing suitable experiments.
This information may be obtained, in exact or approximate form, through the structure of the system, from the experimental or historical collected data, or by designing and performing suitable experiments.
Hence, integrating the steady-state gain side-information into the identified model is of particular interest.
To this end,  various heuristics are introduced   
based on the subspace identification approach  \cite{trnka2009subspace,alenany2011improved,yoshimura2019system,abe2016subspace,markovsky2017subspace}.
Indeed, to identify a finite impulse response (FIR) model for the system, 
the subspace method can be employed in the multi-step ahead prediction form \cite{trnka2009subspace,alenany2011improved}. 
Following this and using a Bayesian approach, the steady-state gain side-information can be encoded in the covariance of the prior distribution \cite{trnka2009subspace}.
On the other hand, a frequentist framework is employed in \cite{alenany2011improved, yoshimura2019system, abe2016subspace, markovsky2017subspace}, where the steady-state gain side-information is incorporated by imposing linear constraints.
Moreover, to leverage the previously mentioned advantages of the kernel-based approach, Bayesian FIR estimation methods are proposed in \cite{fujimoto2018kernel,tan2020kernel}, where kernel-based priors are employed and the steady-state gain side-information is integrated into the resulting estimation problem. The identification scheme in \cite{fujimoto2018kernel} first estimates the step response of the system, and then, the impulse response is obtained via a na\"ive discrete derivative calculation, which  is prone to numerical imprecision and instability. One the other hand, while the method introduced in \cite{tan2020kernel} improves the estimation performance approach in \cite{fujimoto2018kernel}, the proposed formulation 
is incapable of including deterministic information on the steady-state gain of the system.
The identification approaches discussed %in \cite{trnka2009subspace,alenany2011improved, yoshimura2019system, abe2016subspace, markovsky2017subspace,fujimoto2018kernel,tan2020kernel}  
are only applicable when a large set of high-quality data is available. 
Furthermore, they are limited to relatively short FIR estimation and fast decaying dynamics. 
Therefore, these estimation methodologies are not suitable for infinite impulse responses (IIR) and continuous-time systems, particularly when the dynamics have a very slowly decaying impulse response and considerably long memory.

% % ============================================
% % This paper
In this paper, we develop a nonparametric identification approach where the side-information on the steady-state gain of the system is integrated into the proposed scheme. To leverage the powerful framework of kernel-based identification, we employ RKHS of stable impulse responses as the hypothesis space \cite{pillonetto2014kernel, chen2018stability}, enabling the formulation of
the problem for the continuous-time and discrete-time cases together. The identification problem is expressed as a constrained optimization where a generic regularized empirical loss is minimized subject to a suitably designed constraint encoding the available side-information on the steady-state gain of the system. According to the frequentist approach  employed, the resulting formulation is flexible, e.g., one can address the issue of outliers by defining the empirical loss based on the Huber function and its variants. We show that the steady-state gain linear functional is continuous on the employed RKHS, which implies that the problem is well-defined by guaranteeing the existence and uniqueness of the solution. 
For the initial infinite-dimensional formulation of the identification problem, we derive an equivalent finite-dimensional convex program with a unique solution.
%We show that the estimation problem, initially formulated as an infinite-dimensional optimization, is equivalent to a finite-dimensional convex program with a unique solution. 
This solution has a closed-form when exact side-information is considered, and the empirical loss and regularization functions used are quadratic.  Furthermore, we provide results for improving the computational complexity of the presented approach by obtaining the closed form of quantities used in the algorithm. We perform extensive numerical simulations confirming the efficacy of the proposed identification method.

\section{Notations and Preliminaries}
The set of natural numbers, the set of non-negative integers, the set of real numbers, the set of non-negative real numbers,  the $n$-dimensional Euclidean space and the space of $n$ by $m$ real matrices are denoted by $\Nbb$, $\Zbb_+$,  $\Rbb$, $\Rbb_+$,  $\Rbb^n$ and $\Rbb^{n\times m}$, respectively.
The $i^\nth$ entry of vector $\vca$ is denoted by  $[\vca]_{(i)}$, and the entry of matrix $\mxA$ at the $i^\nth$ row and the  $j^\nth$ column is denoted by $[\mxA]_{(i,j)}$.
%Through out the paper, 
To handle discrete and continuous time in the same formulation, 
$\Tbb$ denotes either $\Zbb_+$ or $\Rbb_+$, and $\Tbb_{\pm}$ is the set of scalars $t$ where either $t\in\Tbb$ or $-t\in\Tbb$. 
Given measure space $\Xscr$, the space of measurable functions $g:\Xscr\to \Rbb$ is denoted by $\Rbb^{\Xscr}$.
The element $\vc{u}\in\Rbb^{\Xscr}$ is shown entry-wise as $\vcu=(u_x)_{x\in\Xscr}$, or equivalently as $\vcu=\big(u(x)\big)_{x\in\Xscr}$.
Depending on the context of discussion, $\Lscrinfty$ refers either to $\ellinfty(\Zbb)$ or $\Linfty(\Rbb)$,  i.e., the space of bounded signals. 
Similarly, $\Lscrone$ is either $\ellone(\Zbb_+)$ or $\Lone(\Rbb_+)$,  i.e., the space of stable impulse responses. 
%For $p\in[1,\infty]$, depending on the context of discussion, $\Lscrp$ refers either to $\ellp(\Zbb_+)$ or $\Lp(\Rbb_+)$. 
For $p\in\{1,\infty\}$, the norm in $\Lscrp$ is denoted by $\|\cdot\|_{p}$.
Given $\Vscr\subseteq \Xbb$, the linear span of $\Vscr$, denoted by $\linspan\Vscr$, is a linear subspace of $\Xbb$ containing linear combination of the elements of $\Vscr$.
Let $\Ycal$ be a set and $\Ccal\subseteq\Ycal$. 
We define the function $\delta_{\Ccal}$ as
$\delta_{\Ccal}(y) = 0$, if $y\in\Ccal$, and $\delta_{\Ccal}(y) = \infty$, otherwise.
Similarly, function $\mathbf{1}_{\Ccal}$ is defined as $\mathbf{1}_{\Ccal}(y) = 1$, if $y\in\Ccal$ and $\mathbf{1}_{\Ccal}(y) = 0$, otherwise.
With respect to each bounded signal $\vcu=(u_s)_{s\in \Tbb_{\pm}}\in \Lscrinfty$ and each $t\in\Tbb_{\pm}$, the linear map $\Lu{t}:\Lscrone\to\Rbb$ is defined as  
$\Lu{t}(\vcg) := \sum_{s\in \Zbb_+}g_s u_{t-s}$, when $\Tbb=\Zbb_+$, and 
$\Lu{t}(\vcg) := \int_{\Rbb_+}\!\!g_s u_{t-s}\ \!\drm s$, when $\Tbb=\Rbb_+$.

\section{Identification with Steady-State Gain Side-Information} \label{sec:problem_statement}
Let $\Scal$ be a stable LTI system 
with 
% transfer function $\GS$ and 
impulse response  $\gS:=(\gtS_t)_{t\in\Tbb}\in\Rbb^\Tbb$, where $\Tbb :=\Zbb_+$, for the case of discrete-time, and, $\Tbb :=\Rbb_+$, for the case of continuous-time.
The \emph{steady-state gain} %or the \emph{DC-gain} of 
system $\Scal$ is equal to $\dc(\gS)$, where $\ell_0$ is a real-valued linear operator defined on the space of stable impulse responses as following
\begin{equation}\label{eqn:dc_gain}
	\dc(\vcg) := 
	\begin{cases}
		\sum_{t\in \Zbb_+}g_t, 
		& \text{ if } \Tbb=\Zbb_+,\\	
		\int_{\Rbb_+}\!\!g_t \ \!\drm t, 
		& \text{ if } \Tbb=\Rbb_+,\\
	\end{cases}
\end{equation}
for any $\vcg=(g_t)_{t\in\Tbb}\in\Lscrone$.
%Note that, from the final value theorem, we know that $\ell_0(\gS)$ is equal to the value of transfer function at frequency zero. Accordingly, the steady-state gain $\ell_0(\gS)$ is also known as the \emph{DC-gain} of system $\Scal$.

Let $\vc{u}=(u_t)_{t\in \Tbb}$ be a bounded signal 
%$\vc{u}=(u_t)_{t\in \Tbb}\in \Lscrinfty$ 
applied to the input of system  $\Scal$, and the corresponding output be measured at time instants 
%$\Tscr:=\{t_i\ \!|\! \ i=1,\ldots,\nD\}$, 
\begin{equation}
\Tscr:=\{t_i\ \!|\! \ i=1,\ldots,\nD\},
\end{equation}
where $\nD\in\Nbb$ denotes the number of measurement samples.
From the definition of $\Lu{t}$, the measured output of the system at time instant $t\in\Tscr$, denoted by $y_t$, is 
\begin{equation}\label{eqn:output_sys_S}
	y_t := \Lu{t}(\gS)+w_t, \qquad t\in\Tscr,
\end{equation}
where  $\{w_t|t\in\Tscr\}$ are the measurement uncertainty.
Consequently, we are provided with the set of input-output data $\Dscr$ defined as 
\begin{equation}\label{eqn:DataSet}
	\Dscr = \{(u_t,y_t)\ | \ t\in\Tscr \}.
\end{equation} 
%%Ssuppose that we know the steady-state gain of the system. The question is whether this side-information is naturally encoded in the identification problem. The following example elaborate this issue. Before this, we need to recall that the absolute value of the DC-gain of system equals to the magnitude of frequency response at zero frequency.
%In addition to $\Dscr$, suppose that we know the steady-state gain of the system. Accordingly, one may pose the question of whether the given steady-state gain side-information is naturally preserved and encoded in the identification of system  $\Scal$. We elucidate this issue in the following numerical example.
In addition to $\Dscr$, suppose that we know the steady-state gain of the system. Accordingly, one may ask whether the given steady-state gain side-information is naturally preserved and encoded in the identification of system $\Scal$. We elucidate this issue in the following demonstrative numerical example.
%----------------------------------------------
\begin{example*}\label{exm:pf}\normalfont
	Consider continuous-time system $\Scal$ described by the following transfer function
	\begin{equation} \label{eqn:sys_example}
		\GS(s) = \frac{s+2}{s^2+s+2},
	\end{equation}
	with the step response denoted by $\sS$.
	The system is initially at rest, and then actuate it by a random switching pulse signal in the time interval $[0,100]$. 
	To obtain the set of data $\Dscr$ in \eqref{eqn:DataSet}, the output of system is uniformly measured with the sampling frequency of $2$\,Hz and the signal-to-noise ratio (SNR) of $20$\,dB.
	Furthermore, let the steady-state gain of the system be given, i.e., we know that $\GS(0)=1$.
	The impulse response of the system can be estimated using \emph{direct} and \emph{indirect} methods \cite{garnier2008direct}.
	In the direct approach, we use the \texttt{tfsrivc} function provided by \textsc{CONTSID Toolbox} \cite{garnier2018contsid} with the known order of the system. 
	Let $\hat{\vcg}_1$ and $\hat{\vcs}_1$ respectively denote the impulse response and the step response of the resulting estimated system.
	Also, we identify the system indirectly by employing the \texttt{n4sid} function available in \MATLAB's \textsc{System Identification Toolbox} \cite{ljung2012version}  to estimate a discrete-time model, and subsequently, the continuous-time impulse response is obtained from a linear interpolation of the discrete-time estimate. 
	Let the resulting impulse and step responses be denoted by $\hat{\vcg}_2$ and $\hat{\vcs}_2$, respectively.
	% Figure \ref{fig:example} illustrates $\hat{\vcs}_1$ and  $\hat{\vcs}_2$. The steady-state values for $\hat{\vcs}_1$ and $\hat{\vcs}_2$ are respectively $0.85$ and $1.22$, i.e., we have $15\%$ and $22\%$ errors in the resulting steady-state gains.
	As shown in Figure \ref{fig:example}, the steady-state values for $\hat{\vcs}_1$ and $\hat{\vcs}_2$ are respectively $0.85$ and $1.22$, meaning that the steady-state gains have a $15\%$ and a $22\%$ error.	
	Consequently, one can observe that the estimated models do not take into account the steady-state gain side-information.
\end{example*}
\begin{figure}[t]
	\centering
	\includegraphics[width =0.45\textwidth]{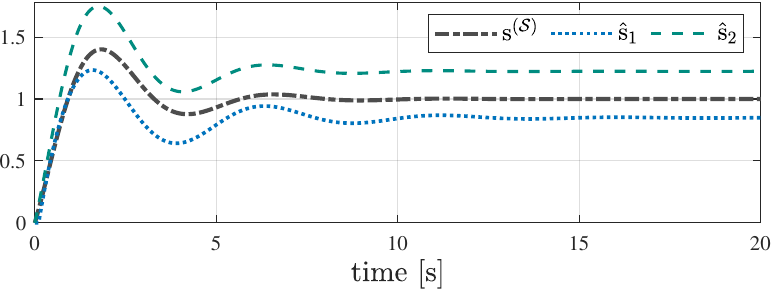}
	\caption{The step responses for system $\Scal$ and the estimated models.}
	\label{fig:example}
\end{figure}
%----------------------------------------------
Motivated by this example, the main problem discussed in this paper is the identification with side-information on the steady-state gain of the system. More precisely, we address the identification problem introduced below.
\begin{problem}
	\label{prob:ID_w_DC-gain_SI}
	Given the set of data $\Dscr$, estimate the impulse response of stable system $\Scal$ satisfying the side-information $\dc(\gS)\in [\udelta,\odelta]$, where $\udelta$ and $\odelta$ are given bounds for the steady-state gain of system $\Scal$. 
\end{problem}
Compared to the above example, designed to elaborate on the rationale of our discussion and its importance, Problem \ref{prob:ID_w_DC-gain_SI} addresses the more common scenarios in practice where the available side-information on the steady-state gain is imprecise and provided in the form of interval $[\udelta,\odelta]$. When the steady-state gain of the system is known to be exactly equal to $\delta$ that might be any arbitrary value in $\Rbb$, we set $\udelta=\odelta=\delta$.
The precise formulation of Problem \ref{prob:ID_w_DC-gain_SI} is discussed in the next section.

\section{The Estimation Problem: Existence and Uniqueness of the Solution}
\label{sec:MainOpt}
%In this section, we formulate a constrained regularized empirical loss minimization to address the estimation Problem \ref{prob:ID_w_DC-gain_SI}. To this end, in addition to the objective function of this optimization problem, we introduce the hypothesis space characterizing the feasible set, and, the constraint which is included to incorporate the side-information on the DC-gain of the system. Furthermore, we study the existence and uniqueness properties for the solution of the estimation problem.
In this section, we formulate a constrained regularized empirical loss minimization to address estimation Problem~\ref{prob:ID_w_DC-gain_SI}. For this purpose, in addition to an appropriate objective function, we introduce a suitable hypothesis space characterizing the feasible set of the optimization problem and a constraint encoding the side-information about the steady-state gain of the system. Furthermore, we study the existence and uniqueness property for the solution of the resulting problem.
%-----------------------------------------
%=========================================
\subsection{Stable Reproducing Kernel Hilbert Spaces}
%For estimating the unknown impulse response, we take a specific type of {\em reproducing kernel Hilbert spaces} (RKHS) \cite{aronszajn1950theory,berlinet2011reproducing} as the hypothesis set.
%For the estimation problem, the employed hypothesis set is a specific type of {\em reproducing kernel Hilbert spaces} (RKHS) \cite{aronszajn1950theory,berlinet2011reproducing} of the impulse responses.
The hypothesis space taken for the estimation problem is  a  {\em reproducing kernel Hilbert space} (RKHS) \cite{berlinet2011reproducing}, which contains stable impulse responses.
%The structure of RKHS provides a suitable framework for investigating the problem and obtaining a tractable scheme for solving \eqref{eqn:opt_F_0}. 
%The framework of RKHS provides an appropriate scheme to study the problem and obtain a tractable approach for solving the estimation problem. 
Based on the structure of RKHS, we can investigate the problem and obtain a tractable approach for solving the estimation problem. 
These features are provided by the \emph{kernel function}, which characterizes the RKHS uniquely and completely.
%More details are provided below.
\begin{definition}[\cite{berlinet2011reproducing}]
\label{def:kernel_and_section}
Let $\kernel:\Tbb\times\Tbb\to \Rbb$ be a non-zero symmetric measurable function. Then, we say $\kernel$ is a \emph{positive-definite kernel}, or simply \emph{kernel}, if 
we have 
\begin{equation}
\sum_{i=1}^{m}\sum_{j=1}^{m}
a_i\kernel(t_i,t_j)a_j\ge 0,
\end{equation} 
%$\sum_{i=1}^{m}\sum_{j=1}^{m} a_i\kernel(t_i,t_j)a_j\ge 0$,
for any $m\in\Nbb$, $t_1,\ldots,t_m\in\Tbb$ and $a_1,\ldots,a_m\in\Rbb$.
If $\kernel$ is assumed to be continuous when $\Tbb=\Rbb_+$, then it is called a {\em Mercer kernel}.
The {\em section} of kernel $\kernel$ at $t\in\Tbb$, denoted by $\kernel_{t}$,
is the function defined as $\kernel(t,\cdot):\Tbb\to\Rbb$.
\end{definition}
%Based on the notion of Mercer kernels, the RKHSs are characterized. 
\begin{theorem}[\cite{berlinet2011reproducing}]\label{thm:kernel_to_RKHS_def}
Given positive-definite kernel $\kernel:\Tbb\times\Tbb\to \Rbb$, there exists a unique Hilbert space $\Hk\subseteq \Rbb^{\Tbb}$ equipped with inner product $\inner{\cdot}{\cdot}_{\Hk}$, called a \emph{RKHS with kernel} $\kernel$, such that, for any $t\in\Tbb$, we have
\begin{itemize}
\item[i)] $ \kernel_t\in\Hk$, and
\item[ii)] $\inner{\vc{g}}{ \kernel_{t}}_{\Hk}=g_t$, for all $\vc{g}=(g_t)_{t\in\Tbb}\in\Hk$.
\end{itemize} 
The second feature is called the {\em reproducing property}.
%Moreover, we have $\Hk\subset\Lscrone$ if and only if $\kernel$ is a stable kernel. In this case, $\Hk$ is said to be a \emph{stable RKHS}.
\end{theorem}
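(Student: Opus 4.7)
The plan is to follow the classical Moore–Aronszajn construction. Let $\Vk := \linspan\{\kernel_t : t\in\Tbb\}\subseteq \Rbb^{\Tbb}$ denote the pre-Hilbert space of finite linear combinations of kernel sections. On $\Vk$, I would define a candidate inner product bilinearly by
\begin{equation*}
\Big\langle \sum_{i=1}^m a_i \kernel_{t_i},\ \sum_{j=1}^{m'} b_j \kernel_{s_j}\Big\rangle_{\Vk} := \sum_{i=1}^m\sum_{j=1}^{m'} a_i b_j \kernel(t_i,s_j),
\end{equation*}
and first check that the value depends only on the functions and not on their particular representations (using the identity $\sum_i a_i \kernel(t_i,s_j) = f(s_j)$ where $f=\sum_i a_i\kernel_{t_i}$, which is purely a pointwise evaluation). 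Symmetry and bilinearity are immediate from the symmetry of $\kernel$, and positive semi-definiteness of $\langle\cdot,\cdot\rangle_{\Vk}$ is precisely the Mercer condition in Definition~\ref{def:kernel_and_section}.

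Next I would promote semi-definiteness to strict positive-definiteness. For any $f = \sum_i a_i \kernel_{t_i}\in\Vk$ and any $t\in\Tbb$, a direct calculation gives the \emph{reproducing identity on $\Vk$}, namely $\langle f,\kernel_t\rangle_{\Vk} = f(t)$. Combined with the Cauchy–Schwarz inequality (which holds for any positive semi-definite symmetric bilinear form), this yields $|f(t)|^2 \le \langle f,f\rangle_{\Vk}\,\kernel(t,t)$. Hence $\langle f,f\rangle_{\Vk}=0$ forces $f\equiv 0$ pointwise, so $\langle\cdot,\cdot\rangle_{\Vk}$ is a genuine inner product on $\Vk$.

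The main technical step, and the one I expect to require most care, is passing from the abstract Hilbert completion of $(\Vk,\langle\cdot,\cdot\rangle_{\Vk})$ to a Hilbert space of \emph{functions} on $\Tbb$. Given any Cauchy sequence $(f_n)\subset\Vk$, the bound $|f_n(t)-f_m(t)|\le \|f_n-f_m\|_{\Vk}\sqrt{\kernel(t,t)}$ shows that $(f_n(t))$ is Cauchy in $\Rbb$ for every $t\in\Tbb$; I would define $f(t):=\lim_n f_n(t)$ and check that this limit depends only on the equivalence class of $(f_n)$, that the resulting map from the completion into $\Rbb^{\Tbb}$ is injective (again by the pointwise bound), and that it is linear and isometric. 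The image, call it $\Hk$, is then a Hilbert subspace of $\Rbb^{\Tbb}$. Both conditions (i) and (ii) transfer from $\Vk$ to $\Hk$ by continuity of the inner product: $\kernel_t\in\Vk\subseteq \Hk$, and $\langle g,\kernel_t\rangle_{\Hk}=\lim_n\langle f_n,\kernel_t\rangle_{\Vk}=\lim_n f_n(t)=g(t)$ for any $g\in\Hk$ represented by $(f_n)$.

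Finally, for uniqueness, suppose $\Hscr'\subseteq\Rbb^{\Tbb}$ is another Hilbert space satisfying (i) and (ii). The reproducing property forces $\langle\kernel_t,\kernel_s\rangle_{\Hscr'}=\kernel(t,s)$, so $\langle\cdot,\cdot\rangle_{\Hscr'}$ coincides with $\langle\cdot,\cdot\rangle_{\Vk}$ on $\Vk\subseteq\Hscr'$. Furthermore, the reproducing property implies that $\Vk$ is dense in $\Hscr'$: any $h\in\Hscr'$ orthogonal to every $\kernel_t$ satisfies $h(t)=\langle h,\kernel_t\rangle_{\Hscr'}=0$ for all $t$, hence $h=0$. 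Therefore $\Hscr'$ is the completion of $(\Vk,\langle\cdot,\cdot\rangle_{\Vk})$ realised inside $\Rbb^{\Tbb}$ via pointwise limits, which by the previous paragraph equals $\Hk$ as sets with identical inner products. This establishes both existence and uniqueness.
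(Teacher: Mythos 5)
Your proposal is the classical Moore--Aronszajn construction, which is exactly the proof in the reference the paper cites for this theorem (the paper itself states it without proof), so you are correct and in essence following the same route. One small gloss: injectivity of the map from the abstract completion into $\Rbb^{\Tbb}$ does not follow from the pointwise bound alone but from the fact that an element whose pointwise limit vanishes is orthogonal to the dense subspace $\Vk$ (since $\langle h,\kernel_t\rangle=\lim_n f_n(t)=0$), which is precisely the density argument you already use in your uniqueness paragraph.
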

%-----------------------------------------
According to Theorem \ref{thm:kernel_to_RKHS_def}, a RKHS is completely characterized by the corresponding kernel. 
As we are interested in the stable impulse responses in the bounded-input-bounded-output (BIBO) sense, we need to employ a kernel such that we have $\Hk\subseteq\Lscrone$.
%\begin{definition}
%The Mercer $\kernel:\Tbb\times\Tbb\to \Rbb$ is said to be \emph{stable} if $\Hk\subseteq\Lscrone$.
%\end{definition}
%The necessary and sufficient condition for this property is provided by the following theorem.
The following theorem provides a necessary and sufficient condition for this feature. 

%Based on Theorem \ref{thm:RKHS_mercer}, we know that a RKHS is uniquely characterized with a Mercer kernel. 
%Therefore, in the current context, the kernel $\kernel$ is supposed to be chosen suitably such that each impulse response $\vc{g}\in\Hk$ represents a stable system in the bounded-input-bounded-output (BIBO) sense. More precisely, we should have $\Hk\subseteq\Lscrone$. When kernel $\kernel$ satisfies this feature, it is called a {\em stable kernel} \cite{ chen2018stability}. 
%The following theorem provides a necessary and sufficient condition for the stability of a given kernel.
%-----------------------------------------
\begin{theorem}[\cite{pillonetto2014kernel,carmeli2006vector}]
	Let $\kernel:\Tbb\times\Tbb\to \Rbb$ be a positive-definite kernel. Then, $\kernel$ is {\em stable}  if and only if, for any $\vc{u}=(u_s)_{s\in\Tbb}\in\Lscrinfty$,  
	we have
	\begin{equation}
	\sum_{t\in\Zbb_+}\bigg|\sum_{s\in\Zbb_+}u_s\kernel(t,s)\bigg|<\infty,
	\end{equation} 
	when $\Tbb=\Zbb_+$, and,
	\begin{equation}
	\int_{\Rbb_+}\bigg|\int_{\Rbb_+}u_s\kernel(t,s)\drm s\bigg|\drm t<\infty,
	\end{equation} 
	when $\Tbb=\Rbb_+$.
	The kernel $\kernel$ is said to be \emph{stable} when it satisfies this property.
\end{theorem}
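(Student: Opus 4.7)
The plan is to interpret stability as the inclusion $\Hk \subseteq \Lscrone$, as stated in the paragraph preceding the theorem, and establish the equivalence with the integrability condition by introducing the operator $T_\kernel$ acting on $\vcu = (u_s)_{s\in\Tbb} \in \Lscrinfty$ via $(T_\kernel \vcu)(t) := \int_{\Rbb_+} u_s \kernel(t,s) \drm s$ in continuous time (and the analogous sum in discrete time). This operator serves as the bridge between the two directions.

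For the sufficiency direction ($\Leftarrow$), I would start from an arbitrary $\vcg \in \Hk$ and aim to show $\vcg \in \Lscrone$. The key step is to pick $u_t := \mathrm{sign}(g_t) \in \Lscrinfty$, so that $\|\vcg\|_1 = \int u_t g_t \drm t$. Using the reproducing property $g_t = \inner{\vcg}{\kernel_t}_{\Hk}$ and formally interchanging the integral with the inner product, this equals $\inner{\vcg}{\vch_\vcu}_{\Hk}$ for a candidate element $\vch_\vcu \in \Hk$ whose evaluation at $t$ equals $(T_\kernel \vcu)(t)$. Cauchy--Schwarz then yields $\|\vcg\|_1 \leq \|\vcg\|_{\Hk} \|\vch_\vcu\|_{\Hk}$, and the norm of $\vch_\vcu$ can be computed as $\|\vch_\vcu\|_{\Hk}^2 = \int u_t (T_\kernel \vcu)(t) \drm t \leq \|\vcu\|_\infty \|T_\kernel \vcu\|_1$, which is finite by the assumed integrability condition.

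For the necessity direction ($\Rightarrow$), I would invoke a closed-graph-type argument to show that the inclusion $\Hk \hookrightarrow \Lscrone$ is bounded: any sequence converging simultaneously in $\Hk$ and in $\Lscrone$ must have the same limit, since convergence in $\Hk$ implies pointwise convergence via the reproducing property applied to $\kernel_t$, while $\Lscrone$-convergence gives pointwise convergence along a subsequence. Given this boundedness, for any $\vcu \in \Lscrinfty$ the linear functional $\vcg \mapsto \int u_t g_t \drm t$ is bounded on $\Hk$, so the Riesz representation theorem supplies an element $\vch_\vcu \in \Hk$ with $\inner{\vcg}{\vch_\vcu}_{\Hk} = \int u_t g_t \drm t$. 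Setting $\vcg = \kernel_t$ and using the reproducing property identifies $h_{\vcu,t} = (T_\kernel \vcu)(t)$; since $\vch_\vcu \in \Hk \subseteq \Lscrone$, the integrability condition follows.

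The main obstacle is the rigorous construction of $\vch_\vcu \in \Hk$ in the sufficiency direction: one must justify that $T_\kernel \vcu$ really defines an element of the RKHS and that the formal exchange of integral and inner product is valid. The cleanest route is via approximation by simple functions $\vcu_n = \sum_i a_i^{(n)} \mathbf{1}_{I_i^{(n)}}$, so that $\vch_{\vcu_n} := \sum_i a_i^{(n)} \kernel_{s_i^{(n)}} |I_i^{(n)}|$ belongs to $\linspan\{\kernel_t : t \in \Tbb\} \subseteq \Hk$ by construction; the positive semi-definiteness of $\kernel$ then gives the identity $\|\vch_{\vcu_n} - \vch_{\vcu_m}\|_{\Hk}^2 = \iint (u_n - u_m)_t (u_n - u_m)_s \kernel(t,s) \drm t \drm s$, and the integrability condition forces the right-hand side to vanish as $n,m \to \infty$. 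Completeness of $\Hk$ produces $\vch_\vcu$ as the limit, and its pointwise values follow from continuity of the evaluation functionals $\inner{\cdot}{\kernel_t}_{\Hk}$. The discrete-time case sidesteps most of these measurability subtleties.
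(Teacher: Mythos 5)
The paper itself does not prove this theorem --- it is imported verbatim from the cited references --- so there is no internal proof to compare against, and I assess your argument on its own terms. Your necessity direction is sound and is the standard route: a closed-graph argument makes the inclusion $\Hk\hookrightarrow\Lscrone$ bounded, each $\vcu\in\Lscrinfty=(\Lscrone)^*$ then induces a bounded functional on $\Hk$, Riesz representation supplies $\vch_{\vcu}$, and evaluating against the sections $\kernel_t$ identifies $\vch_{\vcu}$ with $T_{\kernel}\vcu$, which lies in $\Hk\subseteq\Lscrone$. The sufficiency direction, however, rests on a step you assert but do not justify: that $\norm{\vch_{\vcu_n}-\vch_{\vcu_m}}_{\Hk}^2=\iint (u_n-u_m)_s(u_n-u_m)_t\,\kernel(s,t)\,\drm s\,\drm t$ vanishes as $n,m\to\infty$ (in discrete time, that the tail blocks $\sum_{s,t=m+1}^{n}u_su_t\kernel(s,t)$ vanish). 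The only estimate the hypothesis provides is $\iint v_sv_t\kernel\le\norm{\vcv}_{\infty}\norm{T_{\kernel}\vcv}_1$, and for $\vcv=\vcu_n-\vcu_m$ neither factor tends to zero: $\norm{\vcv}_{\infty}$ stays of order $\norm{\vcu}_{\infty}$, and forcing $\norm{T_{\kernel}\vcv}_1\to0$ by dominated convergence would require $t\mapsto\int|\kernel(t,s)|\,\drm s$ to be integrable, which is precisely the integrability of Definition~\ref{def:integrable_kernel} --- a strictly stronger property than the stability being characterized. As written, the existence of $\vch_{\vcu}$ essentially presupposes the continuity of the very functionals the theorem is meant to deliver, so the argument is circular at this point.

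The repair is that sufficiency does not need $\vch_{\vcu}$ to exist at all; a uniform bound on truncations suffices. First, the hypothesis applied to $u_s=\mathrm{sign}(\kernel(t,s))$ shows each section $\kernel(t,\cdot)$ lies in $\Lscrone$, which closes the graph of $T_{\kernel}:\Lscrinfty\to\Lscrone$ and hence makes it a bounded operator. Then for every finitely supported $\vcv$ with $\norm{\vcv}_{\infty}\le1$ one has $\norm{\sum_t v_t\kernel_t}_{\Hk}^2=\sum_{s,t}v_sv_t\kernel(s,t)=\sum_t v_t(T_{\kernel}\vcv)(t)\le\norm{T_{\kernel}\vcv}_1\le\norm{T_{\kernel}}$, so taking $\vcv$ to be the truncation of $\mathrm{sign}(\vcg)$ to $\{0,\dots,n\}$ gives $\sum_{t\le n}|g_t|=\inner{\vcg}{\sum_{t\le n}v_t\kernel_t}_{\Hk}\le\norm{\vcg}_{\Hk}\norm{T_{\kernel}}^{1/2}$ uniformly in $n$, i.e., $\vcg\in\Lscrone$. (Alternatively, keep your $\vch_{\vcu_n}$ but use only that they are norm-bounded and extract a weakly convergent subsequence, identifying the weak limit pointwise with $T_{\kernel}\vcu$ through the sections.) The continuous-time case still needs the Riemann-sum and measurability care you already flag, but it is this quantitative bound, not strong convergence of the approximants, that carries the proof.
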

The following stable kernels are frequently used in the literature \cite{pillonetto2014kernel}:
\begin{itemize}
	\item \emph{diagonally/correlated} (DC) kernel:
	\begin{equation}\label{eqn:DC_kernel}
		\kernelDC(s,t) =  \alpha^{\max(s,t)} \gamma^{|s-t|},
		%\kernelDC(s,t) = \expe^{-\alpha\max(s,t)}\expe^{-\gamma|s-t|},
	\end{equation}
	\item \emph{tuned/correlated} (TC) kernel:
	\begin{equation}\label{eqn:TC_kernel}
		\kernelTC(s,t) = \alpha^{\max(s,t)}, 
		%\kernelTC(s,t) = \expe^{-\alpha\max(s,t)},
	\end{equation}
	\item \emph{stable spline} (SS) kernel:	
	\begin{equation}\label{eqn:SS_kernel}
		\kernelSS(s,t) = \alpha^{\max(s,t)+s+t}-\frac{1}{3}\alpha^{3\max(s,t)},
		%\kernelSS(s,t) = \expe^{-\alpha(s+t+\max(s,t))}-\frac{1}{3} \expe^{-3\alpha\max(s,t)},
	\end{equation}	
\end{itemize}
%where $\alpha\in(0,1)$, $\gamma\in (-\alpha^{-\frac12},\alpha^{-\frac12})$, if $\Tbb=\Zbb_+$, and, $\gamma\in (0,\alpha^{-\frac12})$, if $\Tbb=\Rbb_+$. \\
where $\alpha,\gamma\in\Rbb$ are such that $\alpha\in(0,1)$, $|\gamma|\in(0, \sqrt{\alpha^{-1}})$, if $\Tbb=\Zbb_+$, and, $\gamma\in(0,\sqrt{\alpha^{-1}})$, if $\Tbb=\Rbb_+$. %\footnote{\cmm{Setting $\rho=\sqrt{\alpha}\gamma$, the definition of DC kernel in \cite{pillonetto2014kernel} reduces to \eqref{eqn:DC_kernel}.}}. 
Note that by setting $\rho=\sqrt{\alpha}\gamma$, the definition of DC kernel in \cite{pillonetto2014kernel} reduces to \eqref{eqn:DC_kernel}. Also, without loss of generality, we can drop the extra scaling factor introduced in \cite{pillonetto2014kernel}.

\subsection{Empirical Loss and Regularization Function}\label{ssec:empirical_loss_regularization}
Given the set of data $\Dscr$ and the hypothesis space $\Hk$, the \emph{empirical loss function}, 
$\Ecal: \Hk\to\Rbb_+$, can be defined as the sum of squared errors, i.e., for each $\vcg\in\Hk$, we have
\begin{equation}\label{eqn:Emprical_loss_sum_of_squared_error}
\Ecal(\vcg):=\sum_{i=1}^{\nD}\big(\Lu{t_i}(\vcg) - y_{t_i} \big)^2. 
\end{equation}
We can consider a more general form for the empirical loss function.
More precisely, let $\Ical:=\{i_k|k=1,\ldots,\nI\}$ be a subset of $\{1,\ldots,\nD\}$, $\vcy_{\Ical}$ be the vector defined as $\vcy_{\Ical}=[y_{t_i}]_{i\in\Ical}$, and 
$\ell:\Rbb^{\nI}\times \Rbb^{\nI}\to \Rbb_+$ be a given convex function. 
Accordingly, we define the generalized \emph{loss function}, $\Ecal_{\ell}:\Hk\to\Rbb_+$ as follows
\begin{equation}\label{eqn:Emprical_loss_ell}
\Ecal_{\ell}(\vcg):=\ell([\Lu{t_i}(\vcg)]_{i\in\Ical},\vcy_{\Ical}), \qquad \forall \vcg\in\Hk,
\end{equation}
where subscript $\ell$ is considered to highlight the role of function $\ell$ in the definition of $\Ecal_{\ell}$.
When $\Ical=\{1,\ldots,\nD\}$ and function $\ell:\Rbb^{\nD}\times\Rbb^{\nD}\to\Rbb_+$ is defined as $\ell(\vcv_1,\vcv_2)=\|\vcv_1-\vcv_2\|^2$, for any $\vcv_1,\vcv_2\in\Rbb^{\nD}$, the empirical loss $\Ecal_{\ell}$ reduces to the special case introduced in  \eqref{eqn:Emprical_loss_sum_of_squared_error}.
Also, to be robust with respect to outliers, one may take function  $\ell:\Rbb^{\nD}\times\Rbb^{\nD}\to\Rbb_+$ as
\begin{equation}
	\ell(\vcv_1,\vcv_2)=\sum_{i=1}^{\nD}L_\sigma(\left[\vcv_1\right]_{(i)}-\left[\vcv_2\right]_{(i)}),
	\quad
	\forall \vcv_1,\vcv_2\in\Rbb^{\nD},
\end{equation}
where, for given $\sigma\in\Rbb_+$, function $L_\sigma:\Rbb\to\Rbb_+$ is the \emph{Huber loss}  defined as follows
\begin{equation}\label{eqn:Huber}
	L_\sigma(e) = 
	\begin{cases}
		\frac{1}{2}e^2, &\text{ if } |e|\le \sigma,\\
		\sigma(|e|-\frac{1}{2}\sigma), &\text{ otherwise, }\\
	\end{cases}
\end{equation}
or, it can be the smoothed version of \eqref{eqn:Huber}, known as the \emph{pseudo-Huber} function \cite{maronna2019robust}, which is 
\begin{equation}\label{eqn:smoothed-Huber}
	L_\sigma(e) = (e^2+\sigma^2)^{\frac{1}{2}}-\sigma^2, \quad \forall e\in\Rbb.
\end{equation} 
The resulting empirical loss function $\Ecal_{\ell}$ is more suitable to the cases where output measurements are subject to noise disturbances with large outliers.

Since the hypothesis space is a RKHS endowed with kernel $\kernel$, we define a regularization term,  enforcing desired attributes such as stability, based on the norm in $\Hk$, i.e., $\|\cdot\|_{\Hk}$. More precisely, let $\rho:\Rbb_+\to\Rbb_+$ be a strictly increasing 
%continuous and 
convex function. Then, the regularization function, $\Rcal:\Hk\to\Rbb_+$, is defined as $\Rcal(\vcg)=\rho(\|\vcg\|_{\Hk})$, .
Accordingly, the objective function for the estimation problem, $\Jcal: \Hk\to\Rbb_+$, is defined as following
\begin{equation}\label{eqn:Jcal}
\Jcal(\vcg):=\Ecal_{\ell}(\vcg)+\lambda \Rcal(\vcg), \qquad \forall \vcg\in \Hk,
\end{equation}
where $\lambda>0$ is the regularization  weight.
Note that, in addition to enforcing desired attributes such as stability, the regularization term helps avoiding over-fitting phenomena,  enhancing the numerical performance, and improving the bias-variance trade-off.
%========================================
%========================================
\subsection{Steady-State Gain Side-Information}
Define the set $\Gscr_{\kernel}([\udelta,\odelta])\subset\Hk$ as follows
\begin{equation}\label{eqn:Gcal}
\Gscr_{\kernel}([\udelta,\odelta]) := \Big\{\vcg\in\Hk\ \! \Big| \ \! \dc(\vcg)\in[\udelta,\odelta]\Big\}.
\end{equation}
The elements of $\Gscr_{\kernel}([\udelta,\odelta])$ are exactly the ones satisfying the side-information on the steady-state gain of the system. 
Therefore, the estimation Problem \ref{prob:ID_w_DC-gain_SI} is formulated as the following  optimization problem
\begin{equation}\label{eqn:opt_dc_gain_1}
\begin{array}{cl}
\minOp_{\vcg\in\Hk} & \ \Ecal_{\ell}(\vcg)+\lambda \Rcal(\vcg)\\
\mathrm{s.t.} & \ \vcg\in\Gscr_{\kernel}([\udelta,\odelta]).
\end{array}	
\end{equation}
The existence and uniqueness of the solution of optimization problem \eqref{eqn:opt_dc_gain_1} depends on the topological properties of set $\Gscr_{\kernel}([\udelta,\odelta])$ which is characterized by operator $\dc:\Hk\to\Rbb$.
To study these properties, we need the notion of integrable kernels \cite{pillonetto2014kernel}.
%In order to investigate the problem \eqref{eqn:opt_dc_gain_1}, we need to study topological properties of set $\Gscr_{\kernel}([\udelta,\odelta])$ which depend on map $\dc:\Hk\to\Rbb$. First, we need the following definition.

\begin{definition}\label{def:integrable_kernel}
The positive-definite kernel $\kernel:\Tbb\times\Tbb\to \Rbb$ is said to be \emph{integrable}
if
\begin{equation}
	%\int_{\Rbb_+\times\, \Rbb_+}|\kernel(s,t)|\drm s \drm t<\infty,
	\int_{\Rbb_+}\int_{\Rbb_+}|\kernel(s,t)|\ \! \drm s \drm t<\infty,
\end{equation}
when $\Tbb=\Rbb_+$, or, if 
\begin{equation}\label{eqn:abs_summable}
%\sum_{s,t\in\Zbb_+}|\kernel(s,t)|<\infty,
\sum_{s\in\Zbb_+}
\sum_{t\in\Zbb_+}
|\kernel(s,t)|<\infty,
\end{equation} 
when $\Tbb=\Zbb_+$.
\end{definition}
%========================================
One can easily see that each of the kernels $\kernelTC$, $\kernelDC$ and $\kernelSS$ is integrable (see Appendix \ref{sec:appendix_proof_kTC_kDC_kSS_integrable}).
Moreover, for any integrable kernel $\kernel$ and any $\vc{u}=(u_s)_{s\in\Tbb}\in\Lscrinfty$,  
we have
\begin{equation*}
	\sum_{t\in\Zbb_+}\Big|\sum_{s\in\Zbb_+}u_s\kernel(t,s)\Big|
	\le
	\|\vcu\|_\infty\sum_{t\in\Zbb_+}\sum_{s\in\Zbb_+}|\kernel(t,s)|	
	<\infty,
\end{equation*} 
when $\Tbb=\Zbb_+$, and,
\begin{equation*}
	\int_{\Rbb_+}\!\bigg|\int_{\Rbb_+}\!u_s\kernel(t,s)\drm s\bigg|\drm t
	\le 
	\|\vcu\|_\infty
	\int_{\Rbb_+}\!\int_{\Rbb_+}\!|\kernel(t,s)|\drm s\drm t
	<\infty,
\end{equation*} 
when $\Tbb=\Rbb_+$. Accordingly, the integrable kernels are stable.
The main importance of integrable kernels in this paper is highlighted by the following theorems.
The next theorem is the cornerstone of this work.
%========================================
\begin{theorem}\label{thm:dc_bounded}
	Let  $\kernel:\Tbb\times\Tbb\to \Rbb$ be an integrable Mercer kernel, and $\varphi_0=(\varphi_{0,t})_{t\in\Tbb}$ be defined as following
	\begin{equation}\label{eqn:phi_0t}
		\varphi_{0,t} =  
		\begin{cases}
		\sum_{s\in\Zbb_+}\kernel(t,s),   & \text{ if }\Tbb=\Zbb_+,\\		
		\int_{\Rbb_+}\kernel(t,s)\drm s, & \text{ if }\Tbb=\Rbb_+,\\		
		\end{cases}
	\end{equation}
	for any $t\in\Tbb$.
	Then, $\varphi_0$ is well-defined and $\varphi_0\in\Hk$. Moreover, we have 
	\begin{equation}\label{eqn:inner_phi_0_g}	
		\dc(\vcg)=\inner{\varphi_0}{\vcg}_{\Hk},\qquad \forall \vcg \in \Hk.
	\end{equation}
	Furthermore, one can see
	\begin{equation}\label{eqn:norm_phi_0}	
		\|\varphi_0\|_{\Hk}^2  
		%= \inner{\varphi_0}{\varphi_0}_{\Hk}
		=	
		\begin{cases}
			\sum_{s,t\in\Zbb_+}\kernel(t,s),   & \text{ if }\Tbb=\Zbb_+,\\		
			\int_{\Rbb_+\times\Rbb_+}	\kernel(t,s)\drm s\drm t, & \text{ if }\Tbb=\Rbb_+.\\		
		\end{cases}
	\end{equation}
\end{theorem}
\begin{proof}
See Appendix \ref{sec:appendix_proof_dc_bounded}.
\end{proof}
%========================================
Theorem \ref{thm:dc_bounded} says that integrability of kernel $\kernel$ implies that the steady-state operator $\ell_0:\Hk\to\Rbb$ is a linear continuous functional.
Accordingly, throughout this paper, we assume $\kernel$ is an integrable kernel.
More precisely, we make the following assumption.
\begin{assumption}\label{ass:kernel_nonzero_integral}
The kernel	$\kernel$ is an integrable Mercer kernel, for which there exists $\tau\in\Tbb$ such that 
$\sum_{s\in\Zbb_+}\kernel(\tau,s)\ne 0$, when $\Tbb=\Zbb_+$, or,   
$\int_{\Rbb_+}\kernel(\tau,s)\drm s\ne 0$, when $\Tbb=\Rbb_+$.
\end{assumption}
The following theorem describes topological properties of set $\Gscr_{\kernel}([\udelta,\odelta])$, and together with Theorem~\ref{thm:dc_bounded}, provides the necessary foundation to guarantee the introduced problem is well-defined.
%========================================
%========================================
\begin{theorem}\label{thm:G_nonempty_closed_convex}
Let Assumption \ref{ass:kernel_nonzero_integral} hold. 
Then,   for any $\udelta$ and $\odelta$ such that $-\infty\le\udelta\le\odelta\le\infty$, the set $\Gscr_{\kernel}([\udelta,\odelta])$ is a non-empty, closed and convex subset of $\Hk$.
\end{theorem}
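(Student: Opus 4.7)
The plan is to reduce the claim to an elementary observation about preimages of intervals under continuous linear functionals. By Theorems \ref{thm:dc_bounded_DT} and \ref{thm:dc_bounded_CT}, the integrability of $\kernel$ supplies a distinguished element $\varphi_0\in\Hk$ satisfying $\dc(\vcg) = \inner{\varphi_0}{\vcg}_{\Hk}$ for every $\vcg\in\Hk$. Consequently, $\Gscr_{\kernel}([\udelta,\odelta])$ is precisely the preimage of the interval $[\udelta,\odelta]\subseteq\Rbb$ under the linear map $\vcg\mapsto\inner{\varphi_0}{\vcg}_{\Hk}$, and all three topological properties will fall out of this observation together with the fact that $\varphi_0\neq\zero$.

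For non-emptiness, I would first invoke Assumption \ref{ass:kernel_nonzero_integral}: it yields some $\tau\in\Tbb$ with $\varphi_{0,\tau}\neq 0$, so $\varphi_0$ is nonzero in $\Hk$ and $\|\varphi_0\|_{\Hk}>0$ by \eqref{eqn:norm_phi_0_DT} or \eqref{eqn:norm_phi_0_CT}. Since $\udelta\le\odelta$ in the extended reals, the intersection $[\udelta,\odelta]\cap\Rbb$ is non-empty; pick any $\delta$ in it (taking $\delta=0$ if both endpoints are infinite) and set $\vcg := (\delta/\|\varphi_0\|_{\Hk}^2)\,\varphi_0$. Then $\dc(\vcg) = \inner{\varphi_0}{\vcg}_{\Hk} = \delta \in [\udelta,\odelta]$, exhibiting an explicit element of $\Gscr_{\kernel}([\udelta,\odelta])$.

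For closedness, I would note that the Cauchy--Schwarz inequality gives $|\dc(\vcg)| \le \|\varphi_0\|_{\Hk}\|\vcg\|_{\Hk}$ for all $\vcg\in\Hk$, so $\dc:\Hk\to\Rbb$ is a bounded, hence continuous, linear functional. Because $[\udelta,\odelta]$ is a closed subset of $\Rbb$ (including when one or both endpoints are infinite), the preimage $\Gscr_{\kernel}([\udelta,\odelta]) = \dc^{-1}([\udelta,\odelta])$ is closed in $\Hk$. For convexity, take $\vcg_1,\vcg_2\in\Gscr_{\kernel}([\udelta,\odelta])$ and $\theta\in[0,1]$; linearity of $\dc$ yields $\dc(\theta\vcg_1 + (1-\theta)\vcg_2) = \theta\dc(\vcg_1) + (1-\theta)\dc(\vcg_2)$, which lies in $[\udelta,\odelta]$ because the interval is convex.

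There is no genuinely hard step, since all the analytic work is already encoded in Theorems \ref{thm:dc_bounded_DT}--\ref{thm:dc_bounded_CT}. The only subtle point worth emphasizing is the role of Assumption \ref{ass:kernel_nonzero_integral}: without it one could have $\varphi_0 = \zero$, in which case $\dc\equiv 0$ on $\Hk$ and $\Gscr_{\kernel}([\udelta,\odelta])$ would degenerate into either all of $\Hk$ or $\emptyset$ depending on whether $0\in[\udelta,\odelta]$, breaking the non-emptiness claim in the latter case.
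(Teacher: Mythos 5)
Your proof is correct and follows essentially the same route as the paper: both rest on Theorems \ref{thm:dc_bounded_DT} and \ref{thm:dc_bounded_CT} to write $\dc(\vcg)=\inner{\varphi_0}{\vcg}_{\Hk}$, exhibit an explicit feasible element by scaling (you scale $\varphi_0$ by $\delta/\|\varphi_0\|_{\Hk}^2$, the paper scales $\kernel_\tau$ by $\delta/\dc(\kernel_\tau)$, and both nonvanishing conditions are exactly Assumption \ref{ass:kernel_nonzero_integral} since $\dc(\kernel_\tau)=\varphi_{0,\tau}$), and obtain closedness and convexity from the continuity and linearity of this functional (your preimage-of-a-closed-interval argument is the same as the paper's intersection of two closed half-spaces in \eqref{eqn:G_union}). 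No gaps beyond the degenerate case $\udelta=\odelta=\pm\infty$, which the paper's own proof also implicitly excludes.
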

\begin{proof}
See Appendix~\ref{sec:appendix_proof_G_nonempty_closed_convex}.
\end{proof}
%========================================
%========================================
\subsection{From Infinite to Finite Dimension}
The optimization problem \eqref{eqn:opt_dc_gain_1} is defined over the infinite-dimensional Hilbert space $\Hk$.
In the following, we show that \eqref{eqn:opt_dc_gain_1} admits a unique solution in $\Hk$.
Furthermore, we introduce an equivalent convex
finite-dimensional program which provides a tractable approach to address \eqref{eqn:opt_dc_gain_1}.
To this end, we need to introduce additional definitions and mathematical properties for \eqref{eqn:opt_dc_gain_1}. 
Theorem \ref{thm:G_nonempty_closed_convex} has provided suitable properties for the feasible set of \eqref{eqn:opt_dc_gain_1}. 
The next assumption and lemma provide foundations to show that the objective function in 
\eqref{eqn:opt_dc_gain_1} has desired features which are latter employed in the main theorem of this paper to show the existence and uniqueness for the solution of \eqref{eqn:opt_dc_gain_1}.
\begin{assumption}\label{ass:Lu_bounded}
The operator $\Lu{\tau}:\Hk\to\Rbb$ is continuous for each $\tau\in\Tscr$. 	
\end{assumption}
When $\Tbb=\Rbb_+$ and $\vcu$ is a step function as in \eqref{eqn:u_pw}, one can show the continuity of $\Lu{\tau}$, for any $\tau\in\Rbb_+$, based on an  argument similar to the proof of Theorem \ref{thm:dc_bounded}.
Also, for the case of $\Tbb=\Zbb_+$, one can easily see that Assumption \ref{ass:Lu_bounded} holds if the system is initially at rest, or more generally, when $(u_t)_{t\le t_{\nD-1}}$ is finitely non-zero.
Given this assumption, we have the following theorem.
\begin{lemma}\label{thm:Lu_bounded}
	Let Assumption \ref{ass:Lu_bounded} hold. Then, for each $\tau\in\Tscr$,
	there exists $\phiu{\tau}=(\phiu{\tau,t})_{t\in\Tbb}\in\Hk$ such that
	\begin{equation}\label{eqn:Lu_bounded}
		\Lu{\tau}(\vcg) = \inner{\phiu{\tau}}{\vcg}_{\Hk}, \quad \forall\vcg\in\Hk.
	\end{equation}
	Furthermore, for any $t\in\Tbb$, 
	we have 
	\begin{equation}\label{eqn:phiu_tau_t}
		\phiu{\tau,t} =  
		\begin{cases}
			\int_{\Rbb_+}\kernel(t,s)u_{\tau-s}\drm s,
			&
			\text{ if } \Tbb=\Rbb_+,
			\\
			\sum_{s\in\Zbb_+}\kernel(t,s)u_{\tau-s},
			&
			\text{ if } \Tbb=\Zbb_+.
		\end{cases}
	\end{equation}
\end{lemma}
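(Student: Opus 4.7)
The plan is to obtain the representer $\phiu{\tau}$ by direct appeal to the Riesz representation theorem, and then recover the pointwise formula \eqref{eqn:phiu_tau_t} by testing the representer against the sections of the kernel via the reproducing property.

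First, I would observe that $\Lu{\tau}$ is linear in its argument (immediate from its definition as a sum/integral) and, by Assumption \ref{ass:Lu_bounded}, continuous on the Hilbert space $\Hk$. The Riesz representation theorem then yields a unique element $\phiu{\tau}\in\Hk$ such that \eqref{eqn:Lu_bounded} holds for every $\vcg\in\Hk$. This takes care of existence with no further work, and simultaneously makes $\phiu{\tau}$ a legitimate element of $\Rbb^{\Tbb}$, so that its pointwise values $\phiu{\tau,t}$ are meaningful.

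Next, to extract formula \eqref{eqn:phiu_tau_t}, I would fix $t\in\Tbb$ and substitute $\vcg = \kernel_t$ into \eqref{eqn:Lu_bounded}. On one hand, by the reproducing property (Theorem \ref{thm:kernel_to_RKHS_def}, applied to $\phiu{\tau}$), the right-hand side gives
\begin{equation*}
\inner{\phiu{\tau}}{\kernel_t}_{\Hk} = \phiu{\tau,t}.
\end{equation*}
On the other hand, since $\kernel_t\in\Hk\subseteq\Lscrone$ (the kernel is stable under Assumption \ref{ass:kernel_nonzero_integral}), we may evaluate $\Lu{\tau}(\kernel_t)$ directly from its definition: in the discrete-time case this is $\sum_{s\in\Zbb_+}\kernel(t,s)u_{\tau-s}$, and in the continuous-time case it is $\int_{\Rbb_+}\kernel(t,s)u_{\tau-s}\drm s$. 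Equating the two expressions yields exactly \eqref{eqn:phiu_tau_t}.

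I expect no serious technical obstacle here, since the deep work has already been absorbed into Assumption \ref{ass:Lu_bounded} and into the standing integrability assumption on $\kernel$. The only small point to be careful about is well-posedness of $\Lu{\tau}(\kernel_t)$ as a sum/integral; this follows because $\kernel_t\in\Lscrone$ (guaranteed by the stability consequence of integrability established right after Definition \ref{def:integrable_kernel}) together with $\vcu\in\Lscrinfty$, which makes $(u_{\tau-s}\kernel(t,s))_{s}$ absolutely summable/integrable by H\"older's inequality. Thus the substitution step is rigorously justified, and the proof concludes.
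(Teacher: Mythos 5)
Your proposal is correct and follows essentially the same route as the paper's own proof: Riesz representation for existence of $\phiu{\tau}$, then the reproducing property applied to $\kernel_t$ together with the definition of $\Lu{\tau}$ to obtain \eqref{eqn:phiu_tau_t}. The extra remark on absolute summability/integrability of $(u_{\tau-s}\kernel(t,s))_s$ is a harmless (and welcome) elaboration of a step the paper leaves implicit.
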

\begin{proof}
	See Appendix \ref{sec:appendix_proof_Lu_bounded}.
\end{proof}
%========================================
%Recall that index set $\Ical$ is defined as $\Ical=\{i_k|k=1,\ldots,\nI\}$.
Recall the index set $\Ical=\{i_k|k=1,\ldots,\nI\}$  introduced in Section~\ref{ssec:empirical_loss_regularization}.
%To simplify notation, let $\phiu{t_i}$ be denoted by $\varphi_i$, for $i=1,\ldots,\nD$.
For $k=1,\ldots,\nI$, let $\varphi_k$ be defined as $\phiu{t_i}$ with $i=i_k$.
Accordingly, we define matrices $\Phi$ and $\mxA$ respectively as 
\begin{equation}\label{eqn:Phi}
\Phi:=
\big[\inner{\varphi_i}{\varphi_j}_{\Hk}\big]_{i=0,j=0}^{\nI,\nI}
\in\Rbb^{(\nI+1)\times(\nI+1)},
\end{equation}
and 
\begin{equation}\label{eqn:A}
	\mxA:=
	\big[\inner{\varphi_i}{\varphi_j}_{\Hk}\big]_{i=1,j=0}^{\nI,\nI}
	\in\Rbb^{\nI\times(\nI+1)}.
\end{equation}
Note that $\mxA$ is a sub-matrix of $\Phi$ which contains the rows corresponding to the index set $\Ical$. % introduced in Section~\ref{ssec:empirical_loss_regularization}.
Denote the columns of $\Phi$ by  $\vca_0,\ldots,\vca_{\nI}\in\Rbb^{\nI+1}$, i.e., we have $\Phi=[\vca_0,\ldots,\vca_{\nI}]$.
Since $\Phi$ is a symmetric matrix, one can see that $\mxA=[\vca_1,\ldots,\vca_{\nI}]^\tr\!$.

Given the above definitions and theorems, we can present our main theorem.
\begin{theorem}\label{thm:main_thm}
Let Assumption \ref{ass:kernel_nonzero_integral} and  
Assumption \ref{ass:Lu_bounded} hold. Then, the optimization problem \eqref{eqn:opt_dc_gain_1} admits a \emph{unique} solution $\gstar$.
Moreover, there exists
$\xstar=[\xstart{0},\ldots,\xstart{\nI}]^\tr\in\Rbb^{\nI+1}$ such that
\begin{equation}\label{eqn:gstar_parametric_form}
	\gstar = 
	\xstart{0}\varphi_0 + \ldots +\xstart{\nI}\varphi_{\nI}  = 
	\sum_{i=0}^{\nI}\xstart{i}\varphi_i.
\end{equation}
Furthermore, $\xstar$ is the solution of following convex program
\begin{equation}\label{eqn:opt_dc_gain_2}
	\begin{array}{cl}
		\minOp_{\vcx\in\Rbb^{\nI+1}} & \ \ell(\mxA\vcx,\vcy_{\Ical})
		+
		\lambda \Rcal\big((\vcx^\tr\Phi\vcx)^{\frac12}\big)\\
		\mathrm{s.t.} & \ \vca_0^\tr\vcx\in[\udelta,\odelta].
	\end{array}	
\end{equation}
\end{theorem}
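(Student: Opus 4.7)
The plan is to reduce \eqref{eqn:opt_dc_gain_1} to a finite-dimensional convex program via the Representer Theorem (Theorem \ref{thm:rep_thm}). The first step is to encode every dependence on $\vcg$ as an inner product in $\Hk$: by Theorems \ref{thm:dc_bounded_DT}--\ref{thm:dc_bounded_CT} the constraint $\vcg\in\Gscr_{\kernel}([\udelta,\odelta])$ reads $\inner{\varphi_0}{\vcg}_{\Hk}\in[\udelta,\odelta]$, and by Lemma \ref{thm:Lu_bounded} each $\Lu{t_{i_k}}(\vcg)=\inner{\varphi_k}{\vcg}_{\Hk}$ for $k=1,\ldots,\nI$. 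Defining the convex function $e(z_0,z_1,\ldots,z_{\nI}):=\ell((z_1,\ldots,z_{\nI}),\vcy_{\Ical})+\delta_{[\udelta,\odelta]}(z_0)$, problem \eqref{eqn:opt_dc_gain_1} becomes $\min_{\vcg\in\Hk}\ e(\inner{\varphi_0}{\vcg}_{\Hk},\ldots,\inner{\varphi_{\nI}}{\vcg}_{\Hk})+\lambda\rho(\|\vcg\|_{\Hk})$, matching the hypothesis of Theorem \ref{thm:rep_thm}.

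For existence, Theorem \ref{thm:G_nonempty_closed_convex} gives that $\Gscr_{\kernel}([\udelta,\odelta])$ is non-empty, closed, and convex; the objective $\Jcal$ is convex (a convex $\ell$ composed with continuous linear functionals, plus a convex non-decreasing $\rho$ composed with the Hilbert norm) and coercive (since $\rho$ is strictly increasing convex on $\Rbb_+$, $\rho(t)\to\infty$ as $t\to\infty$), and weakly lower semi-continuous on the reflexive Hilbert space $\Hk$, so the direct method delivers a minimizer $\gstar$. To place $\gstar$ in $\Vscr:=\linspan\{\varphi_0,\varphi_1,\ldots,\varphi_{\nI}\}$, decompose $\gstar=\gstar_{\parallel}+\gstar_{\perp}$ with $\gstar_{\parallel}\in\Vscr$ and $\gstar_{\perp}\in\Vscr^{\perp}$. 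Since $\inner{\varphi_i}{\gstar}_{\Hk}=\inner{\varphi_i}{\gstar_{\parallel}}_{\Hk}$, the component $\gstar_{\parallel}$ is feasible with the same empirical loss, while the Pythagorean identity $\|\gstar\|_{\Hk}^{2}=\|\gstar_{\parallel}\|_{\Hk}^{2}+\|\gstar_{\perp}\|_{\Hk}^{2}$ together with strict monotonicity of $\rho$ yields $\Rcal(\gstar_{\parallel})\le\Rcal(\gstar)$ with strict inequality unless $\gstar_{\perp}=0$; optimality forces $\gstar_{\perp}=0$, hence $\gstar\in\Vscr$.

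Writing $\gstar=\sum_{i=0}^{\nI}\xstart{i}\varphi_i$ and invoking the definitions of $\Phi$, $\mxA$, and $\vca_0$, direct substitution recovers $\dc(\gstar)=\vca_0^\tr\xstar$, $[\Lu{t_{i_k}}(\gstar)]_{k=1}^{\nI}=\mxA\xstar$, and $\|\gstar\|_{\Hk}^{2}=\xstar^\tr\Phi\xstar$, which is exactly \eqref{eqn:opt_dc_gain_2}; conversely, any feasible $\vcx$ for \eqref{eqn:opt_dc_gain_2} produces $\sum_i x_i\varphi_i\in\Hk$ feasible for \eqref{eqn:opt_dc_gain_1} with the same objective value, so the two programs are equivalent and $\xstar$ solves \eqref{eqn:opt_dc_gain_2}.

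For uniqueness of $\gstar$, suppose $\gstar_1,\gstar_2\in\Vscr$ are both optimal; by convexity their midpoint is optimal too, which forces equality throughout the chain $\rho(\|\tfrac{1}{2}(\gstar_1+\gstar_2)\|_{\Hk})\le\rho(\tfrac{1}{2}(\|\gstar_1\|_{\Hk}+\|\gstar_2\|_{\Hk}))\le\tfrac{1}{2}(\rho(\|\gstar_1\|_{\Hk})+\rho(\|\gstar_2\|_{\Hk}))$. Strict monotonicity of $\rho$ forces triangle equality in the Hilbert norm, so $\gstar_1$ and $\gstar_2$ must be positively collinear; combining this one-dimensional reduction with the corresponding midpoint equality for the convex $\Ecal_\ell$ and the membership $\dc(\gstar_i)\in[\udelta,\odelta]$ is what pins down $\gstar_1=\gstar_2$. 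I expect this uniqueness step to be the main obstacle, since neither $\rho$ nor $\ell$ is assumed strictly convex: the essential trick is that Hilbert-space collinearity plus the loss-equality at the midpoint plus the side-information bound jointly rule out proportional pairs with ratio other than one.
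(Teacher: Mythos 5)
Your reduction to the finite-dimensional program is sound and follows the paper's route: represent $\dc$ and the $\Lu{t_i}$ as inner products with $\varphi_0,\ldots,\varphi_{\nI}$, absorb the constraint into an extended-real-valued function $e$, invoke the representer theorem (your explicit orthogonal decomposition $\gstar=\gstar_{\parallel}+\gstar_{\perp}$ is exactly the mechanism behind Theorem~\ref{thm:rep_thm}), and substitute the parametric form to obtain \eqref{eqn:opt_dc_gain_2}. Your existence argument via coercivity and weak lower semicontinuity on the reflexive space $\Hk$ is more self-contained than the paper's, which simply cites a standard result for proper, lower semicontinuous, (strictly) convex functions; both are acceptable.

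The genuine gap is uniqueness, and you have correctly located it but not closed it. Your equality chain does force $\gstar_1$ and $\gstar_2$ to be non-negatively collinear, say $\gstar_2=c\,\gstar_1$ with $c\ge 0$; however, the additional ingredients you invoke (midpoint equality for $\Ecal_\ell$ and membership of $\dc(\gstar_i)$ in $[\udelta,\odelta]$) do \emph{not} rule out $c\ne 1$ in general. Concretely, take $\rho(r)=r$, $\nI=1$, $\ell(v,y)=|v-y|$ with $y_{t_1}=1$, and suppose $\varphi_0=\varphi_1=\varphi$ with $\|\varphi\|_{\Hk}=1$ and $[\udelta,\odelta]$ wide enough to be inactive: every $\vcg=x\varphi$ with $x\in[0,1]$ attains the minimal value $1$ of $|x-1|+|x|$, so the minimizer is a whole segment. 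Thus no argument can establish uniqueness under the stated hypotheses alone; some strict convexity must enter. The paper's own proof obtains it by asserting that $\Rcal=\rho(\|\cdot\|_{\Hk})$ is strictly convex ``since $\rho$ is strictly increasing,'' which is precisely the step your counel instinct flags: strict monotonicity of $\rho$ is not enough (the example $\rho(r)=r$ above defeats it). The repair is to require $\rho$ strictly convex (as in the standard choice $\rho(r)=r^2$): then your chain closes, because equality in $\rho\big(\tfrac12(\|\gstar_1\|+\|\gstar_2\|)\big)\le\tfrac12\big(\rho(\|\gstar_1\|)+\rho(\|\gstar_2\|)\big)$ forces $\|\gstar_1\|=\|\gstar_2\|$, which together with non-negative collinearity gives $\gstar_1=\gstar_2$ without any appeal to $\ell$ or to the side-information bound. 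You should either add that hypothesis or state explicitly that uniqueness is claimed only under it.
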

\begin{proof}
See Appendix \ref{sec:appendix_proof_main_thm}.
\end{proof}

%A special case of interest is when the dc-gain of the system is known and the employed empirical loss and the regularization are respectively the sum of squared error and the square of the norm of the impulse response of the system. This case corresponds to the following convex program
In the literature, it is common to employ 
the empirical loss \eqref{eqn:Emprical_loss_sum_of_squared_error} and the regularization function $\Rcal(\vcg)=\|\vcg\|_{\Hk}^2$.
When the steady-state gain of the system is known to be $\delta\in\Rbb$, the resulting impulse response estimation problem is as following
\begin{equation}\label{eqn:opt_dc_gain_3}\ra{1.2}
\begin{array}{cl}
\minOp_{\vcg\in\Hk} & \ \sumOp_{i=1}^{\nD}\big(\Lu{t_i}(\vcg) - y_{t_i} \big)^2+\lambda \big\|\vcg\big\|_{\Hk}^2
\\
\mathrm{s.t.} & \ \dc(\vcg) = \delta.
\end{array}	
\end{equation}
The next corollary provides a closed-form solution for this optimization problem.
\begin{corollary}\label{cor:main_thm_SE}
Under the assumptions of Theorem \ref{thm:main_thm}, the convex program \eqref{eqn:opt_dc_gain_3} has a \emph{unique} solution $\gstar$.
Moreover, there exist
$\xstar=[\xstart{0},\ldots,\xstart{\nD}]^\tr\in\Rbb^{\nD+1}$ and $\lambda\in\Rbb$ such that  $\gstar$ has the parametric form \eqref{eqn:gstar_parametric_form} and $[\xstar{}^\tr,\gamma^{\star}]^\tr$ is a solution of the following system of linear equations
\begin{equation}\label{eqn:QP_KKT}
	\begin{bmatrix}
	\mx{Q}&\vca_0\\	
	\vca_0^\tr&0	
	\end{bmatrix}
	\begin{bmatrix}
	\vcx\\\gamma
	\end{bmatrix}
	=
	\begin{bmatrix}
	\mxA^\tr\vcy\\
	\delta	
	\end{bmatrix},
\end{equation} 
where $\mx{Q}=\mxA^\tr\mxA+\lambda\Phi$ and $\vcy=[y_{t_i}]_{i=1}^{\nD}\in\Rbb^{\nD}$.
Furthermore, when $\varphi_0,\ldots,\varphi_{\nD}$ are linearly independent, we have
\begin{equation}\label{eqn:xstar}
\xstar = 
\mx{Q}^{-1}\mxA^\tr\vcy + 
\frac{\delta-\vca_0^\tr\mx{Q}^{-1}\mxA^\tr\vcy}
{\vca_0^\tr\mx{Q}^{-1}\vca_0}
\mx{Q}\vca_0.
\end{equation}
\end{corollary}	
\begin{proof}
See Appendix \ref{sec:appendix_proof_main_thm_SE}.	
\end{proof}	
\section{The Optimization Problem: Settings and Algorithm}
Based on Theorem \ref{thm:main_thm}, addressing the estimation problem \ref{prob:ID_w_DC-gain_SI}, or equivalently, optimization problem \eqref{eqn:opt_dc_gain_1}, reduces to solving the convex program \eqref{eqn:opt_dc_gain_2}. In this section, we discuss how to configure this optimization problem.

The main elements of optimization problems \eqref{eqn:opt_dc_gain_2} are $\vca_0$, $\mxA$, and $\Phi$.
We know that $\mxA$ is a sub-matrix of $\Phi$ and $\vca_0$ is the first column of $\Phi$. Hence, it suffices to calculate the matrix $\Phi$. According to \eqref{eqn:Phi}, the entries of $\Phi$ are inner products $\inner{\varphi_i}{\varphi_j}_{\Hk}$, for $i,j\in\{0,\ldots,\nD\}$. 
To obtain the value of these inner products, we need to calculate improper double integrals when $\Tbb=\Rbb_+$, or infinite double summations when $\Tbb=\Zbb_+$. 
In general, these calculations can be performed using techniques such as numerical integration.
On the other hand, these values can be obtained analytically in certain but fairly general situations. For example,   
%when we are dealing with a discrete-time system which is initially at rest, or 
when $\Tbb=\Zbb_+$ and the system is initially at rest, or,
when $\Tbb=\Rbb_+$, the standard kernels are employed and the input of the system is a step function.   
In the remainder of this section, the details of these calculations are discussed.
%========================================
%\newpage %\ \newpage
\subsection{Optimization Problem Configuration: Discrete-Time Case}
Let $\Tbb=\Zbb_+$ and the system be initially at rest, i.e., we have $u_t=0$, for $t<0$.
Also, let the measurement time instants  be  $\Tscr=\{0,1,\ldots,\nD-1\}$.
Given an integrable kernel $\kernel:\Zbb_+\times\Zbb_+\to\Rbb$ and an input $\vcu$, define matrices $\mx{K},\mx{T}_{\vc{u}}\in\Rbb^{\nD\times\nD}$  such that
\begin{equation}
[\mx{K}]_{(i,j)}=\kernel(i-1,j-1), \quad \forall i,j\in\{1,2,\ldots,\nD\},
\end{equation}
and
\begin{equation}
[\mx{T}_{\vc{u}}]_{(i,j)}=u_{i-j},  \quad \forall i,j\in\{1,2,\ldots,\nD\}.
\end{equation}
%$[\mx{K}]_{(i,j)}=\kernel(i-1,j-1)$ and  
%$[\mx{T}_{\vc{u}}]_{(i,j)}=u_{i-j}$, for any $i,j\in\{1,2,\ldots,\nD\}$.
%Define matrix $\mx{K}\in\Rbb^{\nD\times\nD}$  as $\mx{K}:=\begin{bmatrix}\kernel(s-1,t-1)\end{bmatrix}_{s,t=1}^{\nD,\nD}$ and  Toeplitz matrix $\mx{T}_{\vc{u}}\in\Rbb^{\nD\times\nD}$  as $\mx{T}_{\vc{u}}:=\begin{bmatrix}u_{i-j}\end{bmatrix}_{i,j=1}^{\nD,\nD}$. 
Following these definitions, we have the next theorem.
\begin{theorem}\label{thm:Phi_DT}
Let $\varphi\in\Rbb^{\nD}$ be the column vector defined as $\varphi:=[\varphi_{0,i}]_{i=0}^{\nD-1}$.
Then, we have
\begin{equation}
\Phi =
\begin{bmatrix}
\|\varphi_0\|^2
&
\varphi \mx{T}_{\vcu}^\tr\\
\mx{T}_{\vcu}\varphi
&
\mx{T}_{\vc{u}}\mx{K}\mx{T}_{\vc{u}}^\tr
\end{bmatrix}.
\end{equation}
\end{theorem}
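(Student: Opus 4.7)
The plan is to compute each block of the Gram matrix $\Phi$ directly from the explicit representer formulas provided by Theorem~\ref{thm:dc_bounded_DT} and Lemma~\ref{thm:Lu_bounded}, and then identify the resulting finite sums with the stated matrix products. Recall that $\Phi$ is a $(\nD+1)\times(\nD+1)$ matrix whose entry at position $(i,j)$ with $i,j\in\{0,1,\ldots,\nD\}$ is $\inner{\varphi_i}{\varphi_j}_{\Hk}$, where $\varphi_0$ is the DC-gain representer from Theorem~\ref{thm:dc_bounded_DT} and $\varphi_k=\phiu{t_k}=\phiu{k-1}$ for $k=1,\ldots,\nD$ is the measurement representer from Lemma~\ref{thm:Lu_bounded}. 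The $(0,0)$ entry is simply $\|\varphi_0\|_{\Hk}^2$ by definition.

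For the top row and left column, I would use that $\Lu{\tau}(\vcg)=\inner{\phiu{\tau}}{\vcg}_{\Hk}$ (Lemma~\ref{thm:Lu_bounded}) to write, for $j\ge 1$,
\begin{equation*}
	\inner{\varphi_0}{\varphi_j}_{\Hk}
	= \Lu{t_j}(\varphi_0)
	= \sum_{s\in\Zbb_+}\varphi_{0,s}\,u_{t_j-s}.
\end{equation*}
The zero-initial-condition assumption $u_t=0$ for $t<0$ truncates the sum to $s\in\{0,\ldots,t_j\}\subseteq\{0,\ldots,\nD-1\}$, and with the reindexing $k=s+1$ the sum becomes $\sum_{k=1}^{\nD}[\varphi]_{(k)}[\mx{T}_{\vcu}]_{(j,k)}=[\mx{T}_{\vcu}\varphi]_{(j)}$, which matches the claimed column block and, by symmetry of $\Phi$, the row block.

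For the bottom-right block, I would unfold both representers: for $i,j\ge 1$,
\begin{equation*}
	\inner{\varphi_i}{\varphi_j}_{\Hk}
	= \Lu{t_i}(\varphi_j)
	= \sum_{s\in\Zbb_+}\varphi_{j,s}\,u_{t_i-s},
\end{equation*}
and then use the explicit formula $\varphi_{j,s}=\phiu{t_j,s}=\sum_{r\in\Zbb_+}\kernel(s,r)u_{t_j-r}$ from Lemma~\ref{thm:Lu_bounded} to obtain the double sum
\begin{equation*}
	\inner{\varphi_i}{\varphi_j}_{\Hk}
	= \sum_{s,r\in\Zbb_+} u_{t_i-s}\,\kernel(s,r)\,u_{t_j-r}.
\end{equation*}
Once again, initial rest truncates both summation ranges to $\{0,\ldots,\nD-1\}$; after reindexing $k=s+1$, $\ell=r+1$, the integrand becomes $[\mx{T}_{\vcu}]_{(i,k)}[\mx{K}]_{(k,\ell)}[\mx{T}_{\vcu}]_{(j,\ell)}$, which is precisely the $(i,j)$ entry of $\mx{T}_{\vcu}\mx{K}\mx{T}_{\vcu}^{\tr}$. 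Assembling the three identifications into the $2\times 2$ block form yields the stated expression for $\Phi$.

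This proof is essentially a bookkeeping exercise; no genuine analytic obstacle arises because the representers are already given in closed form by Lemma~\ref{thm:Lu_bounded} and the initial-rest hypothesis immediately collapses the a~priori infinite sums to finite ones of the right dimension. The only delicate point is aligning the zero-based time index set $\Tscr=\{0,1,\ldots,\nD-1\}$ with the one-based indexing used for $\mx{K}$ and $\mx{T}_{\vcu}$, so that the index shifts $k=s+1$, $\ell=r+1$ correctly match $\kernel(s,r)$ with $[\mx{K}]_{(k,\ell)}$ and $u_{t_i-s}$ with $[\mx{T}_{\vcu}]_{(i,k)}$; once this is in place, the block structure falls out directly.
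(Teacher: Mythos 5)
Your proposal is correct and follows essentially the same route as the paper's proof: apply Lemma~\ref{thm:Lu_bounded} to write each inner product as $\Lu{t_i}(\cdot)$, expand the representers, use the initial-rest condition to truncate the sums to $\{0,\ldots,\nD-1\}$, identify the resulting finite sums with entries of $\mx{T}_{\vcu}\varphi$ and $\mx{T}_{\vcu}\mx{K}\mx{T}_{\vcu}^\tr$, and finish by symmetry of $\Phi$. The index bookkeeping you flag ($t_k=k-1$, shift by one to match the one-based matrix indexing) is handled consistently, so no gap remains.
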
	
\begin{proof}
See Appendix \ref{sec:appendix_proof_Phi_DT}.
\end{proof}
\begin{remark}\label{rem:Phi_DT}
Appendix \ref{sec:phi_0t_norm_phi_0_kTC_kDC_kSS} provides $\varphi_0$ and $\|\varphi_0\|^2_{\Hk}$ for the standard kernels introduced in \eqref{eqn:DC_kernel}, \eqref{eqn:TC_kernel}, and \eqref{eqn:SS_kernel}, when $\Tbb=\Zbb_+$.
\end{remark}	
%========================================
%\newpage %\ \newpage
\subsection{Optimization Problem Configuration: Continuous-Time Case}
The set of step functions is dense in $\Lp(\Rbb)$, for $p\in[1,\infty)$, and also, any function in $\Linfty(\Rbb)$ is an almost everywhere the point-wise limit of a sequence of step functions \cite{brezis2011functional}. In other words, any signal of interest can be approximated arbitrarily closely by step functions.
Accordingly, for the case of $\Tbb=\Rbb_+$, one can assume that the input signal $\vc{u} = (u_t)_{t\in\Rbb_+}$ is a step function. More precisely, there exist $\nS\in\Nbb$ real scalars $\xi_1,\ldots,\xi_{\nS}$
and a finite increasing sequence $(s_0,s_1,\ldots,s_{\nS})$ in $\Rbb_+$ such that we have
\begin{equation}\label{eqn:u_pw}
	u_t=\sum_{i=0}^{\nS-1} \xi_{i+1} \one_{[s_i,s_{i+1})}(t), \quad \forall t\in\Rbb_+.
\end{equation}
For $\vcu= (u_t)_{t\in\Rbb_+}$ given in \eqref{eqn:u_pw},  a closed-form for $\phiu{\tau}$ can be introduced. 
%and $\inner{\phiu{\tau_1}}{\phiu{\tau_2}}$ in a closed form, for any $\tau,\tau_1,\tau_2\in\Rbb_+$.
To this end, we require the function $\psi:\Rbb_+\times\Rbb_+\times\Rbb_+\to\Rbb$ defined as 
\begin{equation}\label{eqn:psi_def_general}
	\psi(t,a,b):= 	\int_{a}^{b}\kernel(t,s)\drm s, 
\end{equation} 
for any $a,b,t\in\Rbb_+$.
This function is denoted by $\psiTC$, $\psiDC$ or $\psiSS$ respectively when 
$\kernel$ is $\kernelTC$, $\kernelDC$ or $\kernelSS$.
\begin{theorem}\label{thm:phiu_tau_t_general}
	For any $t\in\Rbb_+$, we have
	\begin{equation}
		\label{eqn:phiu_taut_pw}
		\phiu{\tau,t} =
		\sum_{i=0}^{\nS-1} \xi_{i+1} \psi(t,\bar{s}_{i+1}(\tau),\bar{s}_{i}(\tau)),
	\end{equation}
	where, for  $i=0,\ldots,\nS$, function $\bar{s}_{i}:\Rbb_+\to\Rbb_+$ is defined as $\bar{s}_{i}(\tau):=\max(\tau-s_i,0)$, for any $\tau\in\Rbb_+$.
\end{theorem}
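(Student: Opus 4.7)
The plan is to reduce the claim to a direct substitution of the piecewise-constant input into the integral representation of $\phiu{\tau,t}$ supplied by Lemma \ref{thm:Lu_bounded}. In the continuous-time setting this reads
\[
\phiu{\tau,t} = \int_{\Rbb_+} \kernel(t,s)\, u_{\tau-s}\, \drm s,
\]
and under the causality convention $u_r = 0$ for $r < 0$ (implicit in the setup of Section~\ref{sec:problem_statement}; the same convention is what makes Assumption \ref{ass:Lu_bounded} hold for step inputs) the integrand vanishes whenever $s > \tau$, so the effective range of integration collapses to $[0,\tau]$.

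Next I would substitute the explicit form \eqref{eqn:u_pw}. Since the sum defining $\vcu$ has only $\nS$ terms, exchanging the finite sum with the integral is immediate and produces
\[
\phiu{\tau,t} = \sum_{i=0}^{\nS-1} \xi_{i+1} \int_{E_i(\tau)} \kernel(t,s)\, \drm s,
\]
where $E_i(\tau) := \{\, s \in [0,\tau] : \tau - s \in [s_i, s_{i+1})\,\}$. Inverting the membership condition gives $s \in (\tau - s_{i+1},\, \tau - s_i]$, so $E_i(\tau)$ is this half-open interval intersected with $[0,\tau]$.

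The step that requires care is identifying $E_i(\tau)$, up to a null set, with $[\bar{s}_{i+1}(\tau), \bar{s}_i(\tau)]$ for every relative position of $\tau$ and the breakpoints. A three-way case split settles it: (i) if $\tau \ge s_{i+1}$, both endpoints are unclipped and $E_i(\tau) = (\tau - s_{i+1}, \tau - s_i]$; (ii) if $s_i \le \tau < s_{i+1}$, the lower endpoint is clipped to $0$ and $E_i(\tau) = [0, \tau - s_i]$; (iii) if $\tau < s_i$, then $\bar{s}_{i+1}(\tau) = \bar{s}_i(\tau) = 0$ and $E_i(\tau) = \emptyset$, so the $i$-th integral is zero. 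Invoking the definition $\psi(t,a,b) = \int_a^b \kernel(t,s)\, \drm s$ then turns the $i$-th contribution into $\xi_{i+1}\,\psi\bigl(t, \bar{s}_{i+1}(\tau), \bar{s}_i(\tau)\bigr)$, which is precisely \eqref{eqn:phiu_taut_pw}. The only real obstacle is this uniform handling of the boundary effects; the truncation $\bar{s}_i(\tau) = \max(\tau - s_i, 0)$ is introduced exactly so that all three cases collapse into a single index-uniform formula.
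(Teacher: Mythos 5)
Your proposal is correct and follows essentially the same route as the paper: substitute the representation of $\phiu{\tau,t}$ from Lemma \ref{thm:Lu_bounded}, plug in the step-function form \eqref{eqn:u_pw}, swap the finite sum with the integral, and recognize each resulting integration region as $[\bar{s}_{i+1}(\tau),\bar{s}_{i}(\tau)]$ so that the definition of $\psi$ applies. The only difference is that you spell out the three-way case analysis for the truncation $\bar{s}_i(\tau)=\max(\tau-s_i,0)$, which the paper leaves implicit; this is a welcome but inessential elaboration.
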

\begin{proof}
See Appendix \ref{sec:appendix_proof_phiu_tau_t_general}.
\end{proof}	
For the standard kernels defined in \eqref{eqn:DC_kernel}, \eqref{eqn:TC_kernel} and \eqref{eqn:SS_kernel}, 
one can obtain the closed-form of $\phiu{\tau}$ using \eqref{eqn:phiu_taut_pw}.
To this end, we need $\psiTC$, $\psiDC$, and $\psiSS$ which are provided by the next theorem. 
%======================================== 
%\begin{theorem}\label{thm:psi_TC_DC_SS}
%Define the function $\eta:\Rbb_+\times\Rbb_+\times\Rbb_+\to\Rbb$ as 
%\begin{equation}
%	\eta(s,\tau_1,\tau_2) = \min(\max(t,\tau_1),\tau_2),	
%\end{equation}
%for any $s,\tau_1,\tau_2\in\Rbb_+$, and let $t,a,b\in\Rbb_+$ such that $a\le b$.
%\\\rm{i)} For kernel $\kernelTC$, we have
%\begin{equation}\label{eqn:psi_TC}
%	\psiTC(t,a,b)=
%	\big(\eta(t,a,b)-a\big)\alpha^t
%	+	\frac{\alpha^b-\alpha^{\eta(t,a,b)}}
%	{\ln(\alpha)}.
%\end{equation}
%\\\rm{ii)} For kernel $\kernelDC$, we have
%\begin{equation}\label{eqn:psi_DC}
%\begin{split}
%	\psiDC(t,a,b) & =
%	\frac{\gamma^{-a}-\gamma^{-\eta(t,a,b)}}{\ln(\gamma)}
%	(\alpha\gamma)^t
%	\\&\qquad
%	+	\frac{(\alpha\gamma)^b-(\alpha\gamma)^{\eta(t,a,b)}}{\ln(\alpha\gamma)}
%	\gamma^{-t}.
%\end{split}
%\end{equation}
%\\\rm{iii)} For kernel $\kernelSS$, we have
%\begin{equation}\label{eqn:psi_SS}
%\begin{split}\!\!\!\!
%	\psiSS(t,a,b)&=
%	\frac{\alpha^{\eta(t,a,b)}-\alpha^{a}}{\ln(\alpha)}
%	\alpha^{2t}
%	+
%	\frac{\alpha^{2b}-\alpha^{2\eta(t,a,b)}}{2\ln(\alpha)}
%	\alpha^{t}
%	\\& \ \ \ \
%	-\frac13\big(\eta(t,a,b)-a\big)\alpha^{3t}
%	\!-\!	\frac{\alpha^{3b}-\alpha^{3\eta(t,a,b)}}{9\ln(\alpha)}.
%\end{split}
%\end{equation}
%\end{theorem}
\begin{theorem}\label{thm:psi_TC_DC_SS}
	Define the function $\eta:\Rbb_+\times\Rbb_+\times\Rbb_+\to\Rbb$ as 
	\begin{equation}
		\eta(s,\tau_1,\tau_2) = \min(\max(t,\tau_1),\tau_2),	
	\end{equation}
	for any $s,\tau_1,\tau_2\in\Rbb_+$, and let $t,a,b\in\Rbb_+$ such that $a\le b$.
	Then, we have
	\begin{align}\label{eqn:psi_TC}
		\!\!\!\!
		\psiTC(t,a,b) & \!=\!
		\big(\eta(t,a,b)-a\big)\alpha^t
		+	\frac{\alpha^b-\alpha^{\eta(t,a,b)}}
		{\ln(\alpha)},
		\\
		\!\!\!\!
		\psiDC(t,a,b) & \!=\!
			\frac{\gamma^{-a}-\gamma^{-\eta(t,a,b)}}{\ln(\gamma)}
			(\alpha\gamma)^t
			\notag
			\\&\qquad
			+	\frac{(\alpha\gamma)^b-(\alpha\gamma)^{\eta(t,a,b)}}{\ln(\alpha\gamma)}
			\gamma^{-t},
		\label{eqn:psi_DC}
		\\		
		\!\!\!\!
		\psiSS(t,a,b)& \!=\!
			\frac{\alpha^{\eta(t,a,b)}-\alpha^{a}}{\ln(\alpha)}
			\alpha^{2t}
			+
			\frac{\alpha^{2b}-\alpha^{2\eta(t,a,b)}}{2\ln(\alpha)}
			\alpha^{t}
			\notag\\& \ \ \
			-\frac13\big(\eta(t,a,b)-a\big)\alpha^{3t}
			\!-\!	\frac{\alpha^{3b}-\alpha^{3\eta(t,a,b)}}{9\ln(\alpha)}.
	\label{eqn:psi_SS}
	\end{align}
\end{theorem}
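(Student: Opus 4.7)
The plan is to reduce each of the three identities to elementary computations of antiderivatives of exponentials in $s$, after splitting the integral at the critical point $s=t$ where the maps $s\mapsto\max(s,t)$ and $s\mapsto|s-t|$ change their functional form. Concretely, since $a\le b$, I would use the quantity $\eta=\eta(t,a,b)=\min(\max(t,a),b)$ as the effective split point inside $[a,b]$: if $t\le a$ then $\eta=a$ and the whole interval lies in the regime $s\ge t$; if $t\ge b$ then $\eta=b$ and the whole interval lies in the regime $s\le t$; and if $a\le t\le b$ then $\eta=t$ and the interval genuinely splits. In every case I can write $\int_a^b=\int_a^{\eta}+\int_{\eta}^{b}$, and on the first piece substitute $\max(s,t)=t$, $|s-t|=t-s$, while on the second piece substitute $\max(s,t)=s$, $|s-t|=s-t$.

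For part (i), I would compute $\int_a^{\eta}\alpha^t\drm s=(\eta-a)\alpha^t$ and $\int_{\eta}^{b}\alpha^s\drm s=(\alpha^b-\alpha^{\eta})/\ln\alpha$, and sum to get \eqref{eqn:psi_TC}. For part (ii), on $[a,\eta]$ the integrand becomes $(\alpha\gamma)^t\gamma^{-s}$, with antiderivative $-(\alpha\gamma)^t\gamma^{-s}/\ln\gamma$, giving $(\alpha\gamma)^t(\gamma^{-a}-\gamma^{-\eta})/\ln\gamma$; on $[\eta,b]$ the integrand becomes $\gamma^{-t}(\alpha\gamma)^s$, whose integral is $\gamma^{-t}((\alpha\gamma)^b-(\alpha\gamma)^{\eta})/\ln(\alpha\gamma)$. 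Summing these two pieces produces \eqref{eqn:psi_DC}.

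For part (iii), I would separate $\kernelSS(s,t)=\alpha^{\max(s,t)+s+t}-\tfrac{1}{3}\alpha^{3\max(s,t)}$ into two summands and integrate each. The second summand reduces to the computation of part (i) with $\alpha$ replaced by $\alpha^3$ and a prefactor $-\tfrac{1}{3}$, yielding exactly the last two terms of \eqref{eqn:psi_SS}. For the first summand, on $[a,\eta]$ the integrand simplifies to $\alpha^{2t}\alpha^s$ and on $[\eta,b]$ to $\alpha^t\alpha^{2s}$; elementary integration then produces $\alpha^{2t}(\alpha^{\eta}-\alpha^a)/\ln\alpha$ and $\alpha^t(\alpha^{2b}-\alpha^{2\eta})/(2\ln\alpha)$, which are the first two terms of \eqref{eqn:psi_SS}.

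The only real obstacle is bookkeeping. I expect the bulk of the work to be verifying that the unified formulas collapse correctly in the two boundary regimes (i.e.\ that the first summand of each closed form vanishes when $t\le a$ via $\eta=a$, and the second summand vanishes when $t\ge b$ via $\eta=b$), and that sign conventions for $\ln\alpha$, $\ln\gamma$, and $\ln(\alpha\gamma)$ are consistent with the admissible parameter ranges $\alpha\in(0,1)$ and $\gamma\in(0,\alpha^{-1/2})$ so that none of the denominators vanish. No deeper analytic machinery is required beyond Fubini for exchanging the order of integration implicitly used in Theorem \ref{thm:dc_bounded_CT} and the definition of $\psi$ in \eqref{eqn:psi_def_general}.
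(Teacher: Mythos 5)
Your proposal is correct and follows essentially the same route as the paper's proof: splitting $[a,b]$ at the point where $\max(s,t)$ and $|s-t|$ change form, computing the elementary exponential antiderivatives on each piece, and handling $\kernelSS$ by separating it into $\alpha^{\max(s,t)+s+t}$ and $-\tfrac13\alpha^{3\max(s,t)}$ with the latter reduced to the TC computation via $\alpha\mapsto\alpha^3$. The only (cosmetic) difference is that you use $\eta(t,a,b)$ as a unified split point from the outset, whereas the paper works out the three regimes $t\le a$, $t\ge b$, $t\in(a,b)$ separately and then observes that the definition of $\eta$ collapses them into the stated closed forms.
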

%-------------------
\begin{proof}
See Appendix \ref{sec:appendix_proof_psi_TC_DC_SS}.
\end{proof}
%======================================== 
Similar to the previous theorem, one can obtain the closed-form of $\varphi_{0}$ for the standard kernels \eqref{eqn:DC_kernel}, \eqref{eqn:TC_kernel} and \eqref{eqn:SS_kernel}. 
%======================================== 
%\begin{theorem}\label{thm:phi_0t_CT}
%	\emph{i)} For kernel $\kernelTC$, we have
%	\begin{equation}\label{eqn:phi_0t_kTC_CT}
%		\varphi_{\TC,0,t} =  
%		\left(t-\frac{1}{\ln(\alpha)}\right)\alpha^t,
%		\quad\forall t\in\Rbb_+.
%	\end{equation}
%	\emph{ii)} For kernel $\kernelDC$, we have
%	\begin{equation}\label{eqn:phi_0t_kDC_CT}
%		\varphi_{\DC,0,t} =  
%		-\Big[
%		\frac{(1-\gamma^t)}{\ln(\gamma)}
%		+
%		\frac{1}{\ln(\alpha\gamma)}\Big]\alpha^{t},
%		\quad\forall t\in\Rbb_+.
%	\end{equation}
%	\emph{iii)} For kernel $\kernelSS$, we have
%	\begin{equation}\label{eqn:phi_0t_kSS_CT}
%		\varphi_{\SS,0,t} = 
%		\Big[\frac{11}{18 \ln(\alpha)}\alpha^{t}-\frac{1}{\ln(\alpha)}-\frac{t\alpha^{t}}{3}\Big]\alpha^{2t},
%		\quad\forall t\in\Rbb_+.
%	\end{equation}
%\end{theorem}
\begin{theorem}\label{thm:phi_0t_CT}
For any $t\in \Rbb_+$, we have
\begin{align}
	\label{eqn:phi_0t_kTC_CT}
	\varphi_{\TC,0,t} &=  
	\left(t-\frac{1}{\ln(\alpha)}\right)\alpha^t,
	\\
	\label{eqn:phi_0t_kDC_CT}
	\varphi_{\DC,0,t} &=  
	-\Big[
	\frac{(1-\gamma^t)}{\ln(\gamma)}
	+
	\frac{1}{\ln(\alpha\gamma)}\Big]\alpha^{t},
	\\
	\label{eqn:phi_0t_kSS_CT}
	\varphi_{\SS,0,t} &= 
	\Big[\frac{11}{18 \ln(\alpha)}\alpha^{t}-\frac{1}{\ln(\alpha)}-\frac{t\alpha^{t}}{3}\Big]\alpha^{2t}.
\end{align}
%\begin{equation*}
%	\begin{array}{rcl}	
%		\varphi_{\TC,0,t} &=&  
%		\big(t-\frac{1}{\ln(\alpha)}\big)\alpha^t,
%		\\
%		\varphi_{\DC,0,t} &=  &
%		-\big[
%		\frac{(1-\gamma^t)}{\ln(\gamma)}
%		+
%		\frac{1}{\ln(\alpha\gamma)}\big]\alpha^{t},
%		\\
%		\varphi_{\SS,0,t} &= &
%		\big[\frac{11}{18 \ln(\alpha)}\alpha^{t}-\frac{1}{\ln(\alpha)}-\frac{t\alpha^{t}}{3}\big]\alpha^{2t}.
%	\end{array}
%\end{equation*}
\end{theorem}
\begin{proof}
See Appendix \ref{sec:appendix_proof_phi_0t_CT}.	
\end{proof}	
%======================================== 
Given $\{\varphi_i\}_{i=0}^{\nD}$, we can obtain $\inner{\varphi_i}{\varphi_j}_{\Hk}$, for $i,j=0,\ldots,\nD$.
To this end, define functions 
$\nu:\Rbb_+\times\Rbb_+\to\Rbb$ 
and 
$\barnu:\Rbb_+\to\Rbb$ 
respectively as
\begin{equation}\label{eqn:nu_def_general}
	\nu(x,y) := 
	\int_{0}^{x}\!
	\int_{0}^{y}\! 
	\kernel(s,t)
	\drm t \drm s,
	\qquad \forall x,y\in\Rbb_+,
\end{equation}
and
\begin{equation}\label{eqn:barnu_def_general}
	\barnu(x) := 
	\int_{0}^{x}\!
	\int_{0}^{\infty}\! 
	\kernel(s,t)
	\drm t  \drm s,
	\qquad \forall x\in\Rbb_+.
\end{equation}
%When the employed kernel belongs to the standard kernels  \eqref{eqn:TC_kernel}, \eqref{eqn:DC_kernel} and \eqref{eqn:SS_kernel}, we include a suitable subscript to indicate the corresponding kernel.
When $\kernel$ is one of the standard kernels \eqref{eqn:DC_kernel}, \eqref{eqn:TC_kernel}  and \eqref{eqn:SS_kernel}, 
we include a suitable subscript in $\nu$ and $\barnu$ to indicate the corresponding kernel.
For each $i,j \in \{0,1,\ldots,\nS-1\}$, let functions $\kappa_{ij}:\Rbb_+\times\Rbb_+\to\Rbb_+$
and
$\bar{\kappa}_{i}:\Rbb_+\times\Rbb_+\to\Rbb_+$ be defined
such that, for any $\tau,\tau_1,\tau_2\in\Rbb_+$, we have  
\begin{equation}\label{eqn:kappa}
	\begin{split}
		&\kappa_{ij}(\tau_1,\tau_2) = 
		\nu\big(\bar{s}_{i}(\tau_1),\bar{s}_{j}(\tau_2)\big) 
		-
		\nu\big(\bar{s}_{i+1}(\tau_1),\bar{s}_{j}(\tau_2)\big)
		\\&\quad-
		\nu\big(\bar{s}_{i}(\tau_1),\bar{s}_{j+1}(\tau_2)\big) 
		+
		\nu\big(\bar{s}_{i+1}(\tau_1),\bar{s}_{j+1}(\tau_2)\big), 
	\end{split}
\end{equation}
and
\begin{equation}\label{eqn:barkappa}
	\bar{\kappa}_{i}(\tau) = 
	\barnu\big(\bar{s}_{i}(\tau)\big) 
	-
	\barnu\big(\bar{s}_{i+1}(\tau)\big).
\end{equation}
Based on these definitions, the next theorem presents the closed-form for 
$\inner{\varphi_{0}}{\phiu{\tau}}_{\Hk}$
and
$\inner{\phiu{\tau_1}}{\phiu{\tau_2}}_{\Hk}$.
%======================================== 
%\begin{theorem}\label{thm:inner_phi_0i_ij_CT}
%	\rm{i)}
%	For any $\tau\in\Rbb_+$, we have
%	\begin{equation}
%		\label{eqn:inner_phi0_phiu_tau}
%		\begin{split}
%			&\inner{\varphi_{0}}{\phiu{\tau}}_{\Hk}
%			= 
%			\sum_{i=0}^{\nS-1}
%			\xi_{i+1} \bar{\kappa}_{i}(\tau).
%		\end{split}
%	\end{equation}
%	\rm{ii)}
%	For any $\tau_1,\tau_2\in\Rbb_+$, we have
%	\begin{equation}\label{eqn:inner_phiu_tau1_phii_tau2}
%		\begin{split}
%			&\inner{\phiu{\tau_1}}{\phiu{\tau_2}}_{\Hk}
%			= 
%			\sum_{i=0}^{\nS-1}\sum_{j=0}^{\nS-1}
%			%\sum_{i,j\in \Ical_{\text{s}}}
%			\xi_{i+1}\xi_{j+1} \kappa_{ij}(\tau_1,\tau_2).
%		\end{split}
%	\end{equation}
%\end{theorem}
\begin{theorem}\label{thm:inner_phi_0i_ij_CT}
For any $\tau,\tau_1,\tau_2\in\Rbb_+$, we have
%\begin{align}
%	\label{eqn:inner_phi0_phiu_tau}
%	\inner{\varphi_{0}}{\phiu{\tau}}_{\Hk}
%	&= 
%	\sum_{i=0}^{\nS-1}
%	\xi_{i+1} \bar{\kappa}_{i}(\tau).
%	\\
%	\label{eqn:inner_phiu_tau1_phii_tau2}
%	\inner{\phiu{\tau_1}}{\phiu{\tau_2}}_{\Hk}
%	&= 
%	\sum_{i=0}^{\nS-1}\sum_{j=0}^{\nS-1}
%	%\sum_{i,j\in \Ical_{\text{s}}}
%	\xi_{i+1}\xi_{j+1} \kappa_{ij}(\tau_1,\tau_2).
%\end{align}
\begin{equation*}
	\begin{array}{rcl}
	\inner{\varphi_{0}}{\phiu{\tau}}_{\Hk}
	&=& 
	\sum_{i=0}^{\nS-1}
	\xi_{i+1} \bar{\kappa}_{i}(\tau).
	\\
	\inner{\phiu{\tau_1}}{\phiu{\tau_2}}_{\Hk}
	&=& 
	\sum_{i=0}^{\nS-1}\sum_{j=0}^{\nS-1}
	%\sum_{i,j\in \Ical_{\text{s}}}
	\xi_{i+1}\xi_{j+1} \kappa_{ij}(\tau_1,\tau_2).
	\end{array}
\end{equation*}
\end{theorem}
\begin{proof}
See Appendix \ref{sec:appendix_proof_inner_phi_0i_ij_CT}.	
\end{proof}	
%======================================== 
Due to \eqref{eqn:kappa} and \eqref{eqn:barkappa}, in order to employ Theorem~\ref{thm:inner_phi_0i_ij_CT} to calculate the entries of $\Phi$,  we need to obtain the functions $\nu$ and $\barnu$, which can be done in general using numerical techniques.
However, the closed-form of $\nu$ and $\barnu$ can be explicitly  derived for the standard kernels. 
\begin{theorem}\label{thm:nu_TC_DC_SS}
	For any $x,y\in\Rbb_+$, we have
	\begin{align}
	\label{eqn:nu_TC}
		\nuTC(x,y) &=
		\frac{\min(x,y)(\alpha^x+\alpha^y)}{\ln(\alpha)}
		+	
		\frac{2(1-\alpha^{\min(x,y)})}{\ln(\alpha)^2}, 
	\\
	\label{eqn:barnu_TC}
		\barnuTC(x)&=
		\frac{x\alpha^x\ln(\alpha)+2(1-\alpha^{x})}{\ln(\alpha)^2}, 
	\\
	\label{eqn:nu_DC}
		\nuDC(x,y) & =
		\frac{(1-\gamma^{-\min(x,y)})\big((\alpha\gamma)^x+(\alpha\gamma)^y\big)}{\ln(\gamma)\ln(\alpha\gamma)} \notag
		\\&\qquad +	
		\frac{2-2\alpha^{\min(x,y)}}{\ln(\alpha)\ln(\alpha\gamma)}, 
	\\
	\label{eqn:barnu_DC}
		\barnuDC(x)&=
		\frac{\alpha^x\gamma^x-\alpha^x}
		{\ln(\gamma)\ln(\alpha\gamma)}
		+
		\frac{2-2\alpha^{x}}
		{\ln(\alpha)\ln(\alpha\gamma)}, 
	\\
	\label{eqn:nu_SS}
	\nuSS(x,y)&=
			\frac{(\alpha^{\min(x,y)}-1)(\alpha^{2x}+\alpha^{2y})}{2\ln(\alpha)^2}
			\notag
			\\&  
			-\frac{\min(x,y)(\alpha^{3x}+\alpha^{3y})}{9\ln(\alpha)}
			+	
			\frac{7-7\alpha^{3\min(x,y)}}{27\ln(\alpha)^2},
	\\
	\label{eqn:barnu_SS}
	\barnuSS(x)&=
	\frac{14-27\alpha^{2x} +13\alpha^{3x}}{54\ln(\alpha)^2}
	-\frac{x\alpha^{3x}}{9\ln(\alpha)}.	
	\end{align}
%\begin{equation*}\ra{1.2}
%	\begin{array}{rcl}
%		\nuTC(x,y) &=&
%		\frac{\min(x,y)(\alpha^x+\alpha^y)}{\ln(\alpha)}
%		+	
%		\frac{2(1-\alpha^{\min(x,y)})}{\ln(\alpha)^2}, 
%		\\
%		\barnuTC(x)&=&
%		\frac{x\alpha^x\ln(\alpha)+2(1-\alpha^{x})}{\ln(\alpha)^2}, 
%		\\
%		\nuDC(x,y) & =&
%		\frac{(1-\gamma^{-\min(x,y)})\big((\alpha\gamma)^x+(\alpha\gamma)^y\big)}{\ln(\gamma)\ln(\alpha\gamma)} 
%		\\&&\qquad +	
%		\frac{2-2\alpha^{\min(x,y)}}{\ln(\alpha)\ln(\alpha\gamma)}, 
%		\\
%		\barnuDC(x)&=&
%		\frac{\alpha^x\gamma^x-\alpha^x}
%		{\ln(\gamma)\ln(\alpha\gamma)}
%		+
%		\frac{2-2\alpha^{x}}
%		{\ln(\alpha)\ln(\alpha\gamma)}, 
%		\\
%		\nuSS(x,y)&=&
%		\frac{(\alpha^{\min(x,y)}-1)(\alpha^{2x}+\alpha^{2y})}{2\ln(\alpha)^2}
%		\\&&  
%		-\frac{\min(x,y)(\alpha^{3x}+\alpha^{3y})}{9\ln(\alpha)}
%		+	
%		\frac{7-7\alpha^{3\min(x,y)}}{27\ln(\alpha)^2},
%		\\
%		\barnuSS(x)&=&
%		\frac{14-27\alpha^{2x} +13\alpha^{3x}}{54\ln(\alpha)^2}
%		-\frac{x\alpha^{3x}}{9\ln(\alpha)}.	
%	\end{array}
%\end{equation*}
\end{theorem}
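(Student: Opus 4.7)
The plan is to reduce the two-dimensional integrals defining $\nu$ and $\barnu$ to iterated one-dimensional integrations whose inner step is already supplied by Theorem \ref{thm:psi_TC_DC_SS} and Theorem \ref{thm:phi_0t_CT}. Since each standard kernel is integrable, Fubini's theorem gives, for any $x,y\in\Rbb_+$,
\begin{equation*}
\nu(x,y) = \int_{0}^{x} \psi(s,0,y)\,\drm s,
\qquad
\barnu(x) = \int_{0}^{x} \varphi_{0,s}\,\drm s.
\end{equation*}
Thus the only remaining work is to integrate the already-computed closed forms for $\psi(s,0,y)$ and $\varphi_{0,s}$ over $s\in[0,x]$, and the $\barnu$ identities can alternatively be recovered from the $\nu$ identities by letting $y\to\infty$.

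For the $\nu$ computation, I exploit the symmetry $\nu(x,y)=\nu(y,x)$ inherited from $\kernel(s,t)=\kernel(t,s)$ and assume without loss of generality that $x\le y$. Under this assumption, $\eta(s,0,y)=\min(\max(s,0),y)=s$ throughout $[0,x]$, so the piecewise structure coming from $\eta$ disappears. Substituting \eqref{eqn:psi_TC}--\eqref{eqn:psi_SS} term by term then produces a sum of elementary integrands: for $\kernelTC$ the terms $s\alpha^{s}$, $\alpha^{y}$, $\alpha^{s}$; for $\kernelDC$ analogous families involving $(\alpha\gamma)^{s}$, $\gamma^{-s}(\alpha\gamma)^{s}=\alpha^{s}$, $(\alpha\gamma)^{y}$; and for $\kernelSS$ terms built from $\alpha^{s}$, $\alpha^{2s}$, $\alpha^{3s}$, $s\alpha^{3s}$. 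Each admits an elementary antiderivative with at most one integration by parts for the $s\alpha^{\beta s}$ families. Evaluating from $0$ to $x$ and collecting terms gives the stated formulas with $x$ appearing where the statement writes $\min(x,y)$; the symmetric versions \eqref{eqn:nu_TC}, \eqref{eqn:nu_DC}, \eqref{eqn:nu_SS} then follow by renaming.

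For $\barnu$ I can proceed in two equivalent ways. Either I substitute the closed forms \eqref{eqn:phi_0t_kTC_CT}--\eqref{eqn:phi_0t_kSS_CT} into $\int_{0}^{x}\varphi_{0,s}\,\drm s$ and integrate the resulting combinations $s\alpha^{s}$, $\alpha^{s}\gamma^{s}$, $s\alpha^{3s}$, etc., or I take $y\to\infty$ in the $\nu$ identity just derived. The second route is fastest: under $\alpha\in(0,1)$ together with $\alpha\gamma<\alpha^{1/2}<1$ in the DC case, every $y$-dependent exponential vanishes, and the surviving $x$-dependent terms rearrange into \eqref{eqn:barnu_TC}, \eqref{eqn:barnu_DC}, \eqref{eqn:barnu_SS}.

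The argument is conceptually straightforward; the real difficulty is purely bookkeeping. The SS case involves three distinct exponential bases ($\alpha^{t}$, $\alpha^{2t}$, $\alpha^{3t}$) with $\eta$-dependent coefficients, so careful tabulation is needed to obtain the fractions $\tfrac{1}{9}$, $\tfrac{7}{27}$, $\tfrac{13}{54}$ in the final answers. The DC case, while shorter, requires tracking three different logarithmic denominators $\ln\alpha$, $\ln\gamma$, $\ln(\alpha\gamma)$ through several cancellations without accidentally merging them. Aside from these algebraic hazards, no new analytic idea is required beyond the representations already provided by Theorems \ref{thm:psi_TC_DC_SS} and \ref{thm:phi_0t_CT}.
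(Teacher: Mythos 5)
Your proposal is correct and follows essentially the same route as the paper's proof: reduce everything to $\nu(x,y)$ with $x\le y$ assumed without loss of generality, evaluate it as $\int_0^x\bigl(\int_0^y\kernel(s,t)\,\drm t\bigr)\drm s$ by elementary antiderivatives, and recover $\barnu$ by letting $y\to\infty$. The only cosmetic difference is that you reuse the closed forms of $\psi(s,0,y)$ from Theorem~\ref{thm:psi_TC_DC_SS} for the inner integral, whereas the paper recomputes that inner integral directly (splitting at $t=s$, and reusing the TC result with $\alpha\mapsto\alpha^3$ in the SS case); the resulting calculations are identical.
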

%-------------------
\begin{proof}
See Appendix \ref{sec:appendix_proof_nu_TC_DC_SS}.
\end{proof}
Similar to the previous theorem, we can calculate the closed-form of $\|\varphi_0\|_{\Hk}^2=\inner{\varphi_0}{\varphi_0}_{\Hk}$ for the standard stable kernels. The next theorem presents these closed-forms.
\begin{theorem}\label{thm:norm_phi_0_CT}
%%	\emph{i)} For $\kernelTC$, we have
%%	$\|\varphi_{0}\|_{\Hk}^2 =  2(\ln(\alpha))^{-2}$.\\
%%	\emph{ii)} For $\kernelDC$, we have
%%	$\|\varphi_{0}\|_{\Hk}^2 =   2(\ln(\alpha)\ln(\alpha\gamma))^{-2}$.\\
%%	\emph{iii)} For  $\kernelSS$, we have
%%	$\|\varphi_{0}\|_{\Hk}^2 =  7(27\ln(\alpha))^{-2}$.\\
%%	%
%We have $\|\varphi_{\TC,0}\|_{\Hk}^2 =  2(\ln(\alpha))^{-2}$, $\|\varphi_{\DC,0}\|_{\Hk}^2 =   2(\ln(\alpha)\ln(\alpha\gamma))^{-2}$, and
%$\|\varphi_{\SS,0}\|_{\Hk}^2 =  7(27\ln(\alpha))^{-2}$.
We have $\|\varphi_{\TC,0}\|^2 =  2(\ln(\alpha))^{-2}$, $\|\varphi_{\DC,0}\|^2 =   2(\ln(\alpha)\ln(\alpha\gamma))^{-2}$, and
$\|\varphi_{\SS,0}\|^2 =  7(27\ln(\alpha))^{-2}$.
\end{theorem}
\begin{proof}
See Appendix \ref{sec:appendix_proof_norm_phi_0_CT}.
\end{proof}	
Based on the above discussion, we can derive the key elements of optimization \eqref{eqn:opt_dc_gain_2} and solve Problem \ref{prob:ID_w_DC-gain_SI} to estimate the impulse response of system.  The outline of this procedure is summarized in Algorithm \ref{alg:SSG_ID}.
\begin{algorithm}[t]
	\caption{System Identification with Steady-State Gain Side-Information}\label{alg:SSG_ID}
	\begin{algorithmic}[1]
		\State \textbf{Input:} Set of data $\Dscr$, integrable kernel $\kernel$, index set $\Ical$, convex function $\ell$,  regularization weight $\lambda$, and, real scalar $\delta$ or interval $[\udelta,\odelta]$ for the steady-state gain.
		\State Calculate matrix $\Phi$ in \eqref{eqn:Phi}.
		\Statex $\triangleright$
		{For discrete-time case, use Theorem \ref{thm:Phi_DT} and Remark \ref{rem:Phi_DT}.}
		\Statex $\triangleright$
		{For continuous-time case and step input, use \eqref{eqn:kappa}, \eqref{eqn:barkappa}, Theorems \ref{thm:inner_phi_0i_ij_CT}, \ref{thm:nu_TC_DC_SS} and \ref{thm:norm_phi_0_CT}.}
		\State Obtain matrix $\mxA$ introduced in  \eqref{eqn:A} as a sub-matrix of $\Phi$. 
		\State Obtain vector $\vca_0$ as the first column of $\Phi$.
		\State Solve convex program \eqref{eqn:opt_dc_gain_2} to obtain $\xstar$.
		\Statex $\triangleright$
		{If the steady-state gain is known to be $\delta$ and the empirical loss is the sum of squared errors, obtain $\xstar$ by \eqref{eqn:xstar} or \eqref{eqn:QP_KKT}.}
		\State Calculate $\varphi_{0}\!$ by \eqref{eqn:phi_0t}, or by Theorems \ref{thm:phi_0t_CT} and \ref{thm:phi_0t_DT}.
		\State Calculate $\varphi_1,\ldots,\varphi_{\nI}$ based on \eqref{eqn:phiu_tau_t}. 
		\Statex $\triangleright$ For continuous-time case and step input, use Theorems \ref{thm:phiu_tau_t_general} and \ref{thm:psi_TC_DC_SS}.
		\State Given $\xstar$ and $\{\varphi_i\}_{i=0}^{\nI}$, obtain $\gstar$ based on \eqref{eqn:gstar_parametric_form}.
		\State \textbf{Output:} $\gstar$ and $\xstar$.
	\end{algorithmic}
\end{algorithm}
\subsection{Hyperparameter Tuning}\label{ssec:HP}
To employ Algorithm \ref{alg:SSG_ID}, in addition to the set of data $\Dscr$, an appropriate stable kernel $\kernel$ and the regularization weight $\lambda$ are required. The general form of the kernel depends on the shape and smoothness of the impulse response to be identified. Following specifying the type of kernel, in addition to $\lambda$, it is required to determine the hyperparameters $\theta_{\kernel}$ characterizing kernel $\kernel$. Therefore, we need to estimate the vector of hyperparameters  $\theta := [\lambda, \theta_{\kernel}]$ in the admissible set $\Theta\subseteq \Rbb^{n_{\theta}}$. For this purpose, we use a cross-validation mechanism equipped with a Bayesian optimization heuristic \cite{srinivas2012information}. More precisely, the index set of data is partitioned into disjoint sets  $\Ical_{\text{T}}$ and $\Ical_{\text{V}}$ to be utilized respectively for training and validation. The prediction error on the validation data, the model evaluation metric $v:\Theta\to\Rbb$, is defined as
\begin{equation}\label{eqn:v_model_evaluation_metric}
	v(\theta) = \frac{1}{|\Ical_{\text{V}}|}\sum_{i\in\Ical_{\text{V}}}
	\big(
	y_{t_i}-\Lu{t_i}(\vcg(\theta))
	\big)^2,
\end{equation}
where $\vcg(\theta)$ is the impulse response identified using the proposed identification technique given the training data and the hyperparameters $\theta$.
Then, $\theta$ is estimated  as $\hat{\theta}:=\argmin_{\theta\in\Theta} \!\ v(\theta)$.  One can see that the dependency of model evaluation metric $v$ to the vector of hyperparameters is in a black-box oracle form. To solve this optimization problem, we use a Bayesian optimization algorithm such as GP-LCB,  which is available through \MATLAB's \texttt{bayesopt} function \cite{srinivas2012information}.

\section{Numerical Examples}\label{sec:numerical}
In this section, we demonstrate and compare the performance of the proposed method through several numerical examples.

%--------------------------------------
\begin{example}\label{exm:pf_ext}\normalfont
	For the example provided in Section \ref{sec:problem_statement}, we employ the proposed identification scheme presuming that the steady-state gain of the system is given, i.e., we know that $\ell_0(\gS)=1$.
	Accordingly, we apply Algorithm \ref{alg:SSG_ID} given the set of measurement data $\Dscr$ and $\delta=1$.
	For the choice of kernel, we utilize $\kernelTC$ introduced in \eqref{eqn:TC_kernel}.
	The hyperparameters of kernel are tuned based on the cross-validation mechanism introduced in Section \ref{ssec:HP}.
	To this end, the first $80$\% of data points are chosen for training and the remaining $20$\% for validation. 
	Given this partitioning of the measurement data, the hyperparameters $\theta=[\lambda,\alpha]$ are estimated as $\argmin_{\theta\in\Theta} \!\ v(\theta)$, where $v$ is the model evaluation metric defined in \eqref{eqn:v_model_evaluation_metric}. The solution of this optimization problem is obtained by utilizing GP-LCB Bayesian optimization approach \cite{srinivas2012information}.
	
	Figure \ref{fig:example_1} shows the step response corresponding to $\gstar$ together with the step responses of the models $\hat{\vcg}_1$ and $\hat{\vcg}_2$, estimated in Section \ref{sec:problem_statement}.
	%One can see that step response $\vcs^\star$ which corresponds to $\gstar$ fits better to the true step response $\sS$.
	For the estimated impulse response $\gstar$, we have $\ell_0(\gstar)=1.00$.
	To evaluate and compare quantitatively the estimated impulse responses, we employ the
	following performance metric
	%\emph{coefficient of determination}, also known as \emph{R-squared}, 
	%denoted here by $\mathrm{fit}(\cdot)$ and defined as 
	\begin{equation}\label{eqn:R2}
		\mathrm{fit}(\vcg) 
		%\mathrm{fit}(\vcg)
		= 100 \times \left(1-\frac{\|\vcg-\gS\|_2}{\|\gS\|_2}\right),
		\quad \forall \vcg\in\Hk.
	\end{equation}
	For the estimated impulse responses, we have $\mathrm{fit}(\hat{\vcg}_1)=70.88$\%, $\mathrm{fit}(\hat{\vcg}_2)=58.60$\%, and $\mathrm{fit}(\gstar)=95.84$\%.
	Accordingly, one can see that the proposed scheme outperforms in terms of performance metric $\mathrm{fit}$ and the precision of resulting steady-state gain.
	\xqed{$\triangle\!\!$}
	%--------------------------------------
	\begin{figure}[t]
		\centering
		\includegraphics[width =0.45\textwidth]{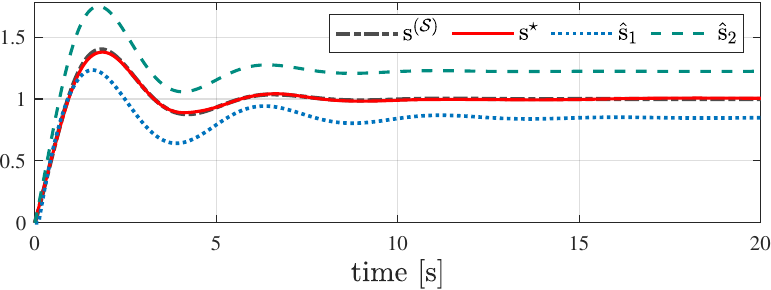}
		\caption{The step responses for system $\Scal$ and the model estimated in Example \ref{exm:pf_ext}. The impulse response $\vcs^\star$ corresponds to the proposed method.}
		\label{fig:example_1}
	\end{figure}
\end{example}
%--------------------------------------
\begin{example}\label{exm:MC}\normalfont
In this example, we compare the performance of the proposed identification method  with  the existing schemes through a Monte Carlo analysis.
To this end, with respect to each $(n,r)$ in $\{16,17,\ldots,25\}\times\{0.8,0.82,\ldots,0.96\}$, we employ \MATLAB's \texttt{drss} function to randomly generate a discrete-time LTI system with order $n$ and spectral radius $r$.
We normalize these systems with their $\Hcal_2$-norm and set them initially at rest.
For each of these systems, a random zero-mean white Gaussian input signal with length  $\nD=200$ is generated using \MATLAB's \texttt{idinput} function.
By applying these input signals to the systems, we obtain their noiseless output signals.
We consider three levels of signal-to-noise ratio (SNR): high, medium, and low, 
which are $5\ \!$dB, $15\ \!$dB, and $25\ \!$dB, respectively. 
For the additive measurement uncertainty, we generate a zero-mean white Gaussian signal for each of these SNR levels and each output signal.
The noiseless output signal is then corrupted with the the corresponding additive noise signals, and the resulting noisy output is measured at time instants $t_i=i$, for $i=0,1,\ldots,199$.
As a result, we have $100$ sets of input-output data for each of the aforementioned SNR values as following
\begin{equation*}
	\begin{split}
		\Dscr_i^{\supscrpsm{5\text{dB}}}
		&\!=\!
		\big\{(u_s^{\supscrpsm{i}},y_s^{\supscrpsm{i,5\text{dB}}})\big|s \!=\! 0,\ldots,199\big\},\ \ i \!=\! 1,\ldots,100,	
		\\
		\Dscr_i^{\supscrpsm{15\text{dB}}}
		&\!=\!
		\big\{(u_s^{\supscrpsm{i}},y_s^{\supscrpsm{i,15\text{dB}}})\big|s \!=\! 0,\ldots,199\big\},\ \ i \!=\!1,\ldots,100,	
		\\
		\Dscr_i^{\supscrpsm{25\text{dB}}}
		&\!=\!
		\big\{(u_s^{\supscrpsm{i}},y_s^{\supscrpsm{i,25\text{dB}}})\big|s \!=\! 0,\ldots,199\big\},\ \ i \!=\!1,\ldots,100,	
	\end{split}
\end{equation*}	
where the superscript indicates the SNR value in the respective data set.
We employ the above input-output data sets and the following identification methods to estimate the impulse response of corresponding systems:
\begin{itemize}
	% SSex_moesp
	\item[A.] This method is a modified subspace approach incorporating steady-state features of output \cite{yoshimura2019system}. 
	% SSimp
	\item[B.] This method estimates impulse response by solving a constrained optimization problem formulated based on a behavioral approach \cite{markovsky2017subspace}.
	% CLS_M	
	\item[C.] This method considers the interpretation of subspace identification as the optimal multi-step ahead prediction and modifies it to a constrained least-squares problem where the imposed equality constraint models approximately the steady-state gain information \cite{alenany2011improved}.
%	% CLS_moesp
%	\item[D.] 
%	This method uses a two-step approach to estimate the state-space model $(\mx{A},\mx{B},\mx{C},\mx{D})$, with $\mx{A}$ and $\mx{C}$ are obtained using the subspace identification method first, and then, $\mx{B}$ and $\mx{D}$ are estimated using a constrained least-squares problem where the steady-state gain information incorporated as a linear constraint on $\mx{B}$ and $\mx{D}$  \cite{alenany2011improved}.	
	% SSPI
	\item[D.] 
	This method is a general Bayesian variant of the optimal multi-step ahead predictor interpretation of subspace identification approach, where steady-state gain information is integrated into the covariance of the prior distribution	
	\cite{trnka2009subspace}. 
	\item[E.] 
	In this  method, the step response of system is first  estimated by a kernel-based Bayesian approach, 
	and then, 
	the FIR is calculated using discrete derivative \cite{fujimoto2018kernel}.
	\item[F.] 
	This  method estimates a FIR model for the system based on a kernel-based Bayesian approach where the steady-state gain information is enforced on the total summation of the estimated FIR \cite{tan2020kernel}.	 
	\item[G.] 
	The last method is the scheme proposed in this paper and summarized in Algorithm \ref{alg:SSG_ID}.	
\end{itemize}
The kernel-based methods E, F, and G employ the same kernel type \eqref{eqn:DC_kernel} to give a fair comparison.
To evaluate and compare the estimation performances of these methods, we employ the {R-squared} metric introduced in \eqref{eqn:R2}.
\begin{figure}[t]
	\begin{center}
		\includegraphics[width=0.35\textwidth]{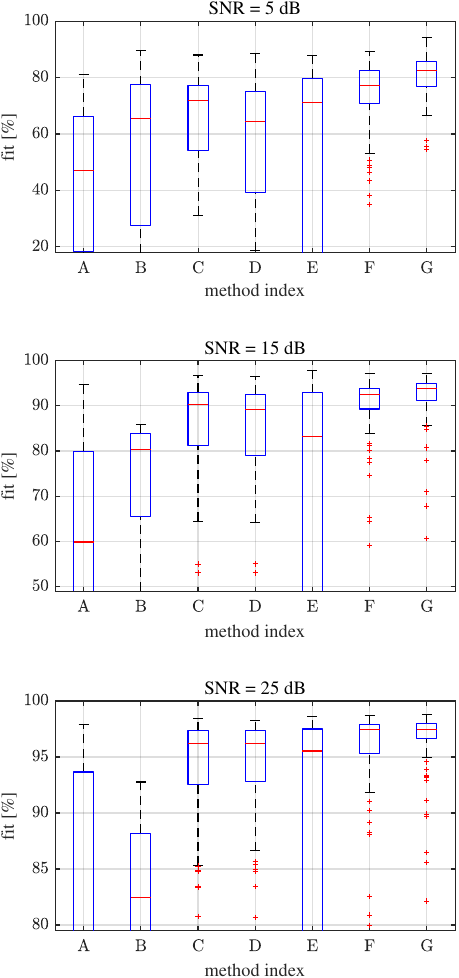}
	\end{center}
	%\caption{Box plots of the R-squared metric for the estimation results of the methods and SNR levels discussed in Example \ref{exm:MC}. The y-axis range has been adjusted to improve visibility in each of these box plots.}
	\caption{
		%\cmm{The box-plots compare the performance of the methods discussed in Example  \ref{exm:MC} with the proposed approach (G) through fitting metric \eqref{eqn:R2} for different SNR levels.}
		%\cmm{The box-plots compare the performance of the methods discussed in Example  \ref{exm:MC} through fitting metric \eqref{eqn:R2}. One can see the proposed approach (G) outperforms in different SNR levels.}	
		%\cmm{The box-plots compare methods discussed in Example  \ref{exm:MC} through performance metric \eqref{eqn:R2}. One can see the proposed approach (G) outperforms in different SNR levels.}	
		The box-plots compare methods discussed in Example \ref{exm:MC} via fitting metric \eqref{eqn:R2} for different SNR levels. One can see the proposed method (G) shows superior performance. The y-axis range has been adjusted to improve visibility in each of these box plots.
	}
	\label{fig:exm1_box_plot}
\end{figure}
Figure \ref{fig:exm1_box_plot} demonstrates and compares the values of R-squared metric for the estimation results of the above mentioned methods and SNR levels.

\noindent\underline{\textbf{Discussion:}}
From Figure \ref{fig:exm1_box_plot}, we observe that the proposed identification scheme demonstrates better estimation performance than other methods.
Indeed, methods A, B, C, and D are based on the subspace approach and prediction error minimization. Meanwhile, the proposed scheme is a kernel-based method in which the stability of the system is included and the model complexity tuning is implemented based on the effective approach of estimating continuous regularization hyperparameters rather than selecting an integer order \cite{pillonetto2014kernel,ljung2020shift}. 
The methods E and F estimate a FIR model for the system, which can be inexact when the spectral radius of the system is close to one, and the impulse response of system can not be approximated well by a short FIR. Moreover, the discrete derivative employed in method F makes the estimation prone to numerical sensitivity, particularly when the data is noisy. On the other hand, the proposed scheme estimates directly  the impulse response without any truncation and inexact numerical procedures such as discrete derivatives.
\xqed{$\triangle$}
\end{example}

\section{Conclusion}
%{ }
%In this work, the impulse response identification problem is addressed when side-information on the steady-state gain of the system is provided. The problem is formulated as a generic nonparametric identification in the form of an infinite-dimensional constrained convex program over the RKHS of stable impulse responses. This optimization problem is designed such that the objective function corresponds to the regularized empirical loss, and the imposed linear constraints enforce the integration of the given side-information into the solution. By showing that the steady-state gain is a bounded operator over the employed RKHS, we guarantee the existence and uniqueness of the solution.  The representer theorem reduces the optimization problem to a  finite-dimensional convex program, which can be solved efficiently. In the case of exact side-information, quadratic empirical loss, and quadratic regularization, the introduced problem has a closed-form solution. The efficacy of the proposed identification approach is validated and compared through several numerical examples.
In this work, we have addressed the impulse response identification problem when side-information on the steady-state gain of the system is provided. The problem is formulated as a generic nonparametric identification in the form of an infinite-dimensional constrained convex program over the RKHS of stable impulse responses. This optimization problem is designed such that the objective function corresponds to the regularized empirical loss, and the imposed linear constraints enforce the integration of the given side-information into the solution. We have shown that the steady-state gain is a bounded operator over the employed RKHS, which results in guaranteeing the existence and uniqueness of the solution.  By using the representer theorem, the optimization problem is reduced to a  finite-dimensional convex program, which can be solved efficiently. In the case of exact side-information, quadratic empirical loss, and quadratic regularization, the identification problem has a closed-form solution. 
Compared to the existing methods, the proposed identification approach can utilize non-uniform measurement samples and integrate steady-state gain side-information into the direct approach of continuous-time system identification. 
Moreover, through an extensive Monte Carlo numerical experiment, we have verified that the introduced methodology outperforms the benchmark approaches. 
The proposed scheme has several features which have led to the observed superior performance, including direct estimation of the impulse response without any truncation and 
%inexact numerical procedures such as discrete derivatives, utilizing  
without inexact numerical procedures such as discrete derivatives.  The method uses 
kernel-based regularization to enforce the BIBO stability of the estimated impulse response, and, effective model selection and complexity tuning through estimating continuous variables such as regularization weight and hyperparameters.

\appendix
\section{Appendix} 
\label{sec:Appendix}
%%========================================
\subsection{Integrability of the Standard Kernels}
\label{sec:appendix_proof_kTC_kDC_kSS_integrable}
For a Mercer kernel $\kernel$, let assume that there exist $C\in\Rbb_+$ and $\alpha\in(0,1)$ such that $|\kernel(s,t)|\le C\alpha^{\frac12(s+t)}$, for any $s,t\in\Tbb$. 
Subsequently, when $\Tbb=\Zbb_+$, we have
\begin{equation}
	\begin{split}
		\sum_{s,t\ge 0}|\kernel(s,t)|
		&\le
		C\sum_{s\ge 0}\sum_{t\ge 0}\alpha^{\frac12(s+t)}
		\\&
		=
		C\sum_{s\ge 0}\alpha^{\frac{1}2s}\sum_{t\ge 0}\alpha^{\frac{1}2t}
		\\&
		=
		\frac{C}{(1-\alpha^{\frac{1}2})^2}
		<\infty.
	\end{split}
\end{equation}
Similarly, when $\Tbb=\Rbb_+$, one has
\begin{equation}
	\begin{split}\!\!\!\!
		\int_{\Rbb_+\times\Rbb_+}
		%\!\!\!\!&\ \ \!\!
		|\kernel(s,t)|\drm s \drm t
		& \le
		C\int_{\Rbb_+}\!\int_{\Rbb_+}\!
		\alpha^{\frac12(s+t)}\drm s \drm t
		\\&=
		C\int_{\Rbb_+}\! \alpha^{\frac12s}\drm s\int_{\Rbb_+}\!
		\alpha^{\frac12t} \drm t
		\\&
		=\frac{4C}{(\ln(\alpha))^2}<\infty.
	\end{split}
\end{equation}
Therefore, $\kernel$ is an integrable kernel.
For any $s,t\in\Tbb$, one can easily see that 
$|\kernelTC(s,t)|\le \alpha^{\frac12(s+t)}$, 
$|\kernelDC(s,t)|\le \alpha^{\frac12(s+t)}$, and
\begin{equation}
	\begin{split}
		\!\!\!\!\!\!
		|\kernelSS(s,t)| &= \Big|\alpha^{\max(s,t)}\Big[\alpha^{s+t}-\frac{1}{3}\alpha^{2\max(s,t)}\Big]\Big|
		\\&
		%\!\!\!\!\!\!\!\!\!\!
		\le
		\Big[|\alpha^{s+t}|+\frac{1}{3}|\alpha^{2\max(s,t)}|\Big]\alpha^{\max(s,t)}
		\\&
		\le 
		\frac43 \alpha^{\frac12(s+t)}.
	\end{split}
\end{equation}
According to our previous discussion, this implies that $\kernelTC$, $\kernelDC$ and $\kernelSS$ are integrable kernels.
\qed
%========================================
\subsection{Proof of Theorem~\ref{thm:dc_bounded}} \label{sec:appendix_proof_dc_bounded}
\textbf{Case $\Tbb=\Zbb_+$:}
For each $n\in\Zbb_+$, define $\vcf_n=(f_{n,s})_{s\in\Zbb_+}$ as
$\vcf_n=\sum_{t=0}^n\kernel_t$.
Since,  for each $t\in\Zbb_+$, one has $\kernel_t\in\Hk$, we know that $\vcf_n\in\Hk$. Furthermore, from the reproducing property, one can see that 
%$\|\vcf_n\|_{\Hk}^2=\sum_{s=0}^n\sum_{t=0}^n\kernel(s,t)$.
\begin{equation}
	\|\vcf_n\|_{\Hk}^2=\sum_{s=0}^n\sum_{t=0}^n\kernel(s,t).
\end{equation}
Since $\kernel$ is an integrable kernel, we know that $\sum_{s,t\ge 0}|\kernel(s,t)|< \infty$. Accordingly, for any positive real scalar $\varepsilon$, there exists $N\in\Zbb_+$ such that 
$\sum_{s,t\ge N}|\kernel(s,t)|\le \varepsilon^2$.
Let $N_{\varepsilon}$ denote smallest non-negative integer with this property.
For any $n,m\in\Zbb_+$ such that $n> m\ge N_{\varepsilon}$, we have 
$\vcf_n-\vcf_m = \sum_{t=m+1}^{n}\!\kernel_t$.
Accordingly, from the reproducing property, one can see that
%$\|\vcf_n-\vcf_m\|_{\Hk}^2 = \sum_{m+1\le s,t \le n}\kernel(s,t)$.
\begin{equation}
	\|\vcf_n-\vcf_m\|_{\Hk}^2 = \sum_{s = m+1}^n \sum_{t=m+1}^n\kernel(s,t).
\end{equation}
Subsequently, from $n,m\ge N_{\varepsilon}$, the triangle inequality and the definition of $N_{\varepsilon}$, it follows that $\|\vcf_n-\vcf_m\|_{\Hk}\le \varepsilon$.
Therefore, $\{\vcf_n\}_{n\ge 0}$ is a Cauchy sequence in Hilbert space $\Hk$, and consequently, we know that there exists $\vcf=(f_s)_{s\in\Zbb_+}\in\Hk$ such that $\lim_{n\to\infty}\|\vcf_n-\vcf\|_{\Hk}=0$.
For any $s\in\Zbb_+$, from the reproducing property, we know that $f_s-f_{n,s}=\inner{\vcf-\vcf_n}{\kernel_s}_{\Hk}$.
Accordingly,  due to the Cauchy-Schwartz inequality, we have 
\begin{equation}
	\lim_{n\to\infty}|f_s-f_{n,s}| 
	\le 	
	\lim_{n\to\infty}\|\vcf-\vcf_n\|_{\Hk}\!\ \|\kernel_s\|_{\Hk} = 0,
\end{equation}
i.e., one has $\lim_{n\to\infty}f_{n,s}=f_s$, for any $s\in\Zbb_+$, which
implies that $\vcf=\sum_{t\in\Zbb_+}\kernel_t$. Therefore, $\vcf$ coincides with $\varphi_{0}$ defined by \eqref{eqn:phi_0t}.
Moreover, from $\lim_{n\to\infty}\vcf_{n}=\vcf$, the dominated convergence theorem and being $\kernel$ an integrable kernel, we have
\begin{equation*}
	\begin{split}\!\!
		\big\|\vcf\big\|_{\Hk}^2
		= \!
		\lim_{n\to\infty}\!\big\|\vcf_n\big\|_{\Hk}^2
		= 
		\lim_{n\to\infty}\!\sum_{0\le s,t\le n}\!\!\!\kernel(s,t)
		= \!\!\!
		\sum_{s,t\in\Zbb_+}\!\!\kernel(s,t),
	\end{split}
\end{equation*}  
which implies \eqref{eqn:norm_phi_0}.
For any $\vcg=(g_t)_{t\in\Zbb_+}\in\Hk$, we know that $\vcg$ is integrable, i.e., $\sum_{t\in\Zbb_+}|g_t|<\infty$.
Therefore, 
from $\lim_{n\to\infty}\vcf_n = \vcf$ and the reproducing property, 
it follows that
\begin{equation}
	\begin{split}
		\sum_{t\in\Zbb_+} g_t 		 
		&= 
		\lim_{n\to\infty}\sum_{0\le t\le n} g_t 
		\\&
		=\lim_{n\to\infty}\Big\langle\sum_{0\le t\le n}\!\!\kernel(\cdot,t),\vcg\Big\rangle_{\Hk}
		\\&
		= \lim_{n\to\infty}\inner{\vcf_n}{\vcg}_{\Hk}
		= \inner{\vcf}{\vcg}_{\Hk},
	\end{split}
\end{equation}
which implies \eqref{eqn:inner_phi_0_g} for the case of $\Tbb=\Zbb_+$.

\textbf{Case $\Tbb=\Rbb_+$:}
Let $r\in\Rbb_+$ and 
$I_r:=\int_0^r\!\int_0^r\!\kernel(s,t)\drm s\drm t$, which is well-defined since $\kernel$ is an integrable kernel.
Define $\vcf^\rth_n:=(f_{n,s}^\rth)_{s\in\Rbb_+}\in\Hk$ as
\begin{equation}\label{eqn:def_fn}
f^\rth_{n,s}=\frac{r}n\sum_{i=0}^{n-1}\kernel(\frac{i}{n}r,s), \qquad \forall s\in\Rbb_+,
\end{equation}
for each $n\in\Nbb$.
From the reproducing property, for any $n,m\in\Zbb_+$, we know that
\begin{equation*}
\begin{split}
%	\Big\langle\vcf_n^\rth&,\vcf_m^\rth\Big\rangle_{\Hk}-I_r \!=\!
	\Big\langle\vcf_n^\rth&,\vcf_m^\rth\Big\rangle_{\Hk}-I_r \\&=
	\Big\langle\frac{r}{n}\!\sum_{i=0}^{n-1}\kernel(\frac{i}{n}r,\cdot),
	\frac{r}{m}\!\sum_{j=0}^{m-1}\kernel(\frac{j}{m}r,\cdot)\Big\rangle_{\Hk}-I_r
	\\&=
	\sum_{i=0}^{n-1}\sum_{j=0}^{m-1}\bigg[\kernel(\frac{i}{n}r,\frac{j}{m}r)\frac{r^2}{nm}
	-
	\int_{\frac{i}{n}r}^{\frac{i+1}{n}r}\!\!\int_{\frac{j}{m}r}^{\frac{j+1}{m}r}\!\!
	\kernel(s,t)\drm s \drm t\bigg]
	\\&=
	\sum_{i=0}^{n-1}\sum_{j=0}^{m-1}
	\int_{\frac{i}{n}r}^{\frac{i+1}{n}r}\!\!\int_{\frac{j}{m}r}^{\frac{j+1}{m}r}
	\bigg[\kernel(\frac{i}{n}r,\frac{j}{m}r)-\kernel(s,t)\bigg]\drm s \drm t.
\end{split}
\end{equation*} 
Since $[0,r]^2$ is a compact region, continuity of $\kernel$ implies that $\kernel$ is uniformly continuous on $[0,r]^2$.
Therefore, for any $\varepsilon>0$, there exists ${\delta}_{\varepsilon}\in\Rbb_+$ such that we have 
\begin{equation}\label{eqn:k_s1t1_k_s2t2}
|\kernel(s_1,t_1)-\kernel(s_2,t_2)|\le \frac{\varepsilon^2}{4r^2},
\end{equation} 
for any $s_1,s_2,t_1,t_2\in [0,r]$ where $|s_1-s_2|+|t_1-t_2|\le {\delta}_{\varepsilon}$.
Accordingly, if $n,m\ge n_{\varepsilon}$, where $n_{\varepsilon}$ is the smallest positive integer larger than 
$\frac{1}{\delta_{\varepsilon}}2r$, one has
\begin{equation*}
|\inner{\vcf_n^\rth}{\vcf_m^\rth}_{\Hk}-I_r| 
\le
\sum_{i=0}^{n-1}\sum_{j=0}^{m-1}
\int_{\frac{i}{n}r}^{\frac{i+1}{n}r}\!\!\int_{\frac{j}{m}r}^{\frac{j+1}{m}r}
\!\frac{\varepsilon^2}{4r^2}\!\ \drm s \drm t=
\frac{1}{4}\varepsilon^2.
\end{equation*}
Therefore, we know that 
\begin{equation}\label{eqn:I_inner_fn_fm_eps}
I_r-\frac{1}{4}\varepsilon^2 \le \inner{\vcf_n^\rth}{\vcf_m^\rth}_{\Hk}\le I_r+\frac{1}{4}\varepsilon^2, \quad \forall n,m\ge n_{\varepsilon}.
\end{equation}
Subsequently, one can see that	
\begin{equation*}
\begin{split}
	\|\vcf_n^\rth-\vcf_m^\rth\|^2 
	&=
	\inner{\vcf_n^\rth}{\vcf_n^\rth}_{\Hk}
	-2\inner{\vcf_n^\rth}{\vcf_m^\rth}_{\Hk} + \inner{\vcf_m^\rth}{\vcf_m^\rth}_{\Hk}
	\\&\le
	(I + \frac{1}{4}\varepsilon^2)
	-2(I-\frac{1}{4}\varepsilon^2) +
	(I + \frac{1}{4} \varepsilon^2) 
	= 
	\varepsilon^2,
\end{split}
\end{equation*} 
for any $n,m\ge n_{\varepsilon}$. 
Accordingly,  $\{\vcf_n^\rth\}_{n=1}^{\infty}$ is a Cauchy sequence in the Hilbert space $\Hk$, which implies that there exists $\vcf^\rth=(f_s^\rth)_{s\in\Rbb_+}$ in $\Hk$ such that $\{\vcf_n^\rth\}_{n=1}^{\infty}$ converges to $\vcf^\rth$.
Due to the reproducing property, we know that $f_s^\rth-f_{n,s}^\rth=\inner{\vcf^\rth-\vcf_n^\rth}{\kernel_s}_{\Hk}$, for any $s\in\Rbb_+$.
Consequently,  from the Cauchy-Schwartz inequality, it follows that
\begin{equation}
\lim_{n\to\infty}|f_s^\rth-f_{n,s}^\rth| 
\le 	
\lim_{n\to\infty}\|\vcf^\rth-\vcf_n^\rth\|_{\Hk}\!\ \|\kernel_s\|_{\Hk} = 0,
\end{equation}
i.e., we have $\lim_{n\to\infty}f_{n,s}^\rth=f_s^\rth$.
On the other hand, according to \eqref{eqn:k_s1t1_k_s2t2}, one can see that
\begin{equation*}
\begin{split}
	|f_{n,s}^\rth\!\!\!\!&\ \ -\int_0^r\!\kernel(s,t)\drm t| 
	=
	\bigg|\sum_{i=0}^{n-1}
	\int_{\frac{i}{n}r}^{\frac{i+1}{n}r}\!\Big[
	\kernel(\frac{i}{n}r,s)-\kernel(t,s)\Big] \drm t
	\bigg|
	\\ 
	&\!\! \!\! \le 
	\sum_{i=0}^{n-1}
	\int_{\frac{i}{n}r}^{\frac{i+1}{n}r}\!\Big|
	\kernel(\frac{i}{n}r,s)-\kernel(t,s)\Big| \drm t
	\\&\!\! \!\! \le 
	\sum_{i=0}^{n-1}
	\frac{1}n \frac{\varepsilon^2}{4r^2} 
	= 
	\frac{\varepsilon^2}{4r^2}.
\end{split}
\end{equation*}
Subsequently, we have $f_s^\rth=\lim_{n\to\infty}f_{n,s}^\rth = \int_0^r\kernel(s,t)\drm t$, i.e.,
$\vcf^\rth=\int_0^r\kernel(\cdot,t)\drm t$.
Accordingly, from $\lim_{n\to\infty}\vcf_n^\rth=\vcf^\rth$, \eqref{eqn:I_inner_fn_fm_eps} and the definition of $I_r$, it follows that
\begin{equation}\label{eqn:norm_fr}
\begin{split}
	\Big\|\int_0^r &\!\kernel(\cdot,t)\drm t\Big\|_{\Hk}^2
	= \|\vcf^\rth\|_{\Hk}^2 \\
	%= \limOp_{n\to\infty}\|\vcf_n\|^2
	&= \limOp_{n\to\infty}\inner{\vcf_n^\rth}{\vcf_n^\rth}_{\Hk}
	= \int_0^r \!\int_0^r \!\kernel(s,t)\drm s\drm t.
\end{split}
\end{equation} 
Take an arbitrary $\vcg=(g_t)_{t\in\Rbb_+}$ in $\Hk$ and $t,\varepsilon\in\Rbb_+$ such that $t+\varepsilon\in\Rbb_+$.
Since $\kernel$ is symmetric and continuous, and due to the Cauchy-Schwartz inequality and the reproducing property, we have
\begin{equation*}
\begin{split}
	\lim_{\varepsilon\to 0}&|g_{t+\varepsilon}-g_t|
	=
	\lim_{\varepsilon\to 0}|\inner{\kernel_{t+\varepsilon}-\kernel_t}{\vcg}_{\Hk}|
	\\& \!\! \!\! \!\! \!\!  \!\! 
	\le 
	\lim_{\varepsilon\to 0}\|\kernel_{t+\varepsilon}-\kernel_t\|_{\Hk} \|\vcg\|_{\Hk}\\
	& \!\! \!\! \!\! \!\!  \!\! 
	= 
	\lim_{\varepsilon\to 0}
	\big[\kernel(t+\varepsilon,t+\varepsilon)-2\kernel(t,t+\varepsilon)+\kernel(t,t)\big]^{\frac12}
	\|\vcg\|_{\Hk} = 0,
\end{split}
\end{equation*}
which implies the continuity of $\vcg=(g_t)_{t\in\Rbb_+}$ as a function of $t$.
Therefore, the Riemann integral of $\vcg$ on $[0,r]$ is well-defined.
Accordingly, from the definition of $\vcf_n^\rth$, the reproducing property  and $\lim_{n\to\infty}\vcf_n^\rth = \vcf^\rth$,
it follows that 
\begin{equation*}
\begin{split}
	\int_0^r \!  g_t\drm t 
	&= 
	\lim_{n\to\infty}\frac{r}n\sum_{i=0}^{n-1}g(\frac{i}nr)
	\\&=
	\lim_{n\to\infty}\inner{\vcf_n^\rth}{\vcg}_{\Hk}
	\\&=
	\inner{\vcf^\rth}{\vcg}_{\Hk}
	= 
	\Big\langle\int_0^r \!\kernel(\cdot,t)\drm t,\vcg\Big\rangle_{\Hk}.
\end{split}
\end{equation*}
With respect to each $n\in\Nbb$, let $r_n$ be
\begin{equation*}
r_n:=\min\Big\{r\ge n
\Big|
\int_r^{\infty}\!\int_r^{\infty}\!|\kernel(s,t)|\drm s\drm t\le \frac1n\Big\},
\end{equation*}
which is well-defined since $\kernel$ is continuous and integrable.
Moreover, one can see that $\{r_n\}_{n=1}^{\infty}$ is an unbounded strictly increasing sequence.
For any $n,m\in\Nbb$, such that $n\le m$, we have 
\begin{equation}
\begin{split}
f_t^\rthn-f_t^\rthm	
&=
\int_0^{r_n}\!\kernel(s,t)\drm s
-
\int_0^{r_m}\!\kernel(s,t)\drm s
\\&=
\int_{r_m}^{r_n}\!\kernel(s,t)\drm s.
\end{split}	
\end{equation}
Accordingly, one can see that
\begin{equation*}
\begin{split}
	\|\vcf^\rthn-\vcf^\rthm\|_{\Hk}^2 
	%=\inner{\vcf^\rthn-\vcf^\rthm}{\vcf^\rthn-\vcf^\rthm}_{\Hk}\\&
	&=
	\inner{\vcf^\rthn}{\vcf^\rthn-\vcf^\rthm}_{\Hk}
	\!-\!
	\inner{\vcf^\rthm}{\vcf^\rthn-\vcf^\rthm}_{\Hk}
	\\&=
	\int_0^{r_n}\!\! f_t^\rthn-f_t^\rthm\drm t
	-
	\int_0^{r_m}\!\! f_t^\rthn-f_t^\rthm\drm t
	\\&=
	\int_{r_m}^{r_n}\!\int_{r_m}^{r_n}\!
	\kernel(s,t)\drm s\drm t
	\\&\le \int_{r_m}^{r_n}\!\int_{r_m}^{r_n}\!
	|\kernel(s,t)|\drm s\drm t
	\le \frac{1}{m}.
\end{split}
\end{equation*}
This implies that $\{\vcf^\rthn\}_{n=1}^\infty$ is a Cauchy sequence in $\Hk$.
Therefore, we know that there exists $\vcf=(f_s)_{s\in\Rbb_+}$ in $\Hk$ such that $\lim_{n\to\infty}\|\vcf-\vcf^\rthn\|_{\Hk}=0$. 
Subsequently, due to the reproducing property, for any $s\in\Rbb_+$, we have
\begin{equation*}
\begin{split}
	f_s =\inner{\vcf}{\kernel_{s}}_{\Hk} 
	&= \lim_{n\to\infty}\inner{\vcf^\rthn}{\kernel_{s}}_{\Hk}
	\\&= 
	\lim_{n\to\infty}
	\int_{0}^{r_n}\!
	\kernel(s,t)\drm t
	= 
	\int_{0}^{\infty}\!
	\kernel(s,t)\drm t,
\end{split}
\end{equation*}
where the last equality holds due to $\kernel_s\in\Hk$, $\Hk\subset\Lscrone$, and the dominated convergence theorem.
Note that $\vcf$ coincides with $\varphi_{0}$ defined by \eqref{eqn:phi_0t}.
Accordingly, due to $\lim_{n\to\infty}\vcf^\rthn=\vcf$, \eqref{eqn:norm_fr}, the definition of $\vcf$, and the dominated convergence theorem, we have
\begin{equation*}
\begin{split}
	\Big\|\int_0^\infty \kernel(\cdot,t)\drm t\Big\|_{\Hk}^2
	&= \|\vcf\|_{\Hk}^2 
	\\&= \lim_{n\to\infty}\|\vcf^\rthn\|_{\Hk}^2
	\\&= 
	\lim_{n\to\infty} \int_0^{r_n} \!\int_0^{r_n} \!\kernel(s,t)\drm s\drm t
	\\&=
	\int_0^{\infty} \!\int_0^{\infty} \!\kernel(s,t)\drm s\drm t.
\end{split}
\end{equation*}
Based on similar arguments, for any $\vcg=(g_t)_{t\in\Rbb_+}\in\Hk$, one can see that
\begin{equation*}
\begin{split}
	\Big\langle\int_0^\infty\!\kernel(\cdot,t)\drm t,\vcg\Big\rangle_{\Hk}
	&= 
	\inner{\vcf}{\vcg}_{\Hk}
	\\&=
	\lim_{n\to\infty}\inner{\vcf^\rthn}{\vcg}_{\Hk}
	\\&=
	\lim_{n\to\infty} 
	\Big\langle\int_0^{r_n}\kernel(\cdot,t)\drm t,\vcg\Big\rangle_{\Hk}
	\\&= \lim_{n\to\infty}\int_0^{r_n}\!\! g_t\drm t
	= \int_0^\infty\!\! g_t\drm t,
\end{split}
\end{equation*}  
where the last equality is due to the dominated convergence theorem and the fact that $\vcg$ is integrable.
Therefore, we have  $\ell_0(\vcg)=\inner{\vcf}{\vcg}_{\Hk}$, 
which implies \eqref{eqn:inner_phi_0_g}. This concludes the proof.
\qed
%================================
%================================
\subsection{Proof of Theorem~\ref{thm:G_nonempty_closed_convex}} \label{sec:appendix_proof_G_nonempty_closed_convex}
Let $\delta$ be a real scalar such that $\udelta\le\delta\le\odelta$. 
%Also, take $\vcg=(g_t)_{t\in\Tbb}$ in $\Hk$ as $\vcg=\kernel_\tau$, i.e., we have $g_t=\kernel(\tau,t)$, for each $t\in\Tbb$.
%One can see that $\|\vcg\|^2 = \inner{\kernel_\tau}{\kernel_\tau} = \kernel(\tau,\tau)$. 
We know that $\dc(\kernel_\tau)=\int_{\Rbb_+}\kernel(\tau,t)\drm t$ is non-zero.
Define $\vch$ as 
\begin{equation}\label{eqn:h_in_G}
	\vch
	=
	\frac{\delta}{\dc(\kernel_\tau)} \kernel_\tau	
	\in\Hk.
\end{equation}
One can see that $\dc(\vch)=\delta\in[\udelta,\odelta]$, and hence, $\vch\in\Gscr_{\kernel}([\udelta,\odelta])$.
Thus, $\Gscr_{\kernel}([\udelta,\odelta])$ is non-empty.
%Due to Corollary \ref{cor:dc_bounded}, we know that $\dc$ is a continuous operator. 
%Since $[\udelta,\odelta]$ is a closed set and $\Gscr_{\kernel}([\udelta,\odelta])=\dc^{-1}([\udelta,\odelta])$, it is implied that  $\Gscr_{\kernel}([\udelta,\odelta])$ is a closed subset of $\Hk$.
From Theorem~\ref{thm:dc_bounded}, we have
\begin{equation}\label{eqn:G_union}
	\begin{split}
		\Gscr_{\kernel}([\udelta,\odelta])
		&=	
		\big\{\vcg\in\Hk\big| \inner{\varphi_0}{\vcg}_{\Hk}\ge\udelta\big\}
		\\&\qquad\qquad
		\scalebox{1.5}{$\cap$}\!\ 
		\big\{\vcg\in\Hk\big|  \inner{\varphi_0}{\vcg}_{\Hk}\le\odelta\big\}.
	\end{split}	
\end{equation}
In the right-hand side of \eqref{eqn:G_union}, each of the sets is convex and closed.
Therefore, $\Gscr_{\kernel}([\udelta,\odelta])$ is a convex and closed subset of $\Hk$ as well. This concludes the proof.
\qed
%================================
%================================
\subsection{Proof of Lemma~\ref{thm:Lu_bounded}} \label{sec:appendix_proof_Lu_bounded}
The first part of theorem can be obtained directly from the Riesz representation theorem \cite{brezis2011functional}.  
Due to the reproducing property of kernel, for each $\tau\in\Tscr$, we have
\begin{equation}
	\phiu{\tau,t} = \inner{\phiu{\tau}}{\kernel_t}_{\Hk}=\Lu{\tau}(\kernel_t),\quad \forall t\in\Tbb.	
\end{equation}
Subsequently, one can see \eqref{eqn:phiu_tau_t} holds due to the definition of operator $\Lu{\tau}$. This concludes the proof.
\qed
%================================
%================================
\subsection{Proof of Theorem~\ref{thm:main_thm}} \label{sec:appendix_proof_main_thm}
Let $\Jcal:\Hk\to\Rbb\cup\{+\infty\}$ be defined as
\begin{equation}\label{eqn:J}
	\Jcal(\vc{g}) 
	:= 
	\Ecal_{\ell}(\vc{g}) + \lambda \Rcal(\vcg) 
	+ 
	\delta_{\Gscr_{\kernel}([\udelta,\odelta])}(\vc{g}), 
	\quad\forall \vc{g}\in\Hk.
\end{equation}
One can see that $\min_{g\in\Hk}\Jcal(g)$ is equivalent to \eqref{eqn:opt_dc_gain_1}.
For $\vch$ introduced in \eqref{eqn:h_in_G}, we know $\vch \in \Gscr_{\kernel}([\udelta,\odelta])$. Therefore, one can see that $\delta_{\Gscr_{\kernel}([\udelta,\odelta])}(\vch)=0$, and subsequently,  we have
\begin{equation}
	0
	\le
	\Jcal(\vch)
	=
	\Ecal_{\ell}(\vch)+\lambda\rho(\big\|\vch\big\|_{\Hk}) 
	< \infty,
\end{equation}
i.e., function $\Jcal$ is proper.
Due to Theorem \ref{thm:G_nonempty_closed_convex},  $\Gscr_{\kernel}([\udelta,\odelta])$ is a convex and closed subset of $\Hk$. Accordingly, $\delta_{\Gscr_{\kernel}([\udelta,\odelta])}:\Hk\to\Rbb\cup\{+\infty\}$ is a proper lower semi-continuous convex function \cite{peypouquet2015convex}. 
Moreover, from Lemma \ref{thm:Lu_bounded}, and also
the continuity and the convexity of the function $\ell$, it follows that $\Ecal_{\ell}:\Hk\to\Rbb_+$ is convex and continuous.
The convexity of function $\rho:\Rbb_+\to\Rbb_+$ implies its continuity. Furthermore, since $\rho$ is strictly increasing, we know that  $\Rcal:\Hk\to\Rbb_+$ is a strictly convex  continuous function.
Thus, $\Jcal:\Hk\to\Rbb\cup\{+\infty\}$
is a proper lower semi-continuous strictly convex function, and subsequently, the optimization problem $\min_{g\in\Hk}\Jcal(g)$ admits a unique finite solution \cite{peypouquet2015convex}, denoted by $\gstar$.
From the definition of $\Gscr_{\kernel}([\udelta,\odelta])$, Theorem \ref{thm:dc_bounded}, one has
\begin{equation}\label{eqn:delta_Gk_z}
	\delta_{\Gscr_{\kernel}([\udelta,\odelta])}(\vc{g}) = \delta_{[\udelta,\odelta]}(\inner{\varphi_0}{\vc{g}}_{\Hk}), \qquad \forall\vcg\in\Hk.	
\end{equation}
Also, according to Lemma \ref{thm:Lu_bounded}, we know that
\begin{equation}\label{eqn:Ecal_ell_z}
	\Ecal_{\ell}(\vc{g}) 
	=
	%\ell([\Lu{t_i}(\vcg)]_{i\in\Ical},\vcy_{\Ical})=
	\ell([\inner{\varphi_i}{\vcg}_{\Hk}]_{i=1}^{\nI},\vcy_{\Ical}), \qquad \forall\vcg\in\Hk.	
\end{equation}
Let function $e:\Rbb^{\nI+1}\to\Rbb_+\cup\{+\infty\}$ be defined as 
\begin{equation}
	e(z_0,\ldots,z_{\nI})
	=
	\ell([z_i]_{i=1}^{\nI},\vcy_{\Ical})
	+
	\lambda 
	\delta_{[\udelta,\odelta]}(z_0),
\end{equation}
for any $[z_0,\ldots,z_{\nI}]^\tr \in \Rbb^{\nI+1}$.
Then, due to \eqref{eqn:delta_Gk_z} and \eqref{eqn:Ecal_ell_z}, one can see that
\begin{equation}
	e(\inner{\varphi_0}{\vcg}_{\Hk},\ldots,\inner{\varphi_{\nI}}{\vcg}_{\Hk})
	=
	\Ecal_{\ell}(\vcg)
	+
	\lambda
	\delta_{\Gscr_{\kernel}([\udelta,\odelta])}(\vc{g}),
\end{equation}
for any $\vcg\in\Hk$.
Hence, from Theorem \ref{thm:rep_thm} and since $\min_{g\in\Hk}\Jcal(g)$ admits a solution, 
it has a solution in $\linspan\{\varphi_i\}_{i=0}^{\nI}$ as well. 
According to the uniqueness of this solution, we know that $\gstar$ belongs to 
$\linspan\{\varphi_i\}_{i=0}^{\nI}$, i.e., $\gstar$ has the parametric form given in \eqref{eqn:opt_dc_gain_2}.
For each $\vcg\in\linspan\{\varphi_i\}_{i=0}^{\nI}$, we know that there exists $\vcx = [x_0,\ldots,x_{\nI}]^\tr\in\Rbb^{\nI+1}$ such that 
$\vcg=\sum_{j=0}^{\nI}x_j\varphi_j$.
This implies that
\begin{equation}\label{eqn:l0_g_x}
	\!\!\!
	\dc(\vcg) 
	=
	\Big\langle\varphi_0,\sum_{j=0}^{\nI}x_j\varphi_j\Big\rangle_{\Hk}\!
	=
	\sum_{j=0}^{\nI}\inner{\varphi_0}{\varphi_j}_{\Hk}x_j
	=
	\vca_0^\tr\vcx.
\end{equation}
Similarly, we have
\begin{equation}\label{eqn:Lu_g_x}
	\begin{split}
		\big[\Lu{t_i}(\vcg)\big]_{i\in\Ical} 
		&
		= 
		\Big[\Big\langle\varphi_i,\sum_{j=0}^{\nI}x_j\varphi_j\Big\rangle_{\Hk}\Big]_{i=1}^{\nI}
		\\&=
		\Big[\sum_{j=0}^{\nI}\inner{\varphi_i}{\varphi_j}_{\Hk}x_j\Big]_{i=1}^{\nI}
		= \mxA\vcx. 	
	\end{split}
\end{equation}
%\begin{equation}\label{eqn:Lu_g_x}
%%\begin{split}
%\big[\Lu{t_i}(\vcg)\big]_{i\in\Ical} 
%%&
%\!= \!
%\Big[\! \inner{\varphi_i}{\!\sum_{j=0}^{\nD}x_j\varphi_j}\Big]_{i\in\Ical}
%%\\&
%\!=\!
%\Big[\!\sum_{j=0}^{\nD}\!\inner{\varphi_i}{\varphi_j}x_j\Big]_{i\in\Ical} 
%\!=\! \mxA\vcx. 	
%%\end{split}
%\end{equation}
Moreover, from the definition of matrix $\Phi$, one can see that
%%\begin{equation}\label{eqn:norm_g_x}
%%\begin{split}
%%\|\vcg\|
%%&= 
%%\Big(
%%\inner{\sum_{i=0}^{\nD}x_i\varphi_i}
%%{\sum_{j=0}^{\nD}x_j\varphi_j}
%%\Big)^{\frac12}
%%\\&
%%=
%%\Big(
%%\sum_{i,j=0}^{\nD}
%%x_i \inner{\varphi_i}{\varphi_j}x_j
%%\Big)^{\frac12}
%%=
%%(\vcx^\tr\Phi\vcx)^{\frac12}.
%%\end{split}
%%\end{equation}
\begin{equation}\label{eqn:norm_g_x}
	\begin{split}
		\|\vcg\|_{\Hk}^2
		&= 
		\Big\langle\sum_{i=0}^{\nI}x_i\varphi_i,\sum_{j=0}^{\nI}x_j\varphi_j\Big\rangle_{\Hk}
		\\&=
		\sum_{i,j=0}^{\nI}
		x_i \inner{\varphi_i}{\varphi_j}_{\Hk}x_j
		=
		\vcx^\tr\Phi\vcx.
	\end{split}
\end{equation}
Accordingly, due to \eqref{eqn:l0_g_x}, \eqref{eqn:Lu_g_x} and  \eqref{eqn:norm_g_x}, 
we obtain convex program \eqref{eqn:opt_dc_gain_2} by replacing $\vcg$ in \eqref{eqn:opt_dc_gain_1} with its parametric form.  
This concludes the proof.	
\qed
%================================
%================================
\subsection{Proof of Corollary~\ref{cor:main_thm_SE}} \label{sec:appendix_proof_main_thm_SE}
The convex program \eqref{eqn:opt_dc_gain_3} is a special case of \eqref{eqn:opt_dc_gain_1} where 
$\Ical=\{1,\ldots,\nD\}$, 
function $\ell:\Rbb^{\nD}\times\Rbb^{\nD}\to\Rbb_+$
is defined as $\ell(\vcv_1,\vcv_2)=\|\vcv_1-\vcv_2\|^2$, for $\vcv_1,\vcv_2\in\Rbb^{\nD}$, and, function $\rho:\Rbb_+\to\Rbb_+$ is defined as $\rho(r)=r^2$, for $r\in\Rbb_+$.
Subsequently, the existence and the uniqueness of the solution, and also the parametric form \eqref{eqn:gstar_parametric_form} are provided by Theorem \ref{thm:main_thm}.
Moreover, for the given $\ell$ and $\rho$, the optimization problem \eqref{eqn:opt_dc_gain_2} is reformulated as the following quadratic program
\begin{equation}\label{eqn:opt_dc_gain_4}
	\begin{array}{cl}
		\minOp_{\vcx\in\Rbb^{\nD+1}} & \ \|\mxA\vcx-\vcy\|^2
		+
		\lambda \vcx^\tr\Phi\vcx\\
		\mathrm{s.t.} & \ \vca_0^\tr\vcx=\delta.
	\end{array}	
\end{equation}
One can see that \eqref{eqn:QP_KKT} is the first-order necessary optimality condition for \eqref{eqn:opt_dc_gain_4}, and $\gamma$ is the Lagrange multiplier corresponding to the steady-state gain constraint $\vca_0^\tr\vcx=\delta$. %\cite{boyd2004convex}.
When $\varphi_0,\ldots,\varphi_{\nD}$ are linearly independent, the Gram matrix $\Phi$ is positive definite, and consequently, the objective function of \eqref{eqn:opt_dc_gain_4} is strongly convex. 
Therefore, \eqref{eqn:opt_dc_gain_4} has a unique solution $\xstar$.
By replacing $\gamma$ with $\delta-\vca_0^\tr\vcx$ in \eqref{eqn:QP_KKT} and applying matrix inversion lemma, one can solve linear system of equations \eqref{eqn:QP_KKT} and obtain $\xstar$  as in \eqref{eqn:xstar}.	
\qed
%================================
%================================
\subsection{Proof of Theorem~\ref{thm:Phi_DT}} \label{sec:appendix_proof_Phi_DT}
From Lemma \ref{thm:Lu_bounded}, the definition of matrices  $\mx{K}$ and $\mx{T}_{\vc{u}}$, and since $u_t=0$ for $t<0$, one can see that
\begin{equation}
	\begin{split}
		%& \!\!\!\!\!\!\!\!
		\inner{\varphi_i}{\varphi_j}
		&=
		\sum_{s=0}^{i-1}\varphi_{j,s}u_{i-1-s}
		\\&=
		\sum_{s=0}^{i-1}
		\sum_{t=0}^{j-1}
		\kernel(s,t)u_{j-1-t}
		u_{i-1-s}
		\\&=[\mx{T}_{\vc{u}}\mx{K}\mx{T}_{\vc{u}}^\tr
		]_{(i,j)},
	\end{split}
\end{equation}
for any $i,j\in\{1,\ldots,\nD\}$.
Moreover, due to \eqref{eqn:phi_0t}, we have
\begin{equation}
	\inner{\varphi_i}{\varphi_0}
	=
	\sum_{s=0}^{i-1}\varphi_{0,s}u_{i-1-s}
	=
	[\mx{T}_{\vc{u}}\varphi]_{(i)},
\end{equation}
for any $i\in\{1,\ldots,\nD\}$.
Following this, the claim concludes from the definition of matrix $\Phi$ and the fact that $\Phi=\Phi^\tr$.
\qed
%================================
%================================
\subsection{Steady-State Gain Representer for Discrete-Time Standard Stable Kernels}
\label{sec:phi_0t_norm_phi_0_kTC_kDC_kSS}
%Standard kernels: $\kernelTC$,  $\kernelDC$ and  $\kernelSS$.  
\begin{theorem}\label{thm:phi_0t_DT}
	\emph{i)} For kernel $\kernelTC$, we have
	\begin{equation}\label{eqn:phi_0t_kTC_DT}
		\varphi_{\TC,0,t} =  
			\left(t+\frac{1}{1-\alpha}\right)\alpha^t,			
			\quad\forall t\in\Zbb_+.
	\end{equation}
	\emph{ii)} For kernel $\kernelDC$, we have
	\begin{equation}\label{eqn:phi_0t_kDC_DT}
		\varphi_{\DC,0,t} =  
			\Big[\frac{\gamma-\gamma^{t+1}}{1-\gamma}
			+
			\frac{1}{1-\alpha\gamma}\Big]\alpha^{t},		
			\quad\forall t\in\Zbb_+.
	\end{equation}
	\emph{iii)} For kernel $\kernelSS$, we have
	\begin{equation}\label{eqn:phi_0t_kSS_DT}
		\varphi_{\SS,0,t} = 
			\Big[\frac{1+\alpha-\alpha^{t+1}}{1-\alpha^2}-\frac{\alpha^{t}}{3(1-\alpha^3)}-\frac{t\alpha^{t}}{3}\Big]\alpha^{2t}\!,		
			\quad\forall t\in\Zbb_+.
	\end{equation}
\end{theorem}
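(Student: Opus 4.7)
My plan is to compute $\varphi_{0,t}$ in each of the three cases by direct evaluation of the defining series
\begin{equation*}
\varphi_{0,t}=\sum_{s\in\Zbb_+}\kernel(t,s)
\end{equation*}
supplied by Theorem \ref{thm:dc_bounded_DT}. The common mechanism is that each of $\kernelTC$, $\kernelDC$, $\kernelSS$ is built from the terms $\alpha^{\max(s,t)}$ and $\gamma^{|s-t|}$, which are piecewise in the single threshold $s=t$. Splitting the summation into $\{s=0,\dots,t\}$ and $\{s=t+1,\dots,\infty\}$ eliminates the $\max$ and the absolute value, after which every remaining series is an elementary (finite or infinite) geometric sum. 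Convergence of the infinite tails is assured by the standing assumptions $\alpha\in(0,1)$ and $\gamma\in(-\alpha^{-1/2},\alpha^{-1/2})$, which imply $|\alpha|<1$, $|\alpha\gamma|<1$, $|\alpha^2|<1$, and $|\alpha^3|<1$; moreover Theorem \ref{thm:dc_bounded_DT} already guarantees that $\varphi_0$ is well-defined, so no further convergence issues arise.

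For $\kernelTC$ the splitting gives $\sum_{s=0}^{t}\alpha^{t}+\sum_{s=t+1}^{\infty}\alpha^{s}=(t+1)\alpha^{t}+\alpha^{t+1}/(1-\alpha)$, and the identity $1+\alpha/(1-\alpha)=1/(1-\alpha)$ rearranges this into \eqref{eqn:phi_0t_kTC_DT}. For $\kernelDC$ the same splitting yields three pieces,
\begin{equation*}
\alpha^{t}\!\sum_{s=0}^{t-1}\gamma^{\,t-s} \;+\; \alpha^{t} \;+\; \sum_{s=t+1}^{\infty}\alpha^{s}\gamma^{\,s-t},
\end{equation*}
where the first is a finite geometric sum in $\gamma$ (producing $\alpha^{t}(\gamma-\gamma^{t+1})/(1-\gamma)$) and the third is a geometric series in $\alpha\gamma$ after pulling out $\gamma^{-t}$ (producing $\alpha^{t+1}\gamma/(1-\alpha\gamma)$); combining the middle term with the third via $1+\alpha\gamma/(1-\alpha\gamma)=1/(1-\alpha\gamma)$ delivers \eqref{eqn:phi_0t_kDC_DT}.

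The $\kernelSS$ case is the only one that requires care. I would treat the two summands $\alpha^{\max(s,t)+s+t}$ and $\tfrac{1}{3}\alpha^{3\max(s,t)}$ separately, each subject to the split at $s=t$, giving four geometric series in $\alpha$, $\alpha^2$, and $\alpha^3$. Summing them produces
\begin{equation*}
\alpha^{2t}\Bigl[\tfrac{1-\alpha^{t+1}}{1-\alpha}+\tfrac{\alpha^{t+2}}{1-\alpha^{2}}-\tfrac{(t+1)\alpha^{t}}{3}-\tfrac{\alpha^{t+3}}{3(1-\alpha^{3})}\Bigr].
\end{equation*}
The main obstacle, such as it is, is the algebraic repackaging into the compact form \eqref{eqn:phi_0t_kSS_DT}: one uses $-\alpha^{t+1}/(1-\alpha)+\alpha^{t+2}/(1-\alpha^{2})=-\alpha^{t+1}/(1-\alpha^{2})$ (via $1-\alpha^{2}=(1-\alpha)(1+\alpha)$) to recover $(1+\alpha-\alpha^{t+1})/(1-\alpha^{2})$, and similarly $-\alpha^{t}/3-\alpha^{t+3}/(3(1-\alpha^{3}))=-\alpha^{t}/(3(1-\alpha^{3}))$ via $1-\alpha^{3}=(1-\alpha)(1+\alpha+\alpha^{2})$; the residual linear-in-$t$ piece $-t\alpha^{t}/3$ then stands on its own. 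This yields \eqref{eqn:phi_0t_kSS_DT} and completes the proof.
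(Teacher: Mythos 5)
Your proposal is correct and follows essentially the same route as the paper: split each sum at the threshold $s=t$ (your split at $\{0,\dots,t\}$ versus the paper's $\{0,\dots,t-1\}$ is immaterial), evaluate the resulting finite and infinite geometric series, and repackage algebraically; the paper merely streamlines part iii) by reusing the TC result with $\alpha\mapsto\alpha^3$ for the $\frac13\alpha^{3\max(s,t)}$ term, while you sum that piece directly. All of your intermediate sums and simplification identities check out, so no gaps.
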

\begin{proof}
	i) From \eqref{eqn:phi_0t}, we have
	\begin{equation*}
		\begin{split}\!\!\!\!\!
			\varphi_{\TC,0,t} &
			%=\sum_{s\ge 0}\alpha^{\max(t,s)}
			=
			\sum_{s=0}^{t-1}\alpha^{\max(s,t)}
			+
			\sum_{s\ge t}\alpha^{\max(s,t)}
			\\&=
			t\alpha^t
			+
			\frac{\alpha^t}{1-\alpha}
			= \left(t+\frac{1}{1-\alpha}\right)\alpha^t.
		\end{split}
	\end{equation*}
	ii) Due to \eqref{eqn:phi_0t}, one has
	\begin{equation*}
		\begin{split}\!\!\!\!\!
			\varphi_{\DC,0,t} &
			%=\sum_{s\ge 0}\alpha^{\max(t,s)}
			=
			\sum_{s=0}^{t-1}\alpha^{\max(s,t)}\gamma^{|s-t|}
			+
			\sum_{s\ge t}\alpha^{\max(s,t)}\gamma^{|s-t|}
			\\&=
			\alpha^{t}\sum_{s=0}^{t-1}\gamma^{-s+t}
			+
			\alpha^t\sum_{s\ge t}\alpha^{s-t}\gamma^{s-t}
			\\&=
			\alpha^{t}\gamma\frac{1-\gamma^t}{1-\gamma}
			+
			\frac{\alpha^{t}}{1-\alpha\gamma}
			=
			\Big[\frac{\gamma-\gamma^{t+1}}{1-\gamma}
			+
			\frac{1}{1-\alpha\gamma}\Big]\alpha^{t}.
		\end{split}
	\end{equation*}
	iii) We know that
	\begin{equation*}
		\begin{split}\!\!\!\!\!
			\sum_{s=0}^{t-1}&\alpha^{\max(s,t)}\alpha^{s+t}
			+
			\sum_{s\ge t}\alpha^{\max(s,t)}\alpha^{s+t}
			\\&=
			\alpha^{2t}\sum_{s=0}^{t-1}\alpha^{s}
			+
			\alpha^t\sum_{s\ge t}\alpha^{2s}
			\\&=
			\alpha^{2t}\frac{1-\alpha^t}{1-\alpha}
			+
			\frac{\alpha^{3t}}{1-\alpha^2}
			=
			\Big[\frac{1+\alpha-\alpha^{t+1}}{1-\alpha^2}\Big]\alpha^{2t}.
		\end{split}
	\end{equation*}
	Following this, from \eqref{eqn:phi_0t} and $\varphi_{\TC,0,t}$, we have
	\begin{equation}
		\begin{split}
			\varphi_{\SS,0,t}&=
			\Big[\frac{1+\alpha-\alpha^{t+1}}{1-\alpha^2}\Big]\alpha^{2t}
			-\frac{1}{3}(t+\frac{1}{1-\alpha^3})\alpha^{3t}
			\\&=\Big[\frac{1+\alpha-\alpha^{t+1}}{1-\alpha^2}-\frac{\alpha^{t}}{3(1-\alpha^3)}-\frac13t\alpha^{t}\Big]\alpha^{2t}.
		\end{split}
	\end{equation}
This concludes the proof.
\end{proof}	
%========================================
\begin{theorem}\label{thm:norm_phi_0_DT}
	\emph{i)} For kernel $\kernelTC$, we have
	\begin{equation}\label{eqn:norm_phi_0t_kTC_DT}
		\|\varphi_{\TC,0}\|_{\Hk}^2 =  
			\frac{\alpha+1}{(1-\alpha)^2}.
	\end{equation}
	\emph{ii)} For kernel $\kernelDC$, we have
	\begin{equation}
		\|\varphi_{\DC,0}\|_{\Hk}^2 =   
			\frac{1+\alpha\gamma}{(1-\alpha)(1-\alpha\gamma)}.
	\end{equation}
	\emph{iii)} For kernel $\kernelSS$, we have
	\begin{equation}
		\|\varphi_{\SS,0}\|_{\Hk}^2 = 
			\frac{2}{3}
			\frac{\alpha^4 + \alpha^3 + 3\alpha^2 + \alpha + 1}{(\alpha^3 - 1)^2(\alpha + 1)}.
	\end{equation}
\end{theorem}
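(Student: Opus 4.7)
\smallskip
\noindent\emph{Proof Proposal.} The starting point is Theorem~\ref{thm:dc_bounded_DT}, which gives two equivalent expressions: $\|\varphi_{0}\|_{\Hk}^{2}=\sum_{s,t\in\Zbb_{+}}\kernel(s,t)$ from \eqref{eqn:norm_phi_0_DT}, and, combining the reproducing-type identity \eqref{eqn:inner_phi_0_g_DT} with $\vcg=\varphi_{0}$, also $\|\varphi_{0}\|_{\Hk}^{2}=\inner{\varphi_{0}}{\varphi_{0}}_{\Hk}=\dc(\varphi_{0})=\sum_{t\in\Zbb_{+}}\varphi_{0,t}$. The second form is the more convenient one here, since the closed forms of $\varphi_{0,t}$ for the three standard kernels have already been derived in Theorem~\ref{thm:phi_0t_DT}. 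Thus the plan is simply to substitute \eqref{eqn:phi_0t_kTC_DT}, \eqref{eqn:phi_0t_kDC_DT}, \eqref{eqn:phi_0t_kSS_DT} into $\sum_{t\in\Zbb_{+}}\varphi_{0,t}$ and reduce everything to the two elementary identities $\sum_{t\ge 0}r^{t}=(1-r)^{-1}$ and $\sum_{t\ge 0}t r^{t}=r(1-r)^{-2}$, valid whenever $|r|<1$. The hypotheses on the hyperparameters ($\alpha\in(0,1)$, and $|\gamma|<\alpha^{-1/2}$ so that $|\alpha\gamma|<1$) guarantee absolute convergence of every series that appears, so this termwise summation is legitimate.

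For $\kernelTC$, formula \eqref{eqn:phi_0t_kTC_DT} gives $\sum_{t\ge 0}\bigl(t+\tfrac{1}{1-\alpha}\bigr)\alpha^{t}=\tfrac{\alpha}{(1-\alpha)^{2}}+\tfrac{1}{(1-\alpha)^{2}}=\tfrac{1+\alpha}{(1-\alpha)^{2}}$, which is \eqref{eqn:norm_phi_0t_kTC_DT}. For $\kernelDC$, \eqref{eqn:phi_0t_kDC_DT} decomposes $\varphi_{0,t}$ into three geometric pieces in the ratios $\alpha$, $\alpha\gamma$, and $\alpha$ again, producing three explicit fractions; placing them over the common denominator $(1-\alpha)(1-\gamma)(1-\alpha\gamma)$ yields a numerator $\gamma(1-\alpha\gamma)-\gamma(1-\alpha)+(1-\gamma)=(1-\gamma)(1+\alpha\gamma)$, and the factor $(1-\gamma)$ cancels to give the claimed $\tfrac{1+\alpha\gamma}{(1-\alpha)(1-\alpha\gamma)}$.

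For $\kernelSS$, the same mechanism applies but is the main obstacle: \eqref{eqn:phi_0t_kSS_DT} decomposes $\varphi_{0,t}$ into four pieces whose summations produce terms with ratios $\alpha^{2}$ and $\alpha^{3}$, namely $\tfrac{1+\alpha}{(1-\alpha^{2})^{2}}-\tfrac{\alpha}{(1-\alpha^{2})(1-\alpha^{3})}-\tfrac{1}{3(1-\alpha^{3})^{2}}-\tfrac{\alpha^{3}}{3(1-\alpha^{3})^{2}}$. These fractions must be brought over a common denominator built from $(1-\alpha^{2})^{2}$ and $(1-\alpha^{3})^{2}$; using $1-\alpha^{2}=(1-\alpha)(1+\alpha)$ and $1-\alpha^{3}=(1-\alpha)(1+\alpha+\alpha^{2})$, the common denominator becomes $3(1-\alpha)^{2}(1+\alpha)(1-\alpha^{3})^{2}$ (after factoring). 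The remaining step is a polynomial identity in $\alpha$: the collected numerator equals $2(\alpha^{4}+\alpha^{3}+3\alpha^{2}+\alpha+1)(1-\alpha)^{2}$, so the factors $(1-\alpha)^{2}$ cancel against those in the denominator, leaving $\tfrac{2(\alpha^{4}+\alpha^{3}+3\alpha^{2}+\alpha+1)}{3(\alpha+1)(1-\alpha^{3})^{2}}$, which agrees with the stated expression since $(1-\alpha^{3})^{2}=(\alpha^{3}-1)^{2}$.

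In summary, the proof is essentially a three-line computation per kernel once $\|\varphi_{0}\|_{\Hk}^{2}=\sum_{t}\varphi_{0,t}$ is recognized; the only part requiring care is the polynomial bookkeeping for the stable spline case, which can be verified either by direct expansion or, more cleanly, by checking that the difference of the two rational functions vanishes on an infinite set of values of $\alpha$ and hence identically.
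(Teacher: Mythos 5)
Your proposal is correct and takes essentially the same route as the paper: the paper's proof likewise starts from $\|\varphi_{0}\|_{\Hk}^{2}=\sum_{t\in\Zbb_+}\varphi_{0,t}$, substitutes the closed forms of Theorem~\ref{thm:phi_0t_DT}, and evaluates the resulting geometric and arithmetic--geometric series. The intermediate groupings differ only cosmetically (e.g., your four-piece split for $\kernelSS$ versus the paper's regrouped terms), and your simplified numerators for the $\kernelDC$ and $\kernelSS$ cases agree with the paper's final expressions.
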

\begin{proof}
	i) We know that $\|\varphi_{0}\|^2  = \sum_{t\in\Zbb_+}\varphi_{0,t}$. Hence,
	\begin{equation*}
		\begin{split}\!\!\!\!\!
			\|\varphi_{\TC,0}\|_{\Hk}^2 
			&=\sum_{t\in\Zbb_+}t\alpha^{t} + 
			\frac{1}{1-\alpha}\sum_{t\in\Zbb_+}\alpha^{t}
			\\&=
			\frac{\alpha}{(1-\alpha)^2}+\frac{1}{(1-\alpha)^2}=
			\frac{1+\alpha}{(1-\alpha)^2}.
		\end{split}
	\end{equation*}
	ii) Similarly to the proof of part i), for $\kernel=\kernelDC$, we have
	\begin{equation*}
		\begin{split}\!\!\!\!\!
			\|\varphi_{\DC,0}\|_{\Hk}^2 
			& = 
			\Big[\frac{\gamma}{1-\gamma}
			+
			\frac{1}{1-\alpha\gamma}\Big]\sum_{t\in\Zbb_+}\alpha^{t}
			- \frac{\gamma}{(1-\gamma)}
			\sum_{t\in\Zbb_+}(\alpha\gamma)^{t}
			\\=&
			\Big[\frac{\gamma}{1-\gamma} + \frac{1}{1-\alpha\gamma}\Big]\frac{1}{1-\alpha}
			- \frac{\gamma}{(1-\gamma)}\frac{1}{(1-\alpha\gamma)}
			\\=&
			\frac{1-\alpha\gamma^2-\gamma+\alpha\gamma}{(1-\alpha)(1-\gamma)(1-\alpha\gamma)}=
			\frac{1+\alpha\gamma}{(1-\alpha)(1-\alpha\gamma)}.
		\end{split}
	\end{equation*}
	iii) Similar to the previous parts, one has
	\begin{equation*}
		\begin{split}
			\|&\varphi_{\SS,0}\|_{\Hk}^2 
			\\&=
			\!\sum_{t\in\Zbb_+}\!\frac{\alpha^{2t}}{1-\alpha}-
			\Big[\frac{\alpha}{1-\alpha^2}
			+
			\frac{1}{3(1-\alpha^3)}\Big]
			\!\sum_{t\in\Zbb_+}\!
			\alpha^{3t}
			- 
			\!\sum_{t\in\Zbb_+}\!
			\frac{t\alpha^{3t}}{3}
			\\&=
			\frac{1}{(1-\alpha)(1-\alpha^2)}
			- 
			\frac{3\alpha-3\alpha^4+1-\alpha^2}{3(1-\alpha^2)(1-\alpha^3)^2}
			-\frac{\alpha^3}{3(1-\alpha^3)^2}
			\\&=\frac{2}{3}
			\frac{\alpha^4 + \alpha^3 + 3\alpha^2 + \alpha + 1}{(\alpha^3 - 1)^2(\alpha + 1)}.
		\end{split}
	\end{equation*}
This concludes the proof.
\end{proof}	
%================================
%================================
\subsection{Proof of Theorem~\ref{thm:phiu_tau_t_general}} \label{sec:appendix_proof_phiu_tau_t_general}
	From \eqref{eqn:phiu_tau_t} and \eqref{eqn:u_pw}, one can see that
\begin{equation}
	\begin{split}
		\phiu{\tau,t} 
		%&=\inner{\phiu{\tau}}{\kernel_t}_{\Hk} 
		%\\
		&= 
		\int_{\Rbb_+}\kernel(t,s)u_{\tau-s}\drm s
		\\
		&= 
		\int_{\Rbb_+}\kernel(t,s)\sum_{i=0}^{\nS-1} \xi_{i+1} \one_{[s_i,s_{i+1})}(\tau-s)\drm s
		\\&= 
		\sum_{i=0}^{\nS-1} \xi_{i+1} \int_{\bar{s}_{i+1}(\tau)}^{\bar{s}_{i}(\tau)}\kernel(t,s)\drm s.
	\end{split}
\end{equation}	
Following this, the claim is implied by the definition of function $\psi$ in \eqref{eqn:psi_def_general}. 
%================================
%================================
\subsection{Proof of Theorem~\ref{thm:psi_TC_DC_SS}} \label{sec:appendix_proof_psi_TC_DC_SS}
First, we consider the case of TC kernel. For $t\le a$, we have
\begin{equation}\label{eqn:psi_TC_t_le_a}
	\psiTC(t,a,b)
	= \int_a^b\alpha^s\drm s 
	= \frac{\alpha^b-\alpha^a}{\ln(\alpha)}.
\end{equation}	
Also, for $t\ge b$, one has
\begin{equation}\label{eqn:psi_TC_t_ge_b}
	\psiTC(t,a,b)
	= \int_a^b\alpha^t\drm s 
	= (b-a)\alpha^t.
\end{equation}	
Similar to the previous cases, when $t\in(a,b)$, we have
\begin{equation}
	\label{eqn:psi_TC_t_ge_a_le_b}
	\begin{split}
		\psiTC(t,a,b)
		&=
		\int_a^t\alpha^t\drm s
		+ 
		\int_t^b\alpha^s\drm s 
		\\&
		=
		(t-a)\alpha^t
		+
		\frac{\alpha^b-\alpha^t}{\ln(\alpha)}.
	\end{split}
\end{equation}	
Due to 
\eqref{eqn:psi_TC_t_le_a}, 
\eqref{eqn:psi_TC_t_ge_b}, 
\eqref{eqn:psi_TC_t_ge_a_le_b}, 
and the definition of $\eta$, one can see 
\eqref{eqn:psi_TC} holds. 
We have a similar argument for DC kernel.
More precisely, for $t\le a$, one can see that
\begin{equation}\label{eqn:psi_DC_t_le_a}
	\begin{split}
		\psiDC(t,a,b)
		&=
		\int_a^b \alpha^s\gamma^{s-t}\drm s 
		%\\&= 
		%\gamma^{-t}
		%\int_a^b (\alpha\gamma)^s\drm s 
		= \frac{(\alpha\gamma)^b-(\alpha\gamma)^a}{\ln(\alpha\gamma)}\gamma^{-t}.
	\end{split}
\end{equation}	
Also, for $t\ge b$, we have
\begin{equation}\label{eqn:psi_DC_t_ge_b}
	\begin{split}
		\psiDC(t,a,b)
		&=
		\int_a^b \alpha^t\gamma^{t-s}\drm s 
		%\\&= 
		%(\alpha\gamma)^t
		%\int_a^b \gamma^{-s} \drm s 
		= 
		\frac{\gamma^{-a}-\gamma^{-b}}{\ln(\gamma)}(\alpha\gamma)^t.
	\end{split}
\end{equation}	
Similarly, if $t\in(a,b)$, one has
\begin{equation}
	\label{eqn:psi_DC_t_ge_a_le_b}
	\begin{split}
		\psiDC(t,a,b)
		&=
		\int_a^t\alpha^t\gamma^{t-s}\drm s
		+ 
		\int_t^b\alpha^s\gamma^{s-t}\drm s 
		\\&
		=
		\frac{\gamma^{-a}-\gamma^{-t}}{\ln(\gamma)}(\alpha\gamma)^t
		+ 
		\frac{(\alpha\gamma)^b-(\alpha\gamma)^t}{\ln(\alpha\gamma)}\gamma^{-t}.
	\end{split}\!\!\!\!
\end{equation}	
From the definition of $\eta$, one can see that \eqref{eqn:psi_DC_t_le_a}, 
\eqref{eqn:psi_DC_t_ge_b} and
\eqref{eqn:psi_DC_t_ge_a_le_b} implies
\eqref{eqn:psi_TC}.
It remains to obtain the result for SS kernel. For $t\le a$, we have
\begin{equation}\label{eqn:psi_SS_t_le_a}
	\begin{split}
		\int_a^b& \alpha^{\max(s,t)+s+t}\drm s 
		=
		\int_a^b \alpha^{2s+t}\drm s 
		%\\&= 
		%\alpha^{t}
		%\int_a^b \alpha^{2s}\drm s 
		=
		\frac{\alpha^{2b}-\alpha^{2a}}{2\ln(\alpha)}\alpha^{t}.
	\end{split}
\end{equation}	
Also, if $t\ge b$, one can see that
\begin{equation}\label{eqn:psi_SS_t_ge_b}
	\begin{split}
		\int_a^b& \alpha^{\max(s,t)+s+t}\drm s 
		=
		\int_a^b \alpha^{2t+s}\drm s 
		%\\&= 
		%\alpha^{2t}
		%\int_a^b \alpha^{s}\drm s 
		=		\frac{\alpha^{b}-\alpha^{a}}{\ln(\alpha)}\alpha^{2t}.
	\end{split}
\end{equation}	
Similarly, if $t\in(a,b)$, one has
\begin{equation}
	\label{eqn:psi_SS_t_ge_a_le_b}
	\begin{split}
		\int_a^b \alpha^{\max(s,t)+s+t}\drm s 
		&=
		\int_a^t\alpha^{2t+s}\drm s
		+ 
		\int_t^b\alpha^{2s+t}\drm s 
		\\&
		=
		\frac{\alpha^{t}-\alpha^{a}}{\ln(\alpha)}\alpha^{2t}
		+ 
		\frac{\alpha^{2b}-\alpha^{2t}}{2\ln(\alpha)}\alpha^{t}.
	\end{split}
\end{equation}	
Based on the definition of $\eta$,  \eqref{eqn:psi_SS_t_le_a}, 
\eqref{eqn:psi_SS_t_ge_b}, and \eqref{eqn:psi_SS_t_ge_a_le_b},
one can see that
\begin{equation}
	\begin{split}
	\int_a^b &\alpha^{\max(s,t)+s+t}\drm s \\
	&=
	\frac{\alpha^{\eta(t,a,b)}-\alpha^{a}}{\ln(\alpha)}
	\alpha^{2t}
	+
	\frac{\alpha^{2b}-\alpha^{2\eta(t,a,b)}}{2\ln(\alpha)}
	\alpha^{t}.
	\end{split}
\end{equation}	
Accordingly, by replacing $\alpha$ with $\alpha^3$ is \eqref{eqn:psi_TC} and due to the definition of $\kernelSS$, 
one can obtain \eqref{eqn:psi_SS}.
\qed
%================================
%================================
\subsection{Proof of Theorem~\ref{thm:phi_0t_CT}} \label{sec:appendix_proof_phi_0t_CT}
From the definition of $\varphi_0$ and function $\psi$, we have
\begin{equation*}
	\varphi_{0,t}
	=
	\int_{0}^{\infty}\!\kernel(s,t)\drm s 
	= 
	\lim_{b\to\infty}\int_{0}^{b}\!\kernel(s,t)\drm s 
	=	
	\lim_{b\to\infty}\psi(t,0,b).	
\end{equation*}
Subsequently, one can see that the theorem holds according to Theorem \ref{thm:psi_TC_DC_SS}.	
\qed
%================================
%================================
\subsection{Proof of Theorem~\ref{thm:inner_phi_0i_ij_CT}} \label{sec:appendix_proof_inner_phi_0i_ij_CT}
Due to Theorem \ref{thm:dc_bounded}, Lemma \ref{thm:Lu_bounded} and \eqref{eqn:u_pw}, one has
%\begin{equation}
%\label{eqn:inner_phi_0_phiu_tau}
%\begin{split}
%	&\inner{\varphi_0}{\phiu{\tau}}
%	=
%	\int_{0}^{\infty}\varphi_{0,s}
%	\sum_{i=0}^{\nS-1} \xi_{i+1} \one_{[s_i,s_{i+1})}(\tau-s)\drm s 
%	\\&=
%	\int_{0}^{\infty}\int_{0}^{\infty}
%	\kernel(s,t)
%	\sum_{i=0}^{\nS-1} \xi_{i+1} \one_{[s_i,s_{i+1})}(\tau-s)\drm t  \drm s
%	\\&=
%	\sum_{i=0}^{\nS-1} 
%	\xi_{i+1}
%	\int_{\bar{s}_{i+1}(\tau)}^{\bar{s}_{i}(\tau)}
%	\int_{0}^{\infty}
%	\kernel(s,t)
%	\one_{[s_j,s_{j+1})}(\tau-s)\drm t \ \! \drm s
%	\\&=
%	\sum_{i=0}^{\nS-1} 
%	\xi_{i+1}
%	\big(\barnu(\bar{s}_{i}(\tau)) 
%	-
%	\barnu(\bar{s}_{i+1}(\tau)\big)
%	= 
%	\sum_{i=0}^{\nS-1}
%	\xi_{i+1} \bar{\kappa}_{i}(\tau).
%\end{split}	
%\end{equation}
\begin{equation}
	\label{eqn:inner_phi_0_phiu_tau_1}
	\begin{split}
		&\inner{\varphi_0}{\phiu{\tau}}_{\Hk}
		=
		\int_{0}^{\infty}\varphi_{0,s}
		\sum_{i=0}^{\nS-1} \xi_{i+1} \one_{[s_i,s_{i+1})}(\tau-s)\drm s 
		\\&=
		\int_{0}^{\infty}\int_{0}^{\infty}
		\kernel(s,t)
		\sum_{i=0}^{\nS-1} \xi_{i+1} \one_{[s_i,s_{i+1})}(\tau-s)\drm t  \drm s.
	\end{split}	
\end{equation}
Accordingly, from the definition of function $\barnu$ and functions $\bar{\kappa}_i$, $i=0,\ldots,\nS-1$, we have
\begin{equation}
	\label{eqn:inner_phi_0_phiu_tau_2}
	\begin{split}
		&\!\!\!
		\inner{\varphi_0}{\phiu{\tau}}_{\Hk}
		=
		\sum_{i=0}^{\nS-1} 
		\xi_{i+1}
		\int_{\bar{s}_{i+1}(\tau)}^{\bar{s}_{i}(\tau)}
		\int_{0}^{\infty}
		\kernel(s,t)\drm t \ \! \drm s
		\\&=
		\sum_{i=0}^{\nS-1} 
		\xi_{i+1}
		\big(\barnu(\bar{s}_{i}(\tau)) 
		-
		\barnu(\bar{s}_{i+1}(\tau)\big)
		\\&= 
		\sum_{i=0}^{\nS-1}
		\xi_{i+1} \bar{\kappa}_{i}(\tau).
	\end{split}	
\end{equation}
From Lemma \ref{thm:Lu_bounded} and due to \eqref{eqn:phiu_tau_t} and \eqref{eqn:u_pw}, we know that
\begin{equation}
	\label{eqn:inner_phiu_tau_12}
	\begin{split}
		&\inner{\phiu{\tau_1}}{\phiu{\tau_2}}_{\Hk}
		=
		\int_{0}^{\infty}\phiu{\tau_2,s}
		\sum_{i=0}^{\nS-1} \xi_{i+1} \one_{[s_i,s_{i+1})}(\tau_1-s)\drm s 
		\\&=
		\int_{0}^{\infty}\int_{0}^{\infty}\bigg(
		\kernel(s,t)
		\sum_{j=0}^{\nS-1} \xi_{j+1} \one_{[s_j,s_{j+1})}(\tau_2-t)
		\\&\qquad\qquad\qquad
		\sum_{i=0}^{\nS-1} \xi_{i+1} \one_{[s_i,s_{i+1})}(\tau_1-s)\bigg)
		\drm t \ \! \drm s
		\\&=
		\sum_{i=0}^{\nS-1}\sum_{j=0}^{\nS-1} 
		\xi_{i+1}\xi_{j+1}
		\int_{0}^{\infty}\int_{0}^{\infty}\bigg(
		\kernel(s,t)
		\\&\qquad\qquad\qquad
		\one_{[s_j,s_{j+1})}(\tau_2-t)
		\one_{[s_i,s_{i+1})}(\tau_1-s)\bigg)
		\drm t \ \! \drm s.
	\end{split}	\!\!\!\!\!\!\!\!\!\!\!
\end{equation}
From \eqref{eqn:nu_def_general}, it follows that
\begin{equation}\label{eqn:intintk11}
	\begin{split}
		&\!\!\!\int_{0}^{\infty}\!\!\!\int_{0}^{\infty}\!\!  
		\kernel(s,t)
		\one_{[s_j,s_{j+1})}(\tau_2-t)
		\one_{[s_i,s_{i+1})}(\tau_1-s)\drm t \ \! \drm s 	
		\\&=
		\int_{\bar{s}_{i+1}(\tau_1)}^{\bar{s}_{i}(\tau_1)}\!
		\int_{\bar{s}_{j+1}(\tau_2)}^{\bar{s}_{j}(\tau_2)}\! 
		\kernel(s,t)
		\drm t \ \! \drm s
		\\&=
		\nu\big(\bar{s}_{i}(\tau_1),\bar{s}_{j}(\tau_2)\big) -
		\nu\big(\bar{s}_{i+1}(\tau_1),\bar{s}_{j}(\tau_2)\big)
		\\&\qquad
		-\nu\big(\bar{s}_{i}(\tau_1),\bar{s}_{j+1}(\tau_2)\big) +
		\nu\big(\bar{s}_{i+1}(\tau_1),\bar{s}_{j+1}(\tau_2)\big).
	\end{split}
\end{equation}
Thus, the claim is implied from \eqref{eqn:inner_phiu_tau_12} and \eqref{eqn:intintk11}. 
\qed
%================================
%================================
\subsection{Proof of Theorem~\ref{thm:nu_TC_DC_SS}} \label{sec:appendix_proof_nu_TC_DC_SS}
Since $\barnu(x)=\lim_{y\to\infty}\nu(x,y)$, it is enough to obtain $\nu(x,y)$. Throughout the proof, without loss of generality we assume $x\le y$. Due to the definition of $\nu$ in \eqref{eqn:nu_def_general}, for TC kernel, we have
\begin{equation}\label{eqn:nu_TC_pf_01}
	\begin{split}
		\nuTC(x,y)
		&= 
		\int_0^x\int_0^s\alpha^s\drm t\drm s 
		+
		\int_0^x\int_s^y\alpha^t\drm t\drm s 
		\\&= 
		\int_0^x s\alpha^s\drm s 
		+
		\int_0^x\frac{\alpha^y-\alpha^s}{\ln(\alpha)}\drm s
		\\&=
		\frac{x\alpha^x\ln(\alpha)+1-\alpha^x}{\ln(\alpha)^2}
		+ 
		\frac{x\alpha^y}{\ln(\alpha)}
		- 
		\frac{\alpha^x-1}{\ln(\alpha)^2}. 
	\end{split}
\end{equation}	
Reordering the terms and replacing $x$ and $y$ respectively with $\min(x,y)$ and $\max(x,y)$, we obtain $\nuTC(x,y)$.
Letting $y$ go to $\infty$ in  $\nuTC(x,y)$, we obtain  $\barnuTC(x)$. Similarly, from the definition of $\nu$, we have
\begin{equation}\label{eqn:nu_DC_pf_01}
	\begin{split}
		&\!\!\nuDC(x,y)
		\\&= 
		\int_0^x\int_0^s\alpha^s\gamma^{s-t}\drm t\drm s 
		+
		\int_0^x\int_s^y\alpha^t\gamma^{t-s}\drm t\drm s 
		\\&= 
		\int_0^x\alpha^s\gamma^{s}
		\int_0^s\gamma^{-t}\drm t\drm s 
		+
		\int_0^x \gamma^{-s}
		\int_s^y \alpha^t\gamma^{t}\drm t\drm s 
		\\&= 
		\int_0^x (\alpha\gamma)^{s}
		\frac{1-\gamma^{-s}}{\ln(\gamma)}
		\drm s 
		+
		\int_0^x \gamma^{-s}
		\frac{(\alpha\gamma)^y-(\alpha\gamma)^{s}}{\ln(\alpha\gamma)}
		\drm s 
		\\&= 
		\frac{(\alpha\gamma)^x-1}
		{\ln(\gamma)\ln(\alpha\gamma)}
		-
		\frac{\alpha^x-1}
		{\ln(\gamma)\ln(\alpha)}
		-
		\frac{(\alpha\gamma)^y(\gamma^{-x}-1)}
		{\ln(\gamma)\ln(\alpha\gamma)}
		\\&\qquad -
		\frac{\alpha^{x}-1}
		{\ln(\alpha)\ln(\alpha\gamma)}.
	\end{split}
\end{equation}	
Similar to the previous part, by replacing $x$ and $y$ respectively with $\min(x,y)$ and $\max(x,y)$, and reordering the terms, we obtain $\nuDC(x,y)$.
Moreover, one can see that $\barnuTC(x)$ results when $y$ goes to $\infty$.  
Replacing $\alpha$ with $\alpha^3$ in $\nuTC(x,y)$, we obtain
\begin{equation}\label{eqn:nu_SS_pf_01}
	\begin{split}\!\!\!\!\!\!\!\!\!\!
		\int_0^x& \!\!\!\! \int_0^y  \!\!
		\alpha^{3\max(s,t)}
		\drm t \drm s 
		\\&
		= 	\frac{\min(x,y)(\alpha^{3x}+\alpha^{3y})}{3\ln(\alpha)}
		+	
		\frac{2(1-\alpha^{3\min(x,y)})}
		{9\ln(\alpha)^2}.
	\end{split}
\end{equation}
On the other hand, we have
\begin{equation}\label{eqn:nu_SS_pf_02}
	\begin{split}
		\int_0^x\int_0^y &  
		\alpha^{\max(s,t)+s+t}
		\drm t \drm s 
		\\&= 
		\int_0^x\int_0^s 
		\alpha^{2s+t}
		\drm t \drm s 
		+
		\int_0^x \int_s^y 
		\alpha^{s+2t}
		\drm t \drm s 
		\\&
		= 
		\int_0^x \alpha^{2s}
		\frac{\alpha^s-1}{\ln(\alpha)}
		+ 
		\alpha^{s}
		\frac{\alpha^{2y}-\alpha^{2s}}{2\ln(\alpha)}
		\drm s 
		\\&
		= 
		\frac{\alpha^{3x}-1}{6\ln(\alpha)^2}
		-
		\frac{\alpha^{2x}-1}{2\ln(\alpha)^2}
		+ 
		\frac{\alpha^{2y}(\alpha^{x}-1)}{2\ln(\alpha)^2}. 
	\end{split}
\end{equation}	
Using the definition of $\nu$, \eqref{eqn:nu_SS_pf_01}, \eqref{eqn:nu_SS_pf_02},
replacing $x$ and $y$ respectively with $\min(x,y)$ and $\max(x,y)$, and reordering the terms, we obtain $\nuSS(x,y)$.
Also, letting $y\to\infty$, we get $\barnuSS(x)$. 
\qed
%================================
%================================
\subsection{Proof of Theorem~\ref{thm:norm_phi_0_CT}} \label{sec:appendix_proof_norm_phi_0_CT}
	From \eqref{eqn:norm_phi_0} and the definition of function $\nu$, we have
\begin{equation*}
	\begin{split}
		\|\varphi_{0}\|_{\Hk}^2
		&=
		\int_{0}^{\infty}\!\int_{0}^{\infty}\!\kernel(s,t)\drm s \drm t 
		\\&= 
		\lim_{x,y\to\infty}
		\int_{0}^{x}\!\int_{0}^{y}\!\kernel(s,t)\drm s \drm t =	
		\lim_{x,y\to\infty}\nu(x,y).	
	\end{split}
\end{equation*}
Accordingly, the proof of the theorem follows directly from Theorem \ref{thm:nu_TC_DC_SS}.	
\qed
%%================================
%%================================
\subsection{Representer Theorem} \label{sec:appendix_rep_thm}
There are several variations of the representer theorem in the literature, which vary primarily in terms of generality. The following version is borrowed from \cite{dinuzzo2012representer}.  
\begin{theorem}[\cite{dinuzzo2012representer}]\label{thm:rep_thm}
	Let $\Hilbert$ be a Hilbert space endowed with inner product $\inner{\cdot}{\cdot}_{\Hilbert}$ and $r:\Rbb_+\to \Rbb$ be an increasing function. Consider the following optimization problem 
	\begin{equation}
		\label{eqn:opt_representer_thm}
		\min_{\vc{w}\in\Hilbert} \    
		e(\inner{\vc{w}_1}{\vc{w}}_{\Hilbert}, \ldots,
		\inner{\vc{w}_m}{\vc{w}}_{\Hilbert})+ r\big(\big\|\vc{w}\big\|_{\Hilbert}\big),
	\end{equation}
	where  $\vc{w}_1,\ldots,\vc{w}_m$ are vectors in $\Hilbert$ and $e:\Rbb^m\to \Rbb\cup\{+\infty\}$ is a given function. 
	Then, \eqref{eqn:opt_representer_thm} has a solution in $\Wscr:=\linspan\{\vc{w}_i\}_{i=1}^m$, when it admits a solution.
\end{theorem}
It is worth noting that Theorem~\ref{thm:rep_thm} is a generalized form of the representer theorem discussed in 
\cite[Theorem 1.3.1]{wahba1990spline}.
%%================================
%%================================
%\subsection{Proof of Theorem/Lemma~\ref{thm/lem:XXX}} \label{sec:appendix_proof_XXX}

%_____________________________________
\bibliographystyle{IEEEtran}
%{amsalpha} %siam %amsalpha %IEEEtranS %plain %amsplain %abbrv
\bibliography{mainbib}
%_____________________________________

\vfill
\end{document}